\title{Extended State Space for Describing Renormalized Fock Spaces in QFT}
\author{
Sascha Lill\footnote{Universit\`a degli Studi di Milano, Dipartimento di Matematica, Via Cesare Saldini 50, 20133 Milano, Italy}\ \footnote{ORCID: \href{https://orcid.org/0000-0002-9474-9914}{0000-0002-9474-9914} E-Mail: sascha.lill@unimi.it}~
} 
\date{\today}
\newtheorem{lemma}{Lemma}
\numberwithin{lemma}{section}
\newtheorem{corollary}{Corollary}
\numberwithin{corollary}{section}
\newtheorem{theorem}{Theorem}
\numberwithin{theorem}{section}
\newtheorem{proposition}{Proposition}
\numberwithin{proposition}{section}
\numberwithin{remark}{section}
\theoremstyle{definition}\newtheorem{definition}{Definition}
\numberwithin{definition}{section}
\newcommand{\bk}{\boldsymbol{k}}
\newcommand{\bp}{\boldsymbol{p}}
\newcommand{\bx}{\boldsymbol{x}}
\newcommand{\by}{\boldsymbol{y}}
\newcommand{\bK}{\boldsymbol{K}}
\newcommand{\bP}{\boldsymbol{P}}
\newcommand{\bX}{\boldsymbol{X}}
\newcommand{\bY}{\boldsymbol{Y}}
\newcommand{\bLambda}{\boldsymbol{\Lambda}}
\newcommand{\cC}{\mathcal{C}}
\newcommand{\cD}{\mathcal{D}}
\newcommand{\cF}{\mathcal{F}}
\newcommand{\cI}{\mathcal{I}}
\newcommand{\cK}{\mathcal{K}}
\newcommand{\cN}{\mathcal{N}}
\newcommand{\cQ}{\mathcal{Q}}
\newcommand{\cS}{\mathcal{S}}
\newcommand{\cW}{\mathcal{W}}
\newcommand{\cX}{\mathcal{X}}
\newcommand{\sF}{\mathscr{F}}
\newcommand{\CCC}{\mathbb{C}}
\newcommand{\NNN}{\mathbb{N}}
\newcommand{\RRR}{\mathbb{R}}
\newcommand{\fc}{\mathfrak{c}}
\newcommand{\fh}{\mathfrak{h}}
\newcommand{\fr}{\mathfrak{r}}
\newcommand{\fR}{\mathfrak{R}}
\renewcommand{\Re}{\mathrm{Re}}
\renewcommand{\Im}{\mathrm{Im}}
\newcommand{\FB}{\overline{\sF}}
\newcommand{\FR}{\sF_{\ren}}
\newcommand{\WB}{\overline{\cW}}
\newcommand{\ad}{\mathrm{ad}}
\newcommand{\col}{\mathrm{col}}
\newcommand{\dom}{\mathrm{dom}}
\newcommand{\fin}{\mathrm{fin}}
\newcommand{\IBC}{\mathrm{IBC}}
\newcommand{\id}{I}
\newcommand{\Ker}{\mathrm{Ker}}
\newcommand{\loc}{\mathrm{loc}}
\newcommand{\Sdot}{\dot{\mathcal{S}}^\infty}
\newcommand{\Qdot}{\dot{\mathcal{Q}}}
\newcommand{\res}{\mathrm{res}}
\newcommand{\sgn}{\mathrm{sgn}}
\renewcommand{\span}{\mathrm{span}}
\newcommand{\supp}{\mathrm{supp}}
\newcommand{\eRen}{\mathrm{eRen}}
\newcommand{\ex}{\mathrm{ex}}
\newcommand{\Pol}{\mathrm{Pol}}
\newcommand{\ren}{\mathrm{ren}}
\newcommand{\Ren}{\mathrm{Ren}}
\newcommand{\renI}{\mathrm{renI}}
\newcommand{\F}{\mathrm{F}}
\newcommand{\Fex}{\mathrm{Fex}}
\newcommand{\Clas}{\mathrm{Clas}}
\DeclareMathOperator*{\esssup}{ess\,sup}
\newcounter{remarks}
\begin{document}

\maketitle
\begin{abstract}

We revisit the non-perturbative renormalization of a class of simple polaron models with resting fermions. The considered dispersion relations and form factors are allowed to be highly singular, such that infinite self-energies and wave function renormalizations may occur and the diagonalizing dressing transformations might not be implementable on Fock space. Instead of taking cutoffs, we rigorously interpret the self-energies and wave function renormalizations as elements of suitable vector spaces, as well as a field extension of the complex numbers. Moreover, we define two extended state vector spaces (ESSs) over this field extension, which contain a dense subspace of Fock space, but also incorporate non-square-integrable wave functions. Elements of these ESSs can be seen as states of virtual particles, described in a mathematically rigorous way.\\
The Hamiltonian without cutoffs, formally infinite counterterms, and the dressing transformation can then be defined as linear operators between certain subspaces of the two Fock space extensions. This way, we obtain a renormalized Hamiltonian which can be realized as a densely defined self-adjoint operator on Fock space.\\

\bigskip

\noindent Key words: Non-perturbative QFT, renormalization, divergent integrals, van Hove model, Fock space, Hamiltonian Formalism, dressing transformation, interior--boundary conditions (IBC)\\

\noindent  Mathematics Subject Classification 2020: 81T16; 81R10; 81Q10; 12F05; 81R30
\end{abstract}

\newpage
\tableofcontents

\section{Introduction}	
\label{sec:intro}

In this work, we revisit the renormalization of a certain class of formal quantum field theory\footnote{Here, the term ``QFT'' is to be understood as any quantum mechanical model whose description involves creation and annihilation operators. There is also a narrower sense in which a QFT is a model satisfying the Wightman axioms \cite[Ch.~3-1]{StreaterWightman1964} or the Haag--Kastler Axioms \cite{HaagKastler1964}. Our class of models in kept simple, as it serves for demonstrative purposes, and does not satisfy either of both axiom sets.} (QFT) Hamiltonians, which is done by employing a new mathematical framework that permits rigorous direct manipulations of formally infinite quantities. These quantities include divergent self-energies and wave function renormalizations, which are ubiquitous in QFT. The renormalization process is non-perturbative and does not involve any cutoffs.\\

The considered QFT Hamiltonians read as
\begin{equation}
	H = H_{0, y} + A^\dagger(v) + A(v) - E_\infty \;,
\label{eq:formalhamiltonian2}
\end{equation}
which means the following: We consider a species of $ M \in \NNN_0 $ fermions (associated with an index $ x $) and a species of $ N \in \NNN_0 $ bosons (associated with an index $ y $). The operators $ A^\dagger(v), A(v) $ make each fermion create/annihilate a boson with some form factor $ v $ (in momentum space), $ H_{0, y} $ describes the free evolution of the bosons and $ E_\infty $ is an \textbf{infinite self-energy counterterm}.\\
The Hamiltonian \eqref{eq:formalhamiltonian2} thus corresponds to a direct product of many van Hove Hamiltonians \cite{FewsterRejzner2020, Derezinski2003}, for which non-perturbative renormalization procedures are well-known to exist. But in contrast to preceding works, \eqref{eq:formalhamiltonian2} will not just be a formal expression, but a rigorous relation between linear operators that map $ \FB \to \FB_{\ex} $, where $ \FB \subset \FB_{\ex} $ are extensions of a dense subspace of Fock space. In particular, the multiplication by the (possibly infinite) self-energy becomes a well-defined operator $ E_\infty: \FB \to \FB_{\ex} $ .\\
In Section \ref{sec:furtherdress}, we will also encounter more general Hamiltonians of the form
\begin{equation}
	H_0 + A^\dagger(v) + A(v) - E_\infty \;,
\label{eq:formalhamiltonian3}
\end{equation}
with $ H_0 = H_{0, x} + H_{0, y} $ including both a fermionic and a bosonic dispersion relation.\\
Apart from the flat fermionic dispersion relation, our class of Hamiltonians \eqref{eq:formalhamiltonian2} allows to include the same dispersion relations and form factors as are used in many non-relativistic QFT models, including the Nelson model \cite{Nelson1964, Froehlich1973, Froehlich1974, U02}, the Fröhlich polaron \cite{U01}, as well as relativistic polarons \cite{Gross1973, Sloan1974}.\\

Our novel framework includes several mathematical tools that allow for rigorous manipulations with infinite quantities in QFT. These tools include:
\begin{itemize}
\item A \textbf{vector space of divergent integrals} $ \fr \in \Ren_1 $.
\item A \textbf{field of (possibly infinite) wave function renormalizations} $ \eRen $, containing exponentials $ e^\fr $ and fractions of linear combinations thereof. In particular, $ \eRen $ is a field extension of $ \CCC $.
\item An \textbf{extended state space (ESS)} $ \FB $, which is an $ \eRen $-vector space over smooth, complex-valued functions on configuration space, which are not necessarily $ L^2 $-integrable. $ \FB $ extends a dense subspace of the Fock space, $ \sF $.
\item A \textbf{second ESS} $ \FB_{\ex} $, which allows for multiplying $ \Psi \in \FB $ by elements $ \fr \in \Ren_1 $ (and not only by elements of $ \eRen $).
\end{itemize}

Neither the space $ \FB $ nor $ \FB_{\ex} $ carries a scalar product or even a topology. Instead, we assign physical meaning to subspaces of them by a \textbf{dressing transformation} $ W(s) $, which is a linear operator $ W(s): \FB_{\ex} \supset \cD_W \to \FB_{\ex} $ in form of a Gross transformation. The form factor $ s(\bk) $ does only depend on the boson momentum $ \bk $ and not on the fermion momentum $ \bp $, so one can fiber-decompose $ W(s) $ into several Weyl transformations, one for each fermion configuration $ \bX \in \RRR^{Md} $. The renormalized Hamiltonian $ \widetilde{H} $ will be defined without cutoffs by
\begin{equation*}
	W(s) \widetilde{H} = H W(s) \;,
\end{equation*}
i.e., we require that the following diagram commutes:

\begin{center}
	\begin{tikzpicture}

\node[anchor = east] at (0,0)  {$ \sF \supset \cD_\sF \supset \widetilde{\cD}_\sF $};
\node[anchor = east] at (0.4,1.5) {$ \FB_{\ex} \supset W(s)[\cD_\sF] \supset W(s)[\widetilde{\cD}_\sF] $};
\draw[thick, ->] (-0.3,0.3) --node[anchor = west] {$W(s)$} ++(0,0.9);
\draw[thick, ->] (0.2,0) -- node[anchor = south] {$\widetilde{H}$}  (2.8,0);

\node[anchor = west] at (3,0) {$ \cD_\sF \subset \sF $};
\node[anchor = west] at (2.6,1.5) {$ W(s)[\cD_\sF] \subset \FB_{\ex} $};
\draw[thick,->] (3.3,0.3) --node[anchor = west] {$W(s)$} ++(0,0.9);
\draw[thick, ->] (0.4,1.5) -- node[anchor = south] {$H$}  (2.6,1.5);

\end{tikzpicture}
\end{center}

The large domain $ \cD_\sF $, with the fermionic wave function being a Schwartz function, is defined in \eqref{eq:cDsF2}. It contains the small domain $ \widetilde{\cD}_\sF $ defined in \eqref{eq:cDsFtilde} via \eqref{eq:cCW0x}, with fermionic wave function in $ C_c^\infty $, and its support avoiding the set of collision configurations. It is necessary to introduce two domains, since $ \widetilde{H} $ does neither map $ \cD_\sF $ nor $ \widetilde{\cD}_\sF $ into itself. Nevertheless, $ \widetilde{H}: \widetilde{\cD}_\sF \to \cD_\sF $ can be defined. By Lemmas \ref{lem:cDsFdense} and \ref{lem:denselydefined}, both $ \cD_\sF $ and $ \widetilde{\cD}_\sF $ are dense in $ \sF $.\\

Our \textbf{main result} is Theorem \ref{thm:pullback}, which states that the renormalized Hamiltonian $ \widetilde{H} $ is indeed defined as a linear operator $ \widetilde{H}: \sF \supset \cD_{WS} \to \FB_{\ex} $ (with definition of the domain $ \cD_{WS} $ in \eqref{eq:cDWS}). By means of Lemma \ref{lem:denselydefined}, under certain conditions, this $ \widetilde{H} $ can indeed be interpreted as a Fock space operator $ \widetilde{H}: \widetilde{\cD}_\sF \to \cD_\sF $, which allows for self-adjoint extensions by Corollary \ref{cor:selfadjointext}. More precisely, we have
\begin{equation}
	\widetilde{H} = H_{0, y} + V \;,
\label{eq:HtildeESS}
\end{equation}
with $ V $ being a pair potential interaction between fermions.\\

The result \eqref{eq:HtildeESS} after a dressing is not too surprising, as \eqref{eq:formalhamiltonian2} just corresponds to a direct integral of van Hove Hamiltonians. It is well-known \cite{FewsterRejzner2020} that van Hove Hamiltonians can be diagonalized by an algebraic Weyl transformation. That means, after replacing creation and annihilation operators like $ a(v) \mapsto a(v) + \langle v, s \rangle $ and $ a^\dagger(v) \mapsto a^\dagger(v) + \langle s, v \rangle $, as well as dropping an infinite self-energy constant, one arrives at a non-interacting Hamiltonian, which equals $ H_{0, y} + V $ restricted to a fiber.\\
The main novelty in this paper is that we are able to define $ W(s), s \notin L^2 $ and certain products thereof, creation operators $ A^\dagger(v), v \notin L^2 $ and even divergent integrals $ \fr \in \Ren_1 $, as linear operators on suitable domains. A similar ESS framework was recently successfully applied by the author to Bogoliubov transformations in place of $ W(s) $ \cite{BBS, BBS2}. The rigorous treatment of divergent integrals may become useful also in a purely algebraic approach, which does not refer to any Fock space extensions, but nevertheless produces divergent integrals when evaluating commutation relations.\\

Let us add some remarks about the dressing operator $ W(s) $: Gross transformations are usually of the form $ W(s) = e^{A^\dagger(s) - A(s)} $, where $ A^\dagger(s) = \sum_{j = 1}^M A_j^\dagger(s) $ describes a boson creation induced by all fermions $ j \in \{1, \ldots, M\} $. We will choose a slightly different dressing operator, which formally reads
\begin{equation}
	W(s) = W_M(s) \ldots W_1(s) \;, \qquad W_j(s) = e^{A_j^\dagger(s) - A_j(s)} \;.
\end{equation}
The reason is that, when applying the Baker--Campbell--Hausdorff formula (as in \eqref{eq:WjWjprime}), formal calculations with $ e^{A^\dagger(s) - A(s)} $ would produce exponential operators of the kind $ e^{V_{jj'}} $, with $ V_{jj'} $ being a potential interaction between fermions $ j $ and $ j' $. Those expressions $ e^{V_{jj'}} $ cannot be defined as operators on the ESS $ \FB_{\ex} $, so we need to avoid their occurrence.\\
In general, we will define the extended dressing operators $ W_j(s) $ such that their action for $ s \in \fh $ slightly differs from that of the usual Fock space dressing operators $ W_{\sF, j}(s): \sF \to \sF $. That is, we drop any terms of the kind $ e^{V_{jj'}} $ in our definition. This ad-hoc modification is justified by the fact that, in a sufficiently regular case, the $ V_{jj'} $-operators commute with $ W(s) $ (Lemma \ref{lem:Xcommute}). So even if we could define them as operators, they would just act as an independent factor that can be pulled to the left. So dropping all $ e^{V_{jj'}} $-terms can essentially be seen as a simplification of the bookkeeping.\\
Nevertheless, it would be interesting for future works to define $ e^{V_{jj'}} $ as an operator, mapping to some ESS $ \FB_{\ex}' $, which is constructed differently than $ \FB_{\ex} $. The current ESS $ \FB_{\ex} $ also does not allow for defining general products of annihilation operators $ A_{j_1}(v_1) \ldots A_{j_n}(v_n) $ with $ n \ge 2 $. It is a further interesting question, how such products can be defined in the future by an alternative construction of some $ \FB_{\ex}' $.\\
The ESS construction is currently at an early stage of development. We expect that the methods developed here may facilitate the renormalization of more involved formal Hamiltonians.\\

Although the spaces $ \FB, \FB_{\ex} $ do not have a topological structure, we may define a \textbf{renormalized scalar product} on $ W(s)[\cD_\sF] $ via:
\begin{equation}
	\langle W(s) \Psi, W(s) \Phi \rangle_{\ren} := \langle \Psi, \Phi \rangle \qquad \forall \Psi, \Phi \in \cD_\sF \;.
\end{equation}
The completion of $ W(s)[\cD_\sF] $ with respect to $ \langle \cdot, \cdot \rangle_{\ren} $ defines a Hilbert space $ \FR $, the \textbf{renormalized Fock space}. The map $ W(s) $ then uniquely extends to an isometric isomorphism between $ \sF $ and $ \FR $.\\

We remark that our result \eqref{eq:HtildeESS} is actually just the lowest-order approximation in a perturbation expansion within the weak-coupling regime, and completely decouples the fermions from the bosonic radiation field. The reason is that we both consider resting fermions (by excluding $ H_{0, x} $ from $ H $) and restrict to form factors $ v(\bk) $, that only depend on the momentum $ \bk $ of the emitted boson, and not on the momentum $ \bp $ of the fermion emitting it. It is physically expected and confirmed for the Nelson model with UV-cutoff \cite{Teufel2002, TenutaTeufel2008}, that $ W_\Lambda^* H_\Lambda W_\Lambda $ contains interactions between fermions and the radiation field. When changing to $ v(\bp,\bk) $, one may formally use a so-called ``Lie--Schwinger series' \cite{Froehlich1977}:
\begin{equation*}
	\tilde{H} = e^A B a^{-A} = \sum_{n = 0}^\infty \frac{\ad^n(A)B}{n!} \;,
\end{equation*}
where $ A := - A^\dagger(s) + A(s), \; B := H $ and with the $ n $-fold commutator
\begin{equation*}
	\ad^n(A)B := [A,[A, \ldots ,[A,B] \ldots]] \;.
\end{equation*}
However, establishing well-definedness of $ \ad^n(A)B $ for $ H + H_{0, x} $ with $ v $ depending on $ \bp $ is a rather involved task, so we postpone it to future investigations.\\

Another interesting objective for future works would be to introduce a mass renormalization term $ \delta m $. In constructive QFT (CQFT), where the Hamiltonian is of a form different from \eqref{eq:formalhamiltonian2}, mass renormalization terms can also be found, e.g., within the renormalization of $ \Phi^4 $-theory in $ 2+1 $ dimensions \cite{Glimm1968}, the Yukawa model \cite{GlimmJaffeY2I, GlimmJaffeY2II, SchraderY2} or the Lee model \cite[Ch. III]{Hepp}. Other CQFT renormalization procedures work without a mass renormalization, such as those for $ \Phi^4 $- and $ P(\Phi) $-theory in $ 1+1 $ dimensions \cite{GlimmJaffeI, GlimmJaffeII, GlimmJaffeIII, GlimmJaffeP2, GlimmJaffeIV}, for related models with exponential interactions \cite{HoeghKrohnE2}, the Thirring model \cite{Thirring1958} or the Federbush model \cite{Ruijsenaars1982}. For a more extensive review on CQFT literature, see \cite{Summers2016, Dedushenko2022}\\

Let us also mention that there has recently been some considerable interest in another cutoff-free renormalization technique via so-called ``interior--boundary conditions'' (IBC) \cite{I00, I00b, I02, I03, I04, I06, I07, I10, I11, I09, I05}. In fact, IBC renormalization has been an important source of inspiration for the ESS construction: As we explain in Section \ref{sec:furtherdress}, IBC renormalization provides an explicit formula for the domain of the renormalized Hamiltonian \eqref{eq:IBCdomain}, which may formally contain vectors outside $ L^2 $, if the interaction becomes too singular. In fact, IBC renormalization is limited to models, where the ``interacting vacuum is Fock'', that is, one may use the same representation of creation- and annihilation operators for both the free and the interacting Hamiltonian. A famous theorem of Haag \cite{Haag1955}, \cite[Sect.~II.1]{Haag1996} states that this condition can never be fulfilled within a relativistic QFT satisfying the Haag--Kastler axioms. When passing over to relativistic dispersion relations and form factors, IBC renormalization therefore reaches a certain limit, which the ESS construction is designed to overcome. Even in cases where IBC renormalization is successful, ESSs allow for presenting the ``cancellation of infinities'' in a rigorous and particularly convenient way, as shown in Section \ref{sec:furtherdress}.

The rest of this paper is structured as follows: After specifying some mathematical notation in Section \ref{sec:Ham}, we conduct the ESS construction in Section \ref{sec:ext}, finally resulting in spaces $ \FB, \FB_{\ex} $.\\
	We then establish $ H_0, A^\dagger, A $ and $ E_\infty $ as linear operators on $ \FB $ or $ \FB_{\ex} $ in Section \ref{sec:opext}.\\
	In the following Section \ref{sec:Fren}, we construct the dressing transformation $ W(s) $. Further, we define an \textbf{extended Weyl algebra} $ \WB $, which includes linear combinations of the operators $ W_j(s) = e^{A_j^\dagger(s) - A_j(s)} $. $ \WB $ allows for a multiplication with infinite wave function renormalization factors $ \fc \in \eRen $, and the Weyl relations
\begin{equation}
\begin{aligned}
	W_j(s)^{-1} &= W_j(-s) \;,\\
	W_j(s_1) W_j(s_2) &= e^{\Im\langle s_1,s_2 \rangle} W_j(s_1+s_2) \;,\\
\end{aligned}
\end{equation}
hold with $ e^{\Im\langle s_1,s_2 \rangle} \in \eRen $. However, note that $ \WB $ is generated by $ W_j(s) $ for a fixed $ j $, instead of $ W(s) $ and the dressing operator $ W(s) $ is not contained in $ \WB $.\\
	In Section \ref{sec:pullback} we compute $ \widetilde{H} $ such that $ W(s) \widetilde{H} = H W(s) $. This section contains the main result, Theorem \ref{thm:pullback}.\\
	The proof that under certain conditions $ \widetilde{H} $ is a Fock space operator and allows for self-adjoint extensions follows in Section \ref{sec:selfadjoint}. Thus, $ \widetilde{H} $ generates well-defined quantum dynamics, although they are not necessarily unique.\\
	Finally, in Section \ref{sec:furtherdress} we briefly review, using ESSs, how the cancellation of divergent quantities within IBC renormalization works. Essentially, IBC renormalization uses a non-unitary dressing transformation $ W_{\IBC} = (1 + H_0^{-1} A^\dagger)^{-1} $, which we prove to be a bijective map $ \FB \to \FB $ under relatively mild assumptions (Proposition \ref{prop:IBC}). This includes cases, where $ W_{\IBC} $ maps out of the Fock space. We also consider the non-unitary dressing transformation $ T = e^{- \bLambda/2} e^{- H_0^{-1} A^\dagger} $ with $ e^{- \bLambda/2} $ being an infinite wave function renormalization, which is a simplified version of CQFT dressings in \cite{Friedrichs1965, Glimm1968, Hepp}. Here, we define $ T: \FB \to \FB $.\\

\section{The Mathematical Model}
\label{sec:Ham}

\subsection{Formal Hamiltonian}
\label{subsec:definition}

In this paper, we consider a class of non-perturbative QFT models. These involve two species of particles:
\begin{itemize}
\item There is one species of spinless fermionic particles ($ x $).
\item These fermions interact by exchange of spinless bosons ($ y $).
\end{itemize}

By $ M $ and $ N $, we denote the number of $ x $- and $ y $-particles, respectively. In the particle--position representation, a single particle is described by a Hilbert space vector
\begin{equation}
	\varphi \in \fh = L^2(\RRR^d,\CCC) \;,
\end{equation}
with some coordinates denoted by a boldface symbol $ \bx \in \RRR^d $. So the system has $ d $ space dimensions and $ d+1 $ spacetime dimensions. The configuration of the entire system is given by an element $ (\bX,\bY) $ of the configuration space $ \cQ $,
\begin{equation}
	(\bX,\bY)=(\bx_1,\ldots ,\bx_M,\by_1,\ldots ,\by_N) \in \cQ_x \times \cQ_y =: \cQ \;,
\end{equation}
where the $ x $-, and $ y $-configuration spaces $ \cQ_x, \cQ_y $ and its sectors $ \cQ_x^{(M)}, \cQ_y^{(N)}, \cQ^{(M,N)} $ are defined as
\begin{equation}
	\bigsqcup_{M=0}^\infty \RRR^{Md} =: \bigsqcup_{M=0}^\infty \cQ_x^{(M)} =: \cQ_x \;, \quad
	\bigsqcup_{N=0}^\infty \RRR^{Nd} =: \bigsqcup_{N=0}^\infty \cQ_y^{(N)} =: \cQ_y \;, \quad
	\cQ^{(M,N)} := \cQ_x^{(M)} \times \cQ_y^{(N)} \;.
\label{eq:cQsplit}
\end{equation}
The state of the system at time $ t $ is described by a Fock space vector
\begin{equation}
	\Psi \in L^2(\cQ, \CCC) \;.
\end{equation}
Note that the term ``state'' is also used to describe a complex-valued map $ \omega $ on the bounded linear operators on $ \sF $ (possibly a mixed state). We will only refer to state vectors when using the term ``state ''.

For states, we assume the function $ \Psi \in L^2(\cQ, \CCC) $ to be anti-symmetric under coordinate exchange for fermions $ x $ and symmetric under coordinate exchange for bosons $ y $. The corresponding symmetrization and antisymmetrization operators $ S_- , S_+ $ are defined on $ L^2(\cQ_x, \CCC) $ and $ L^2(\cQ_y, \CCC) $ as
\begin{equation}
\begin{aligned}
	(S_- \Psi)(\bx_1, \ldots, \bx_M) &= \frac{1}{M!} \sum_{\sigma \in S_M} \sgn(\sigma) \Psi(\bx_{\sigma(1)}, \ldots, \bx_{\sigma(M)}) \;\\
	(S_+ \Psi)(\by_1, \ldots, \by_N) &= \frac{1}{N!} \sum_{\sigma \in S_N} \Psi(\bx_{\sigma(1)}, \ldots, \bx_{\sigma(N)}) \;,\\
\end{aligned}
\label{eq:S+S-}
\end{equation} 

where $ S_N $ is the permutation group of $ \{1, \ldots, N\} $. The \textbf{fermionic/bosonic/total Fock space} is then
\begin{equation}
	\sF_x := S_-[L^2(\cQ_x, \CCC)] \;, \qquad
	\sF_y := S_+[L^2(\cQ_y, \CCC)] \;,	\qquad
	\sF := \sF_x \otimes \sF_y \;.
\end{equation}
Each of these spaces decays into sectors according to \eqref{eq:cQsplit}
\begin{equation}
	\sF_x = \bigoplus_{M = 0}^\infty \sF_x^{(M)} \;, \qquad
	\sF_y = \bigoplus_{N = 0}^\infty \sF_y^{(N)} \;, \qquad
	\sF = \bigoplus_{M, N = 0}^\infty \sF^{(M, N)} \;.
\label{eq:sectordecay}
\end{equation}

It is convenient to describe the action of the formal Hamiltonian $ H $ in the particle-momentum representation: For any $ \Psi \in \sF $ with particle--position representation $ \Psi(\bX,\bY) $, we define the Fourier transform $ \Psi(\bP,\bK) $ with momentum configuration
\begin{equation}
	(\bP,\bK) = (\bp_1, \ldots, \bp_M,\bk_1, \ldots, \bk_N) \in \RRR^{Md+Nd} \;,
\end{equation}
via
\begin{equation}
	\Psi(\bP,\bK) := (2 \pi)^{-\frac{Md + Nd}{2}} \int_{\RRR^N} \int_{\RRR^M} \Psi(\bX,\bY) \; e^{-i \bP \cdot \bX - i \bK \cdot \bY} \; d^{Md}\bX d^{Nd}\bY \;.
\end{equation}
In the following, the set of variables plugged into $ \Psi(\cdot,\cdot) $, i.e., $ (\bX,\bY) $ or $ (\bP,\bK) $ will specify, which representation is meant.\\

As a formal Hamiltonian, we consider the following expression with zero fermion dispersion relation:
\begin{equation}
	H = H_{0, y} + A^{\dagger}(v) + A(v) - E_\infty \;.
\label{eq:hamfix}
\end{equation}
In Section \ref{sec:furtherdress}, we will also consider similar Hamiltonians of this kind that feature a nonzero fermion dispersion relation, i.e., where $ H_{0, y} $ is replaced by $ H_0 = H_{0, x} + H_{0, y} $. The formal definitions of the relevant expressions read as follows:

\begin{itemize}
\item The kinetic term $ H_0 $ is characterized by two dispersion relations, i.e., by real-valued functions $ \theta(\bp), \omega(\bk) \in C^\infty(\RRR^d \setminus \{0\}) $ for fermions and bosons, respectively. In Theorem \ref{thm:pullback} we assume $ \theta \equiv 0 $, but it can be nonzero in Proposition \ref{prop:operatorextension} and in Section \ref{sec:furtherdress}. We decompose
\begin{equation}
\begin{aligned}
	H_0 = &H_{0, x} + H_{0, y} \;,\\
	(H_{0, x} \Psi)(\bP,\bK) = &\sum_{j=1}^M \theta(\bp_j) \Psi(\bP, \bK) \;,\\
	(H_{0, y} \Psi)(\bP,\bK) = &\sum_{\ell = 1}^N \omega(\bk_{\ell}) \Psi(\bP, \bK) \;.
\end{aligned}
\label{eq:H0}
\end{equation}
A shorthand notation is to use the symbols $ d \Gamma_x(\cdot) $ and $ d \Gamma_y(\cdot) $ for second quantization:
\begin{equation}
	H_{0, x} = d \Gamma_x(\theta) \;, \qquad H_{0, y} = d \Gamma_y(\omega) \;,
\end{equation}
with multiplication operators
\begin{equation}
\begin{aligned}
	&\theta: \fh \supseteq \dom(\theta) \to \fh \;, \qquad &&\omega: \fh \supseteq \dom(\omega) \to \fh \;,\\
	&\phi(\bp) \mapsto \theta(\bp) \phi(\bp) \;, && \phi(\bk) \mapsto \omega(\bk) \phi(\bk) \;.\\
\end{aligned}
\label{eq:thetaomegapseudo}
\end{equation}

\item The creation part $ A^{\dagger}(v) $ makes each fermion create a boson. It is specified by a form factor $ v $. In this paper, we will restrict to $ v \in C^\infty(\RRR^d \setminus \{0\}) $ and put some further assumptions on the scalings of $ v $ for $ |\bk| \to 0 $ (IR-regime) and $ |\bk| \to \infty $ (UV-regime). These assumptions are described in Section \ref{subsec:scaling}.\\

We may write $ A^{\dagger}(v) $ as a sum over operators $ A^{\dagger}_j(v), j \in \{1, \ldots, M\} $, which only make fermion $ j $ create a boson. The formal definition in particle--momentum representation reads
\begin{equation}
\begin{aligned}
	(A^{\dagger}(v) \Psi)(\bP,\bK) &= \left( \sum_{j=1}^M A_j^{\dagger} \Psi \right)(\bP,\bK)\\
	&= \sum_{j=1}^M \frac{1}{\sqrt{N}} \sum_{\ell = 1}^N v(\bk_{\ell}) \Psi(\bP + e_j \bk_{\ell},\bK \setminus \bk_{\ell}) \;,
\end{aligned}
\label{eq:adaggerp}
\end{equation}
where
\begin{itemize}[$ \bullet $]
\item $\bK \setminus \bk_{\ell} = (\bk_1, \ldots, \bk_{\ell - 1},\bk_{\ell + 1}, \ldots, \bk_N) $ denotes $ \bK $ without $ \bk_{\ell} $
\item $ \bP + e_j \bk_{\ell} = (\bp_1, \ldots, \bp_{j-1},\bp_j+\bk_{\ell},\bp_{j+1}, \ldots, \bp_M) $ is the shifted fermion momentum.
\end{itemize} 
Here, $ e_j $ is meant to denote the $ j $-th unit vector and, by slight abuse of notation, $ e_j \bk_{\ell} = (0, \ldots, \bk_{\ell}, \ldots, 0) \in \RRR^{Md} $ is used to denote the assignment of an additional momentum $ \bk_{\ell} $ to fermion $ j $, before boson $ \ell $ is emitted.\\
 
The corresponding definition in particle--position representation uses the Fourier inverse $ \check{v} $ of the form factor $ v $ and reads\footnote{We use the convention that $ A^\dagger_j $ is defined on all $ (M) $-sectors, where $ (A^\dagger_j(v) \Psi)^{(M, N)} = 0 $ whenever $ j > M $.}
\begin{equation}
	(A^{\dagger}(v) \Psi)(\bX,\bY) = \left( \sum_{j=1}^M A_j^{\dagger}(v) \Psi \right)(\bX,\bY) = \sum_{j=1}^M \frac{1}{\sqrt{N}} \sum_{\ell = 1}^N \check{v}(\by_{\ell}-\bx_j) \Psi(\bX,\bY \setminus \by_{\ell}) \;.
\label{eq:adaggerx}
\end{equation}
Without the cutoff, the form factor is $ v \notin L^2 $ in many physically relevant models. Its Fourier inverse $ \check{v} $ is therefore $ \notin L^2 $, as well, if it is even defined.\\

Note that there is a more general class of physically interesting models, which we do not treat here: The form factor may depend on the fermion momentum $ \bp $, i.e., it may read $ v(\bp,\bk) $ with $ v: \RRR^{2d} \to \CCC $, such as in \cite{Eckmann1970, BachFroehlichSigal1998, CohenTannoudjiPnA}.

\item The annihilation part $ A(v) $ in particle--momentum representation reads
\begin{equation}
\begin{aligned}
	(A(v) \Psi)(\bP,\bK) &= \left(\sum_{j=1}^M A_j(v) \Psi \right)(\bP,\bK)\\
	&= \sum_{j=1}^M \sqrt{N+1} \int v(\tilde{\bk})^* \Psi(\bP - e_j \tilde{\bk}, \bK, \tilde{\bk}) \; d\tilde{\bk} \;.
\end{aligned}
\label{eq:ap}
\end{equation}
In particle--position representation,
\begin{equation}
\begin{aligned}
	(A(v) \Psi)(\bX,\bY) &= \left(\sum_{j=1}^M A_j(v) \Psi \right)(\bX,\bY)\\
	&= \sum_{j=1}^M \sqrt{N+1} \int \check{v}(\tilde{\by} - \bx_j)^* \Psi(\bX, \bY, \tilde{\by}) \; d\tilde{\by} \;.
\end{aligned}
\label{eq:ax}
\end{equation}
Note that for $ v \in \fh $, both $ A^{\dagger}(v) $ and $ A(v) $ can be defined as operators on a dense domain in $ \sF $, where $ A^\dagger(v) $ is the adjoint of $ A(v) $.\\

\item The self-energy $ E_\infty $ is a formal multiplication operator of the form $ E_\infty = d \Gamma_x (E_1) $ with $ E_1: \RRR^d \to \RRR $. In particle--momentum representation,
\begin{equation}
	(E_\infty \Psi)(\bP,\bK) = \sum_{j=1}^M E_1(\bp_j) \Psi(\bP,\bK) \;.
\label{eq:Ep}
\end{equation}
We will consider the expression
\begin{equation}
	E_1(\bp) = \int - \frac{v(\bk)^* v(\bk)}{\omega(\bk)} \; d\bk \;.
\label{eq:E_1}
\end{equation}
With cutoffs $ \sigma, \Lambda $ applied to $ v $, this integral would be finite in many physically interesting situations. However, without cutoffs it is typically divergent and hence the operator $ E_\infty $ becomes a formal expression. We will define it as a map $ \FB \to \FB_{\ex} $ in Proposition \ref{prop:operatorextension}.\\

\end{itemize}

\subsection{Scaling Degrees}
\label{subsec:scaling}

The convergence of integrals appearing in formal calculations depends on how $ \theta(\bp), \omega(\bk) $ and $ v(\bk) $ scale at $ |\bp|, |\bk| $ going to $ \infty $ or $ 0 $. We assume a polynomial scaling, of which we keep track using scaling degrees $ m $ (UV) and $ \beta $ (IR). The scaling degree $ m $ in the UV regime is commonly used for symbols $ s \in \cS^m $. As we choose $ \theta, \omega, v \in C^\infty (\RRR^d \setminus \{0\}) $, an additional pole might appear at $ |\bk| \to 0 $, which we assume to be of order $ \beta $. Both scalings are assumed to be exact, i.e., polynomial bounds exist from above and below:\\

\begin{definition}[scaling]
Consider $ s \in C^\infty (\RRR^d \setminus \{0\}) $. We say that $ s $ has a \textbf{polynomial scaling} if there are some scaling degrees $ m_s \le \beta_s \in \RRR $ and constants $ C_1, C_2 \in \RRR $ such that
\begin{equation}
	|s(\bk)| \le C_1 |\bk|^{\beta_s} + C_2 |\bk|^{m_s} \qquad \forall \bk \in \RRR^d \setminus \{0\} \;.
\label{eq:scalingcondition}
\end{equation}
The space of all \textbf{symbols with polynomial scaling} is denoted
\begin{equation}
	\Sdot_1 := \big\{ s \in C^\infty(\RRR^d \setminus \{0\}) \; \mid \; \eqref{eq:scalingcondition} \text{ holds} \big\} \;.
\label{eq:Sdot1}
\end{equation}
\label{def:scaling}
\end{definition}

\begin{definition}[exact scaling]
Consider a symbol $ s \in C^\infty (\RRR^d \setminus \{0\}) $. We say that the symbol has an \textbf{exact polynomial scaling} (see Figure \ref{fig:Sdot_1bounds}) if there are some $ 0 < \underline{\varepsilon} < \overline{\varepsilon} \in \RRR $ such that
\begin{itemize}
\item There are a UV scaling degree $ m_s \in \RRR $ and $ c_1, C_1 > 0 $ with
\begin{equation}
	c_1 |\bk|^{m_s} \le |s(\bk)| \le C_1 |\bk|^{m_s} \qquad \forall |\bk| > \overline{\varepsilon} \;.
\label{eq:uvcondition}
\end{equation}
\item There are an IR scaling degree $ \beta_s \in \RRR $ and $ c_2, C_2 > 0 $ with
\begin{equation}
	c_2 |\bk|^{\beta_s} \le |s(\bk)| \le C_2 |\bk|^{\beta_s} \qquad \forall 0 < |\bk| < \underline{\varepsilon} \;.
\label{eq:ircondition}
\end{equation}
\item There is a constant $ c_3 > 0 $ with
\begin{equation}
	c_3 \le |s(\bk)| \qquad \forall \underline{\varepsilon} \le |\bk| \le \overline{\varepsilon} \;.
\label{eq:constantcondition}
\end{equation}
\end{itemize}
The space of all \textbf{symbols with exact polynomial scaling} is denoted
\begin{equation}
	\Sdot_{1,>} := \big\{ s \in C^\infty(\RRR^d \setminus \{0\}) \; \mid \; \eqref{eq:uvcondition}, \eqref{eq:ircondition} \text{ and } \eqref{eq:constantcondition} \text{ hold} \big\} \subseteq \Sdot_1 \;.
\end{equation}
The ``$ > $'' refers to the fact that $ |s(\bk)| > 0 $, so $ \frac{1}{s} $ is defined everywhere except from $ \bk = 0 $ and also scales polynomially.
\label{def:exactscaling}
\end{definition}

\begin{figure}[hbt]
	\centering
	\scalebox{1.0}{\begin{tikzpicture}[declare function = {
	f(\x) = 0.2/(\x);
	g(\x) = 0.02*\x*\x;
	h(\x) = 0.2/(\x) + 0.02*\x*\x;}]

\draw[thick, ->] (0,0) -- (8,0) node[anchor = south west] {$ |\bk| $};
\draw[thick, ->] (0,-0.5) -- (0,2) node[blue!50!red, anchor = south west] {$ |v(\bk)| $};
\draw[thick, samples = 50,domain = 0.1:8, blue!50!red] plot( \x , { h(\x) }  );
\fill[samples = 50,domain = 0.1:8, blue!50!red, opacity = 0.1] plot( \x , { h(\x) }  ) -- (8,0) -- (0,0) -- (0,{h(0.1)});
\node at (-0.2,-0.2) {0};

\draw[dashed, samples = 50,domain = 0.1:1.5] plot( \x , { 0.8*f(\x) }  );
\draw[dashed, samples = 50,domain = 0.16:1.5] plot( \x , { 1.6*f(\x) }  );
\node at (1,1.5) {$ \propto |\bk|^\beta  $};

\draw[dashed] (1.5,2) -- ++(0,-2.4) node [anchor = north] {$ \underline{\varepsilon} $};

\draw[dashed] (1.5,0.1) -- node[anchor = north] {$ c_3 $} ++(3.5,0);

\draw[dashed] (5,2) -- ++(0,-2.4) node [anchor = north] {$ \overline{\varepsilon} $};
\draw[dashed, samples = 50,domain = 5:8] plot( \x , { 0.8*g(\x) }  );
\draw[dashed, samples = 50,domain = 5:8] plot( \x , { 1.2*g(\x) }  );
\node at (7,1.5) {$ \propto |\bk|^m  $};

\end{tikzpicture}}
	\caption{Scaling degrees in the special case of a radially symmetric, positive function $ v $ with $ |v(\bk)| > 0 $.}
	\label{fig:Sdot_1bounds}
\end{figure}

Obviously, $ C_c^\infty(\RRR^d) \subset \Sdot_1 $, so $ \Sdot_1 \cap \fh $ is dense in $ \fh $.\\

For obtaining a well-defined renormalized Hamiltonian $ \widetilde{H} $ in Lemma \ref{lem:denselydefined}, we will assume
\begin{equation}
	 \theta, \omega, v \in \Sdot_{1,>} \;,
\label{eq:scalingfunctions}
\end{equation}
(see Figure \ref{fig:Sdot_1}) and denote their scaling degrees with $ m_\theta, m_\omega, m_v $ and $ \beta_\theta, \beta_\omega, \beta_v $, respectively. Further, we will assume in Lemma \ref{lem:denselydefined}:
\begin{equation}
	m_\theta, m_\omega, \beta_\theta, \beta_\omega \ge 0 \;.
\label{eq:scalingdispersion}
\end{equation}

\begin{figure}[hbt]
	\centering
	\scalebox{1.0}{\begin{tikzpicture}[declare function = {
	f(\x) = 0.6*sqrt(\x*\x+1)) ;
	g(\x) = 0.1*1/(\x*\x) + 0.3*\x*\x;
	h(\x) = 0.23*1/(\x);}]

\draw[thick, ->] (-1,0) -- (2,0) node[anchor = south west] {$ \bk $};
\draw[thick, ->] (0,-0.5) -- (0,2) node[blue!50!red, anchor = south west] {$ |v(\bk)| $};
\draw[thick, samples = 50,domain = -1:-0.12, blue!50!red] plot( \x , {- h(\x) }  );
\fill[samples = 50,domain = -1:-0.12, blue!50!red, opacity = 0.1] plot( \x , {- h(\x) }  ) -- ++ (0.12,0) -- (0,0) -- (-1,0);
\draw[thick, samples = 50,domain = 0.12:2, blue!50!red] plot( \x , { h(\x) }  );
\fill[samples = 50,domain = 0.12:2, blue!50!red, opacity = 0.1] plot( \x , { h(\x) }  ) -- (2,0) -- (0,0) -- (0,{h(0.12)});
\node at (-0.2,-0.2) {0};

\draw[thick, ->] (3,0) -- (6,0) node[anchor = south west] {$ \bk $};
\draw[thick, ->] (4,-0.5) -- (4,2) node[blue!50!red, anchor = south west] {$ |v(\bk)| $};
\draw[thick, samples = 50,domain = 3:3.77, blue!50!red] plot( \x , { g(\x-4) }  );
\fill[samples = 50,domain = 3:3.77, blue!50!red, opacity = 0.1] plot( \x , { g(\x-4) }  ) -- ++(0.23,0) -- (4,0) -- (3,0);
\draw[thick, samples = 50,domain = 4.23:6, blue!50!red] plot( \x , { g(\x-4) }  ) ;
\fill[samples = 50,domain = 4.23:6, blue!50!red, opacity = 0.1] plot( \x , { g(\x-4) }  ) -- (6,0) -- (4,0) -- (4,{g(0.23)});
\node at (3.8,-0.2) {0};

\draw[thick, ->] (7,0) -- (10,0) node[anchor = south west] {$ \bk $};
\draw[thick, ->] (8,-0.5) -- (8,2) node[red, anchor = south west] {$ |\omega(\bk)| $};
\draw[thick, samples = 50,domain = 7:10, red] plot( \x , { f(\x-8) }  ) ;
\fill[samples = 50,domain = 7:10, red, opacity = 0.1] plot( \x , { f(\x-8) }  ) -- (10,0) -- (7,0) -- cycle;
\node at (7.8,-0.2) {0};

\end{tikzpicture}}
	\caption{Examples for absolute values of functions $ v, \omega \in \Sdot_{1,>} $. The functions $ v, \omega $ are complex-valued. Note that $ \beta_\omega, m_\omega \ge 0 $.}
	\label{fig:Sdot_1}
\end{figure}

This implies, that for $ \theta $ and $ \omega $, there is no pole at the origin. QFT models often use dispersion relations based on symbols with $ m_\theta, m_\omega, \beta_\theta, \beta_\omega \in \{0,1,2\} $, which all satisfy this condition.\\

In order to obtain symmetric operators, we will also need to impose a symmetry condition in Lemma \ref{lem:denselydefined}:

\begin{equation}
	\theta(\bp) = \theta(-\bp) \;, \qquad \omega(\bk) = \omega(-\bk) \;, \qquad v(\bk) = v(-\bk) \;.
\label{eq:symmetry}
\end{equation}

\section{Construction of the ESS}	
\label{sec:ext}

\subsection{Wave Function Renormalization Factors}	
\label{subsec:ren}

The motivation behind the introduction of an ESS is to rigorously define the formal dressing operator
\begin{equation}
	W(s) = W_M(s) \ldots W_1(s) \;, \qquad \text{with } W_j(s) = e^{A_j^\dagger(s) - A_j(s)}, \; s = -\frac{v}{\omega} \;,
\label{eq:W}
\end{equation}
in the case where $ s \in \Sdot_1 \subseteq C^\infty(\RRR^d \setminus \{0\}) $, but $ s \notin \fh = L^2(\RRR^d,\CCC) $. Here, the number $ M $ corresponds to the fermion sector, so the second term in \eqref{eq:W} is a rather symbolic expression requiring us to adapt $ M $ in dependence of the fermion sector (similar to \eqref{eq:H0}). Formally, if we apply $ W(s) $ to a state vector $ \Psi_1 $ where the boson field is in the vacuum $ \Omega_y $ and $ \Psi_x \in \sF_x^{(1)} $ is a one-fermion state,
\begin{equation}
	 \Psi_1 = \Psi_x \otimes \Omega_y \;,
\end{equation}
then we obtain the following expression in momentum space:
\begin{equation}
	(W(s) \Psi_1)(\bp,\bK) = e^{-\frac{\Vert s \Vert^2}{2}}\frac{1}{\sqrt{N!}} \left( \prod_{\ell = 1}^N s(\bk_{\ell}) \right) \Psi_x\left(\bp + \sum_{\ell = 1}^N \bk_{\ell} \right) =: e^{-\frac{\Vert s \Vert^2}{2}} \Psi_0(\bp,\bK) \;.
\label{eq:Ws1}
\end{equation}
For $ s \in \fh $, the expression $ W(s) \Psi_1 $ is a Fock space vector with norm $ \Vert W(s) \Psi_1 \Vert = \Vert \Psi_1 \Vert $. However, if we set $ s \notin \fh $, two problems arise:
\begin{itemize}
\item $ \Vert s \Vert^2 = \langle s, s \rangle $ is formally a divergent integral. So the wave function renormalization factor $ e^{-\frac{\Vert s \Vert^2}{2}} $ is not well-defined.
\item $ \left( \prod_{\ell = 1}^N s(\bk_{\ell}) \right) $ (on every sector $ N \ge 1 $) is a non-square integrable function, so even without the infinite renormalization factor $ e^{-\frac{\Vert s \Vert^2}{2}} $, \eqref{eq:Ws1} does not describe an element of $ \sF $.
\end{itemize}

The second problem is tackled by defining a space $ \Sdot_\sF $ containing non-square integrable functions, including the above product. As $ s \in \Sdot_1 $, the product above is a smooth function apart from the zero boson momentum configuration set
\begin{equation}
	\exists(\bk = 0) := \big\{ q = (\bP,\bK) \in \cQ \; \big\vert \; \exists \ell \in \{1, \ldots, N\}: \; \bk_{\ell} = 0 \big\} \;.
\label{eq:existsk0}
\end{equation}

\begin{figure}[hbt]
	\centering
	\scalebox{0.7}{\begin{tikzpicture}

\filldraw[blue, fill opacity = 0.1] (-0.6,-3.4) -- (0.6,-0.6) -- (0.6,3) -- (-0.6,1) -- cycle;
\filldraw[blue, fill opacity = 0.1] (-3,-2) rectangle (3,2);
\draw[thick, ->] (-3,0) -- (3,0) node[anchor = south west] {$ k_1 $};
\draw[thick, ->] (-0.6,-1.2) -- (0.6,1.2) node[anchor = south west] {$ k_2 $};
\draw[thick, ->] (0,-2) -- (0,2) node[anchor = south west] {$ p_1 $};
\node at (-0.2,-0.2) {0};
\draw[blue, thick] (-5,1.5) node[anchor = east] {\Large $ \exists(k=0) $} -- (-3,1);
\draw[thick,->] (4,2.5) node[anchor = west] {\Large $\dot \cQ $} -- (2.5,2.5);

\end{tikzpicture}}
	\caption{The set $ \exists(\bk=0) $ on the $ (M,N) = (1,2) $-sector in configuration space $ \cQ $ for $ d=1 $.}
	\label{fig:existsk0}
\end{figure}

The restriction of $ \exists(\bk = 0) $ to any sector is a union of hyperplanes (see Figure \ref{fig:existsk0}), which have Lebesgue measure 0 and hence, also $ \exists(\bk = 0) \subseteq \cQ $ is null. Excluding this set from configuration space, we obtain
\begin{equation}
	\Qdot := \cQ \setminus \exists(\bk = 0) \;,
\end{equation}
and on $ q \in \Qdot $, the function $ \Psi_0 $ in \eqref{eq:Ws1} is smooth. For $ |\bk| \to 0 $ or $ |\bp|, |\bk| \to \infty $ we require the following scaling conditions to hold:
\begin{equation}
	\lim_{|\bk_{\ell}| \to 0} \frac{\Psi_0(\bP,\bK)}{|\bk_{\ell}|^\beta} = \lim_{|\bk_{\ell}| \to \infty} \frac{\Psi_0(\bP,\bK)}{|\bk_{\ell}|^m} = \lim_{|\bp_j| \to \infty} \frac{\Psi_0(\bP,\bK)}{|\bp_j|^m} = 0 \;,
\label{eq:psiscaling}
\end{equation}
for all fixed $ q = (\bP, \bK) \in \Qdot $ and some $ \beta, m \in \RRR $. We hence interpret $ \Psi_0 $ as an element of the following space:
\begin{equation}
	\Sdot_\sF := \big\{ \Psi \in C^\infty(\Qdot) \; \big\vert \; \eqref{eq:psiscaling} \text{ holds} \big\} \;.
\label{eq:SdotsF}
\end{equation}
Note that $ \Sdot_\sF \cap \sF $ contains $ C_c^\infty(\cQ) $, which is a dense set in $ \sF $. So $ \Sdot_\sF \cap \sF $ is dense in $ \sF $. The Fourier transform of some $ \Psi_0 \in \Sdot_\sF $ can be taken, if it is an element of the space
\begin{equation}
	\Sdot_{\sF,\loc} := \Sdot_\sF \cap L^1_{\loc}(\cQ) \;.
\end{equation}

In order to address the first problem, it is necessary to assign mathematical meaning to the expression $ \langle s, s \rangle $ (called \textbf{renormalization factor}). It may be convenient to take a Fourier transform of $ s \in \Sdot_1 $, for instance, in order to define a particle--position representation. This can be done whenever $ s \in L^1_{\loc}(\RRR^d) $, since then $ s \in \cS'(\RRR^d) $, i.e., it is a tempered distribution. We may hence distinguish two interesting function spaces for $ s $:
\begin{itemize}
\item The generic case is given by $ s \in \Sdot_1 $
\item The special case is given by $ s \in \Sdot_1 \cap L^1_{\loc}(\RRR^d) =: \Sdot_{1,\loc} $ , which allows for taking the Fourier transform.
\end{itemize}
In the following, we will present a construction based on the generic case, since the special one can be treated by the same means. Clearly, $ \fh \cap \Sdot_1 \subseteq \Sdot_{1,\loc} $, so even $ \Sdot_{1,\loc} $ contains a dense subspace of $ \fh $. If $ s \in \fh $, then
\begin{equation}
	\langle s, s \rangle = \int |s(\bk)|^2 \; d \bk \in \CCC \;.
\end{equation}
Otherwise, the renormalization factor $ \langle s , s \rangle $ is a symbolic expression, corresponding to a divergent integral. We interpret it as an element of the following vector space:

\begin{definition}
Consider the free $ \CCC $-vector space $ F(\Sdot_1 \times \Sdot_1) $. By a free vector space, we mean the set of all finite $ \CCC $-linear combinations of pairings, denoted $ \fr = \sum_{m=1}^M c_m \langle s_m, t_m \rangle $, with $ s_m, t_m \in \Sdot_1, \; c \in \CCC $ and the sum $ \sum_{m=1}^M $ being commutative. The \textbf{space of renormalization integrals} is defined as the quotient space
\begin{equation}
	\Ren_{01} := F \left( \Sdot_1 \times \Sdot_1 \right)/ _{\sim_{\Ren_{01}}} \;,
\end{equation}
where the equivalence relation $ \sim_{\Ren_{01}} $ of formal equality is given by
\begin{equation}
\begin{aligned}
	&\sum_{m=1}^M c_m \langle s_m,t_m \rangle \sim_{\Ren_{01}} \sum_{m=1}^{\tilde{M}} \tilde c_m \langle \tilde s_m, \tilde t_m \rangle\\
	:\Leftrightarrow \quad &\sum_{m=1}^M c_m \overline{ s_m (\bk)} t_m(\bk) = \sum_{m=1}^{\tilde{M}} \tilde c_m \overline{ \tilde s_m(\bk)} \tilde t_m(\bk) \quad \forall \bk \in \RRR^d \setminus \{0\} \;.
\end{aligned}
\end{equation}
\label{def:ren01}
\end{definition}

There is a natural one-to-one identification of renormalization integrals with functions $ \Ren_{01} \cong \Sdot_1 $: It is easy to see that the following map is an embedding (by definition of $ \sim_{\Ren_{01}} $):
\begin{equation}
\begin{aligned}
	\iota_1: \Ren_{01} &\to \Sdot_1 \;,\\
	 \sum_{m=1}^M c_m \langle s_m,t_m \rangle = \fr &\mapsto r(\bk) = \sum_{m=1}^M c_m \overline{ s_m (\bk)} t_m(\bk) \;.\\
\end{aligned}
\end{equation}
Conversely, for a given $ r \in \Sdot_1 $, the element $ \langle r, \chi_{\RRR^d \setminus \{0\}} \rangle \in \Ren_1 $ (with $ \chi $ being the indicator function) is identified with $ r $, so $ \iota_1 $ is indeed an isomorphism.\\

However, it can be misleading to think of $ \fr \in \Ren_1 $ as a function $ r $. Renormalization integrals are formally $ \fr = \int r(\bk) \; d\bk $, so they can be added but not directly multiplied. Hence the notation $ \langle \cdot, \cdot \rangle $, resembling an $ L^2 $ scalar product. By contrast, functions $ r_1,r_2 \in \Sdot_1 $ can be multiplied which leads again to a function in $ \Sdot_1 $.\\

In case $ r \in L^1(\RRR^d) $, we can identify $ \fr \in \Ren_{01} $ with the $ \CCC $-number
\begin{equation}
	\fr = \int_{\RRR^d} \sum_{m=1}^M c_m\overline{s_m(\bk)} t_m(\bk) \; d\bk = \int_{\RRR^d} r(\bk) \; d\bk \in \CCC \;.
\end{equation}

Now, several $ \fr \in \Ren_{01} $ get identified with the same $ \CCC $-number, e.g., all $ \fr $ corresponding to a function $ r $ with $ \int r(\bk) \; d \bk = 0 $ are identified with 0. We remove this ambiguity by modding out another equivalence relation:

\begin{definition}
	The \text{renormalization factor space} $ \Ren_1 $ is defined as
\begin{equation}
	\Ren_1 := \Ren_{01}/_{\sim_{\Ren_1}} \;,
\end{equation}
where for $ \fr_1, \fr_2 \in \Ren_{01} $ we define
\begin{equation}
	 \fr_1 \sim_{\Ren_1} \fr_2 \; :\Leftrightarrow \; \iota_1(\fr_1) - \iota_1(\fr_2) \in L^1 \quad \text{and} \quad \int (\iota_1(\fr_1) - \iota_1(\fr_2))(\bk) \; d \bk = 0 \;.
\end{equation}
\label{def:ren1}
\end{definition}

Elements of $ \Ren_1 $ will be denoted equally to a representative $ \fr $. Note that we can identify $ \Ren_1 $ with the quotient space
\begin{equation}
	\Ren_1 \cong (\CCC \oplus \Ren_{01})/D \quad \text{with} \quad D:= \big\{(-\smallint \iota_1(\fr)(\bk) \; d\bk, \fr) \; \big\vert \; \iota_1(\fr) \in L^1(\RRR^d) \big\} \;,
\label{eq:quotientspace}
\end{equation}
where the isomorphism is given by identification of $ (c,\fr) $ with that class $ [\fr'] \in \Ren_1 $ where $ \smallint(\iota_1(\fr')(\bk) - \iota_1(\fr)(\bk)) \; d\bk = c $.\\

We will encounter the special case where $ r(\bk) $ only takes values in $ [0,\infty) $.
\begin{definition}
The \textbf{positive renormalization factor cone} $ \Ren_{1+} $ is the set of all $ [\fr] \in \Ren_1 $, where at least one representative $ \fr \in \Ren_{01} $ is identified with a positive-valued function $ \iota_1(\fr) :\RRR^d \setminus \{0\} \to [0,\infty) $.
\label{eq:ren1+}
\end{definition}
Scalar multiplication by $ c \in [0,\infty) $ is well-defined, making this indeed a cone.\\

\vspace{0.5cm}
It is also convenient to make sense of products and polynomials of factors $ \fr \in \Ren_1 $. We hence define the following vector spaces.

\begin{definition}
Consider the free $ \CCC $-vector space of all finite linear combinations of products of up to $ P \in \NNN_0 $ renormalization factors (i.e., formal polynomials of degree $ P $)
\begin{equation}
	\Pol_P := \left\{ \fR = \sum_{m=1}^M c_m \fr_{m,1} \cdot \ldots \cdot \fr_{m,p_m} \; \middle\vert \; 0 \le p_m \le P, \; c_m \in \CCC, \; \fr_{m,p} \in \Ren_1 \right\} \;,
\end{equation}
with the sum $ \sum_{m=1}^M $ and products being commutative. Then, the space of \textbf{renormalization factor polynomials} of order $ \le P $ is the quotient space
\begin{equation}
	\Ren_P = \Pol_P/_{\sim_{\Ren_P}} \;,
\end{equation}
with the equivalence relation $ \sim_{\Ren_P} $ of formal equality generated by
\begin{equation}
\begin{aligned}
	\fr_1 \fr_2 \ldots \fr_{p_1} + \fR
	\; &\sim_{\Ren_P} \; c \fr_2 \ldots \fr_{p_1} + \fR \qquad \text{if } \fr_1 = c \in \CCC \;,\\
	(c_1 c_2) \fr_1 \ldots\fr_{p_1} + \fR
	\; &\sim_{\Ren_P} \; c_1 (c_2 \fr_1) \ldots\fr_{p_1} + \fR \;,\\
\end{aligned}
\end{equation}
with $ P_1 \le P $, $ \fR \in \Pol_P $, $ c_1,c_2 \in \CCC $ and $ \fr_m \in \Ren_1 $ \;.
\label{def:renP}
\end{definition}

The bound $ P $ on the polynomial order is removed by taking the union over all orders.
\begin{definition}
The space of \textbf{renormalization factor polynomials} is given by
\begin{equation}
	 \Ren = \bigcup_{P=1}^\infty \Ren_P \;.
\end{equation}
\label{def:ren}
\end{definition}

The spaces defined in this section follow the hierarchy
\begin{equation}
\Ren_{1+} \subseteq \Ren_1 \subseteq \Ren_2 \subseteq \ldots \subseteq \Ren_P \subseteq \ldots \subseteq \Ren \;.
\end{equation}

\subsection{Exponentials of Renormalization Factors and the Field $ \eRen $}	
\label{subsec:classes}

The state vector $ W(s) \Psi_1 $ in \eqref{eq:Ws1} contains an exponential of a renormalization factor, i.e., an expression $ \fc = e^{\fr} $ with $ \fr \in \Ren_1 $. More generally, we would like to give a meaning to sums of exponentials with different (perhaps infinite) renormalization factors $ \fr_1, \fr_2 \in \Ren_1 $. Formally, we would like to identify
\begin{equation}
	e^{\fr_1} + e^{\fr_2} = e^{\fr_1}(1 + e^{\fr_2-\fr_1}) \;.\\
\end{equation}
The bracket is a $ \CCC $-number, whenever $ \fr_1-\fr_2 \in \CCC $, which defines an equivalence relation
\begin{equation}
	\fr_1 \sim_1 \fr_2 \quad :\Leftrightarrow \quad \fr_1-\fr_2 \in \CCC \;.
\end{equation}
\begin{definition}
The space of \textbf{renormalization factor classes} is then given by the quotient space
\begin{equation}
	\Clas_1 = \Ren_1/_{\sim_1} \;.
\end{equation}
\label{def:clas1}
\end{definition}
Whenever two elements $ \fr_1, \fr_2 $ are of two different classes, we have $ \fc = e^{\fr_1-\fr_2} \notin \CCC $ and may think of $ \fc $ as an ``infinite constant''.\\

We may also interpret $ \Clas_1 $ as a subspace of $ \Ren_1 $: The elements in the class of zero, $ b \in \Ren_1 : b \sim_1 0 $, form a subspace $ V \subset \Ren_1 $, $ V \cong \CCC $. Now we can find a basis $ B $ (containing one element) of $ V $ and, by the axiom of choice \cite{Jech}, extend it to a basis $ B \cup B' $ of $ \Ren_1 $. By defining $ W := \span(B') $, we obtain a decomposition
\begin{equation}
	\Ren_1 = V \oplus W \qquad \Rightarrow \qquad W \cong \Ren_1 / V \;.
\end{equation} 
Now, the following map is a bijection from $ W $ to $ \Clas_1 $:
\begin{equation}
	W \ni w \mapsto [w]_{\sim_{\Ren_1}} \in \Clas_1 \;.
\end{equation}
Note that this bijection is not unique, since it depends on the choice of $ B' $, i.e., of representatives within each class.\\

We now consider elements of the group algebra $ \CCC[\Ren_1] \ni c_1 e^{\fr_1} + \ldots + c_M e^{\fr_M} $ of the additive group of the vector space $ \Ren_1 $. The formal exponentials make an addition in the group $ e^{\fr_1 + \fr_2} = e^{\fr_1} e^{\fr_2} $ appear as a multiplication. Here, we would like to consider two summands as equal, if ``parts of the complex number can be pulled into the exponent'', i.e., $ c e^{c'+\fr} = (c e^{c'}) e^\fr $. This is done by defining an ideal $ \cI \subset \CCC[\Ren_1] $ generated by all elements of the form
\begin{equation}
	e^{c + \fr} - e^c e^{\fr} \qquad \text{with} \qquad c \in \CCC, \; \fr \in \Ren_1 \;.
\label{eq:ideal}
\end{equation}
Note that this ideal gives rise to an equivalence relation
\begin{equation}
	u \sim_\cI v \qquad :\Leftrightarrow (u-v) \in \cI \;.
\end{equation}

\begin{proposition}
\begin{equation}
	\CCC[\Ren_1]/ \cI \cong \CCC[W] \;.
\end{equation}
\label{prop:cong}
\end{proposition}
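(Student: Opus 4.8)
The plan is to construct an explicit unital algebra homomorphism $\phi : \CCC[Ren_1] \to \CCC[W]$, show that its kernel is exactly the ideal $\cI$, and then conclude by the homomorphism theorem. I use the splitting $Ren_1 = V \oplus W$ with $V \cong \CCC$ introduced just above the statement, so that every $\fr \in Ren_1$ has a unique decomposition $\fr = c + w$ with $c \in V = \CCC$ and $w \in W$. Since $\CCC[Ren_1]$ is the free $\CCC$-module on the group elements $\{e^\fr : \fr \in Ren_1\}$, I may define $\phi$ on this basis and extend $\CCC$-linearly, setting
\[
\phi(e^\fr) := e^c \, e^w \qquad (\fr = c + w,\ c \in \CCC,\ w \in W),
\]
where $e^c \in \CCC$ is the genuine complex exponential (a scalar) and $e^w$ is the corresponding group-algebra generator in $\CCC[W]$. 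This mirrors exactly the idea of pulling the complex part out of the exponent that motivated $\cI$.

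First I would verify that $\phi$ is a unital algebra homomorphism. Given $\fr_i = c_i + w_i$, uniqueness of the decomposition together with the fact that $V, W$ are subspaces gives $\fr_1 + \fr_2 = (c_1 + c_2) + (w_1 + w_2)$, and hence
\[
\phi(e^{\fr_1} e^{\fr_2}) = \phi(e^{\fr_1 + \fr_2}) = e^{c_1 + c_2} e^{w_1 + w_2} = (e^{c_1} e^{w_1})(e^{c_2} e^{w_2}) = \phi(e^{\fr_1}) \phi(e^{\fr_2}),
\]
where I used $e^{c_1 + c_2} = e^{c_1} e^{c_2}$ for the complex exponential and $e^{w_1 + w_2} = e^{w_1} e^{w_2}$ for the group multiplication in $\CCC[W]$; unitality follows from $\phi(e^0) = 1$. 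Next I would check $\cI \subseteq \ker\phi$ by evaluating $\phi$ on the generators: writing $\fr = c' + w$, both $e^{c+\fr}$ and $e^c e^\fr$ are sent to $e^{c} e^{c'} e^w$, so $\phi(e^{c+\fr} - e^c e^\fr) = 0$; since $\ker\phi$ is an ideal, it contains all of $\cI$. Thus $\phi$ descends to $\bar\phi : \CCC[Ren_1]/\cI \to \CCC[W]$ with $\bar\phi \circ \pi = \phi$, where $\pi$ is the quotient map, and $\bar\phi$ is surjective because $\phi(e^w) = e^w$ while the $e^w$, $w \in W$, span $\CCC[W]$.

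The remaining, and essentially the only nontrivial, point is injectivity of $\bar\phi$, equivalently $\ker\phi = \cI$. The key reduction is that modulo $\cI$ every generator collapses to a scalar multiple of some $e^w$: applying the defining relation with second argument $w$ gives $e^{c+w} \equiv e^c\, e^w \pmod{\cI}$, so $\pi(e^\fr) = e^c\, \pi(e^w)$ whenever $\fr = c + w$. Consequently $\{\pi(e^w) : w \in W\}$ spans $\CCC[Ren_1]/\cI$ as a $\CCC$-vector space. Now $\bar\phi$ sends this spanning family to $\{e^w : w \in W\}$, which is a $\CCC$-basis of $\CCC[W]$ and in particular linearly independent. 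Hence if $\bar\phi(\xi) = 0$, writing $\xi = \sum_i a_i \pi(e^{w_i})$ with distinct $w_i$ and applying $\bar\phi$ forces $\sum_i a_i e^{w_i} = 0$, so all $a_i = 0$ and $\xi = 0$; thus $\bar\phi$ is a bijective algebra homomorphism, which is the claimed isomorphism.

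I expect the main obstacle to be purely bookkeeping: keeping rigorous track of the two distinct meanings of the symbol $e^c$ (the scalar $e^c \in \CCC$ versus the group-algebra generator $e^c \in \CCC[V] \subset \CCC[Ren_1]$), which is precisely the discrepancy that $\cI$ is designed to reconcile; all analytic content has already been absorbed into the construction of $Ren_1$ and the splitting $Ren_1 = V \oplus W$. As an alternative one could argue structurally via $\CCC[Ren_1] \cong \CCC[V] \otimes_\CCC \CCC[W]$, under which $\cI$ becomes $J \otimes_\CCC \CCC[W]$ for the ideal $J \subseteq \CCC[V]$ generated by the elements $e^{c+c'} - e^c e^{c'}$; since $\CCC[V]/J \cong \CCC$ via $e^c \mapsto e^c$, this yields $\CCC[Ren_1]/\cI \cong \CCC \otimes_\CCC \CCC[W] \cong \CCC[W]$. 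I would nonetheless present the direct homomorphism argument, as it avoids having to invoke the behaviour of quotients under tensor products.
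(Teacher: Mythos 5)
Your proposal is correct and takes essentially the same approach as the paper: the paper defines the rewriting map $\Pi\bigl(c_m e^{\fr_m}\bigr) = \bigl(c_m e^{\fr_m - \pi(\fr_m)}\bigr) e^{\pi(\fr_m)}$ on $\CCC[Ren_1]$, shows it is an algebra homomorphism with kernel $\cI$ and image $\CCC[W]$, and invokes the isomorphism theorem — your $\phi$ is exactly this map, viewed as landing in the abstract group algebra $\CCC[W]$ rather than in the subalgebra $\CCC[W]\subset\CCC[Ren_1]$. The only difference is cosmetic, namely how $\ker = \cI$ is finished: you argue that the classes of $e^w$, $w \in W$, span the quotient and are sent to a basis of $\CCC[W]$, whereas the paper notes that $\Pi$ preserves $\sim_{\cI}$-equivalence classes, so $\Pi(I)=0$ forces $I \in \cI$; both arguments are valid.
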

\begin{proof}
We define a map $ \pi: \Ren_1 \to W $, which assigns to $ \fr \in \Ren_1 $ the vector $ w $ within the decomposition $ \Ren_1 = V \oplus W $ above. So $ \fr - \pi(\fr) \in \CCC $. Now within, $ c_1 e^{\fr_1} + \ldots + c_M e^{\fr_M} \in \CCC[\Ren_1] $, we can re-write each summand as
\begin{equation}
	c_m e^{\fr_m} \sim_{\cI} (c_m e^{\fr_m - \pi(\fr_m)})e^{\pi(\fr_m)} \;.
\end{equation}

This gives rise to a re-writing map
\begin{equation}
\begin{aligned}
	\Pi: \CCC[\Ren_1] &\to \CCC[\Ren_1] \;,\\
	c_1 e^{\fr_1} + \ldots + c_M e^{\fr_M} &\mapsto (c_1 e^{\fr_1 - \pi(\fr_1)})e^{\pi(\fr_1)} + \ldots + (c_M e^{\fr_M - \pi(\fr_M)})e^{\pi(\fr_M)} \;.\\
\end{aligned}
\end{equation}
Note that by the computation rules for group algebras, whenever $ e^{\pi(\fr_m)} = e^{\pi(\fr_{m'})} $ for two summands, they can be combined into one.\\

The map $ \Pi $ is an algebra homomorphism: it is linear and respects the multiplication
\begin{equation}
	\Pi(c_1 e^{\fr_1} + \ldots + c_M e^{\fr_M}) \Pi(c'_1 e^{\fr'_1} + \ldots + c'_M e^{\fr'_M}) = \Pi(c_1 e^{\fr_1} + \ldots + c_M e^{\fr_M} + c'_1 e^{\fr'_1} + \ldots + c'_M e^{\fr'_M}) \;.
\end{equation}
The latter is a consequence of $ \pi(\fr_1 + \fr_2) = \pi(\fr_1) + \pi(\fr_2) $ which does only hold true because we have chosen $ W $ as a vector space. Further, $ \Pi $ does not change the equivalence class with respect to $ \sim_{\cI} $. So $ \Pi(I) = 0 $ implies $ I \in \cI $. Conversely, if $ I \in \cI $ then $ I = I_1+\ldots +I_M $ with each $ I_i = A_i B_i C_i $ with $ A_i,C_i \in \CCC[\Ren_1] $ and $ B_i = e^{c_i+\mathfrak{r}_i} - e^{c_i}e^{\mathfrak{r}_i} $, so $ \Pi(B_i) = 0 $ and $ \Pi(I_i) = 0 $ and hence $ \Pi(I) = 0 $. So the kernel of $ \Pi $ is exactly $ \Ker(\Pi) = \cI $.\\
Moreover, the image is $ \Pi[\CCC[\Ren_1]] = \CCC[W] $, since only elements of $ \CCC[W] $ appear in it and any element of $ \CCC[W] \subset \CCC[\Ren_1] $ is mapped to itself.\\
By the isomorphism theorems, we now have that
\begin{equation}
	\CCC[\Ren_1]/\Ker(\Pi) \cong \Pi[\CCC[\Ren_1]] \qquad :\Leftrightarrow \qquad \CCC[\Ren_1]/ \cI \cong \CCC[W] \;,
\end{equation}
as claimed.
\end{proof}

\begin{proposition}
$ \CCC[\Ren_1]/ \cI $ has no proper zero divisors.
\label{prop:zerodiv}
\end{proposition}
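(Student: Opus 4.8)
The plan is to reduce the statement to a classical fact about group algebras of orderable groups. By Proposition~\ref{prop:cong} we have the isomorphism $\CCC[Ren_1]/\cI \cong \CCC[W]$, so it suffices to show that the group algebra $\CCC[W]$ of the additive group $(W,+)$ over the field $\CCC$ has no proper zero divisors. The essential structural input is that $W$ was constructed as a complex \emph{vector space} (a linear complement of the class-of-zero subspace $V \cong \CCC$ inside $Ren_1$); in particular its underlying additive group is torsion-free, since $nw = 0$ for an integer $n>0$ forces $w=0$.

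First I would record that the additive group of $W$ admits a total order $<$ compatible with the group operation, i.e. $w_1 < w_2 \Rightarrow w_1 + u < w_2 + u$ for all $u \in W$. Viewing $W$ as a real vector space, choose (by the axiom of choice, already invoked in the construction of $W$) a Hamel basis $(e_\alpha)_{\alpha\in A}$ together with a well-ordering of the index set $A$. Writing each $w\in W$ uniquely as a finite real combination $w=\sum_\alpha c_\alpha e_\alpha$, declare $w>0$ iff the coefficient $c_{\alpha_0}$ at the largest index $\alpha_0$ with $c_{\alpha_0}\neq 0$ is positive. One checks directly that this is a translation-invariant total order.

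Next I would run the standard leading-term argument. Take nonzero $u,v\in\CCC[W]$ and write $u=\sum_{i=1}^p a_i\,e^{w_i}$, $v=\sum_{j=1}^q b_j\,e^{w'_j}$ with all $a_i,b_j\neq 0$ and the $w_i$ (resp.\ $w'_j$) pairwise distinct. Let $w_{\max}$ and $w'_{\max}$ be the $<$-largest elements among the $w_i$ and $w'_j$, respectively. In the product $uv=\sum_{i,j}a_i b_j\,e^{w_i+w'_j}$ the group element $w_{\max}+w'_{\max}$ occurs, and by translation invariance of the order it can be produced only by the pair $(w_{\max},w'_{\max})$: if $w_i+w'_j=w_{\max}+w'_{\max}$ with $w_i\le w_{\max}$ and $w'_j\le w'_{\max}$, then necessarily $w_i=w_{\max}$ and $w'_j=w'_{\max}$. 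Hence the coefficient of $e^{w_{\max}+w'_{\max}}$ in $uv$ equals $a_{\max}b_{\max}\neq 0$ because $\CCC$ is a field, so $uv\neq 0$.

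The only genuinely nontrivial step is the construction of the translation-invariant total order on $W$ and the verification that the maximal product term arises uniquely; everything else is formal bookkeeping. The decisive point is that $W$ is a vector space rather than an arbitrary subset of $Ren_1$ — this is exactly why Proposition~\ref{prop:cong} was phrased with $\CCC[W]$, and why passing to the quotient by $\cI$ (rather than working with $Clas_1$ alone) is the right move: it guarantees torsion-freeness and hence orderability, from which the absence of proper zero divisors follows. Alternatively, one may simply cite the theorem that the group ring of a torsion-free abelian group over a field is an integral domain, but the order construction above makes the argument self-contained.
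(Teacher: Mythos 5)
Your proof is correct, and its skeleton is the same as the paper's: both first invoke Proposition~\ref{prop:cong} to replace $\CCC[Ren_1]/\cI$ by the group algebra $\CCC[W]$, and both then exploit that the additive group of $W$ is torsion-free, hence orderable, hence has a group algebra with no proper zero divisors. The difference lies in how the second step is discharged. The paper delegates it entirely to Passman's book: torsion-free abelian implies ordered (Lemma~26.6), ordered implies ``t.u.p.'' (Lemma~26.4), which implies ``u.p.'' (Lemma~26.2), and a u.p.\ group has a zero-divisor-free group algebra over any field. You instead make the argument self-contained, and in doing so you use a hypothesis strictly stronger than torsion-freeness that is available here for free: $W$ is a vector space over $\RRR$ (indeed over $\CCC$), so a translation-invariant total order can be written down explicitly from a well-ordered Hamel basis (your leading-coefficient order), bypassing the genuinely nontrivial general theorem that \emph{every} torsion-free abelian group is orderable. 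Your leading-term computation --- that $w_{\max}+w'_{\max}$ arises only from the pair $(w_{\max},w'_{\max})$, so its coefficient in $uv$ is $a_{\max}b_{\max}\neq 0$ --- is precisely the unique-product implication of Passman, spelled out by hand; both the order construction and this step check out. In short, your route buys a self-contained and more elementary proof tailored to the vector-space structure of $W$ (which, as you correctly note, is exactly what the choice of $W$ in Proposition~\ref{prop:cong} provides), while the paper's route buys brevity at the price of citing general group-ring theory.
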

($ a, b \in \CCC[\Ren_1]/ \cI, a,b \neq 0 $ are called proper zero divisors if and only if $ a b = 0 $.)
\begin{proof}
By Proposition \ref{prop:cong}, it suffices to show that $ \CCC[W] $ has no proper zero divisors.\\

Now, the additive group $ W $ is an Abelian group that is torsion-free, i.e., for $ g \in W $ and $ n \in \NNN $:
\begin{equation}
	\underbrace{g+g+ \ldots+g}_{n \text{ times}} = 0 \quad \Rightarrow \quad g = 0 \;.
\end{equation}
Now, by \cite[Lemma 26.6]{Passman}, the group $ G = W $ is ordered, so by \cite[Lemma 26.4]{Passman} it is ``t.u.p.'', so by \cite[Lemma 26.2]{Passman} it is ``u.p.'' and $ K[G] = \CCC[W] $ has no proper zero divisors, i.e., it is an entire ring.\\
\end{proof}

Following \cite[II.~§3]{Lang}, it is a theorem that for every ring without proper zero divisors, the quotients form a field. So by Proposition \ref{prop:zerodiv} the following field extension of $ \CCC $ is well-defined:

\begin{definition}[and Corollary]
The \textbf{field of (exponential) wave function renormalizations} is given by all fractions of linear combinations
\begin{equation}
	\eRen := \left\{ \fc = \frac{a_1}{a_2} \; \middle\vert \; a_1,a_2 \in \CCC[\Ren_1]/ \cI \right\} \;.
\end{equation}
\label{def:eren}
\end{definition}

By using representatives of $ \CCC[\Ren_1]/ \cI $, we can write any $ \fc \in \eRen $ as
\begin{equation}
	\fc = \frac{\sum_{m = 1}^M c_m e^{\fr_m}}{\sum_{m' = 1}^{M'} c_{m'} e^{\fr_{m'}}} \quad \text{with} \quad c_m, c_{m'} \in \CCC \;, \quad \fr_m, \fr_{m'} \in \Ren_1 \;.
\end{equation}

In particular, we can view a wave function renormalization $ \fc \in \eRen $ as an ``extended complex number''.\\

\subsection{First ESS}	
\label{subsec:ext1}

With the above definitions, we are able to give meaning to expressions like
\begin{equation}
	\Psi = e^{-\fr} \Psi_0 \qquad \text{or even} \qquad \Psi = \fc \Psi_0 \;,
\label{eq:ext1}
\end{equation}
with $ \fr \in \Ren_1, \fc \in \eRen $ and $ \Psi_0 \in \Sdot_\sF $, as they appear in \eqref{eq:Ws1}. We would like to take linear combinations of them and even handle expressions like $ \fc \fR \Psi_0 $ with $ \fR \in \Ren $. This is done by defining $ \eRen $-vector spaces including such expression, either without $ \fR $ (this will be the first ESS $ \FB $) or with $ \fR $ (this will be the second ESS $ \FB_{\ex} $).\\

\begin{definition}
Consider the free $ \eRen $-vector space of all finite (commutative) sums of the form
\begin{equation}
	\FB_0 := \left\{ \Psi = \sum_{m=1}^M \fc_m \Psi_m \; \middle\vert \; \Psi_m \in \Sdot_\sF, \; \fc_m \in \eRen \right\} \;.
\label{eq:PsiFB}
\end{equation}
Then, the \textbf{first extended state space (ESS)} is the $ \eRen $-quotient space
\begin{equation}
	\FB = \FB_0 /_{\sim_{\F}} \;,
\end{equation}
with equivalence relation $ \sim_{\F} $ generated by
\begin{equation}
	(c \fc) \Psi_a + \Psi
	\;\sim_{\F} \; \fc (c \Psi_a) + \Psi \qquad \text{if } c \in \CCC \;,
\end{equation}
for any $ \fc \in \eRen, \; \Psi \in \FB_0 $ and $ \Psi_a \in \Sdot_\sF $.\\

\label{def:FB}
\end{definition}

The dense subspace $ \Sdot_\sF \cap \sF $ of the Fock space can be naturally embedded into $ \FB $: Every $ \Psi_{\sF} \in \Sdot_\sF \cap \sF $, can be identified with
\begin{equation}
	\Psi_{\sF} = e^0 \Psi_{\sF} \in \FB \;.
\end{equation}
Elements of $ \FB $ do not necessarily satisfy any symmetry conditions. However, we can decompose $ \FB $ into sectors, just as $ \sF $ in \eqref{eq:sectordecay}: Since $ \Sdot_\sF $ consists of functions in $ C^\infty(\cQ) $, we may introduce the projections
\begin{equation}
	P^{(M,N)}: \Sdot_\sF \to \Sdot_\sF \;, \qquad (P^{(M,N)} \Psi)(q) = \begin{cases} \Psi(q) \quad &\text{if } q \in \cQ^{(M,N)}\\ 0 \quad &\text{else} \end{cases} \;.
\end{equation}
By $ \eRen $-linearity, $ P^{(M,N)} $ extends to $ \FB $, which allows to define
\begin{equation}
	\FB^{(M,N)} := P^{(M,N)} \FB \;, \qquad \text{so} \qquad
	\FB = \bigoplus_{M, N \in \NNN_0} \FB^{(M,N)} \;.
\label{eq:FBsectors}
\end{equation}

The coherent state in \eqref{eq:Ws1} can now be seen as an ESS element:
\begin{equation*}
	(W(s) \Psi_1)(\bp,\bK) = \underbrace{e^{-\frac{\Vert s \Vert^2}{2}}}_{= e^\fr} \underbrace{\frac{1}{\sqrt{N!}} \left( \prod_{\ell = 1}^N s(\bk_{\ell}) \right) \Psi_x\left(\bp + \sum_{\ell} \bk_{\ell} \right)}_{ \in \Sdot_\sF } \;.
\end{equation*}

For $ s \in \Sdot_{1,\loc} $ and $ \Psi_x \in C_c^\infty $, the second factor is even in $ \Sdot_{\sF,\loc} $.\\

\subsection{Second ESS}	
\label{subsec:ext2}

Later in this work, we will encounter expressions like
\begin{equation}
	\Psi(q) = e^{-\fr} \fR(q) \Psi_0(q) \;,
\label{eq:ext2}
\end{equation}
with $ q \in \Qdot, \; \fR(q) \in \Ren, \Psi_0(q) \in \CCC $, or linear combinations of them. We interpret $ \Psi_\fR := \fR \Psi_0 $ as a function in the function space
\begin{equation}
	 \Psi_\fR \in \Ren^{\Qdot} := \big\{ \Psi_\fR: \Qdot \to \Ren \big\} \;.
\end{equation}
The second ESS then covers expressions \eqref{eq:ext2} and linear combinations of them:

\begin{definition}
Consider the free $ \eRen $-vector space of all finite (commutative) sums of the form
\begin{equation}
	\FB_{\ex,0} := \left\{ \Psi = \sum_{m=1}^M \fc_m \Psi_m \; \middle\vert \; \fc_m \in \eRen, \Psi_m \in \Ren^{\Qdot} \right \} \;.
\label{eq:PsiFBex}
\end{equation}
Then, the \textbf{second extended state space (ESS)} is the $ \eRen $-quotient space
\begin{equation}
	\FB_{\ex} = \FB_{\ex,0} /_{\sim_{\Fex}} \;,
\end{equation}
with equivalence relation $ \sim_{\Fex} $ generated by
\begin{equation}
	(c \fc) \Psi_a + \Psi
	\; \sim_{\Fex} \; \fc (c \Psi_a) + \Psi \qquad \text{if } c \in \CCC \;,
\end{equation}
where $ \fc \in \eRen, \; \Psi \in \FB_{\ex,0} $ and $ \Psi_a \in \Ren^{\Qdot} $.\\

\label{def:FBex}
\end{definition}

The first ESS can be embedded into the second ESS $ \FB \hookrightarrow \FB_{\ex} $ by interpreting all $ \Psi_m \in \Sdot_\sF $ as elements $ \Psi_m \in \Ren^{\Qdot} $.\\

\clearpage

\section{Operators on the ESS}
\label{sec:opext}

We first prove that creation and annihilation terms $ A^\dagger(v), A(v) $ as in \eqref{eq:adaggerp} and \eqref{eq:ap} can be defined as operators using ESSs. Here, we even permit form factors $ v(\bp, \bk) $, that are allowed to depend on the fermion momentum $ \bp $. The momentum space definition reads
\begin{equation}
\begin{aligned}
	(A_j^\dagger(v) \Psi)(\bP,\bK) &= \frac{1}{\sqrt{N}} \sum_{\ell = 1}^N v(\bp_j, \bk_{\ell}) \Psi(\bP + e_j \bk_{\ell},\bK \setminus \bk_{\ell}) \;,\\
	(A_j(v) \Psi)(\bP,\bK) &= \sqrt{N+1} \int v(\bp_j - \tilde{\bk}, \tilde{\bk})^* \Psi(\bP - e_j \tilde{\bk}, \bK, \tilde{\bk}) \; d\tilde{\bk} \;,\\
	A^\dagger(v) &= \sum_{j = 1}^M A^\dagger_j(v), \qquad A(v) = \sum_{j = 1}^M A_j(v) \;,\\
\end{aligned}
\label{eq:aadaggervpk}
\end{equation}
which can be seen as a generalization of \eqref{eq:adaggerp} and \eqref{eq:ap}.

\begin{lemma}[$ A^\dagger, A $ are well-defined for $ v(\bp, \bk) $]
Let $ v: \RRR^d \times (\RRR^d \setminus \{0\}) \to \CCC $ be smooth and satisfying the scaling condition \eqref{eq:psiscaling}. Then, \eqref{eq:aadaggervpk} entails well-defined operators
\begin{equation}
	A_j^\dagger(v): \FB \to \FB \;, \qquad
	A_j(v): \FB \to \FB_{\ex} \;,
\end{equation}
which may be restricted\footnote{We use the same notation for all extended or restricted versions of operators, here.} to
\begin{equation}
	A_j^\dagger(v): \Sdot_\sF \to \Sdot_\sF \;, \qquad
	A_j(v): \Sdot_\sF \to \Ren^{\Qdot} \;.
\end{equation}
We may even extend $ A_j^\dagger(v): \FB_{\ex} \to \FB_{\ex} $.\\
\label{lem:aadaggerwelldefined}
\end{lemma}
\begin{proof}
Suppose, $ \Psi \in \Sdot_\sF $ and consider the expression $ A_j^\dagger(v) \Psi $ in \eqref{eq:aadaggervpk}. This is a finite sum over products consisting of two factors. By definition of $ \Sdot_1 $, the first factor $ v(\bp_j, \bk_{\ell}) $ in each product is smooth everywhere except where $ \bk_{\ell} = 0 $. By definition of $ \Sdot_\sF $, the second factor $ \Psi(\bP + e_j \bk_{\ell}, \bK \setminus \bk_{\ell}) $ is smooth at all configurations, at which $ \bK \setminus \bk_{\ell} $ contains no coordinate $ \bk_{\ell'} = 0 $. So the product is smooth on $ \Qdot $. In addition, the factors $ v $ and $ \Psi $ scale polynomially as in \eqref{eq:psiscaling}, so $ A^\dagger(v) \Psi \in \Sdot_\sF $.\\

The expression for $ A(v) \Psi $ in \eqref{eq:ap} is a (possibly divergent) integral for each fixed $ (\bP,\bK) \in \Qdot $. Since both the functions $ \tilde{\bk} \mapsto v(\bp_j - \tilde{\bk}, \tilde{\bk}) $ and $ \tilde{\bk} \mapsto \Psi(\bP - e_j \tilde{\bk}, \bK, \tilde{\bk}) $ are in $ \Sdot_1 $, the integral defines an element $ \fR \in \Ren_1 \subset \Ren $ for each fixed $ (\bP,\bK) $. Thus, $ A(v) \Psi  \in \Ren^{\Qdot} $.\\

Both operators can be extended to $ \FB $ by taking $ \eRen $-linear combinations.\\
For $ \Psi \in \Ren^{\Qdot} $, the expression $ A_j^\dagger(v) \Psi $ in \eqref{eq:aadaggervpk} is again a sum of products, that are all in $ \Ren^{\Qdot} $. By $ \eRen $-linearity, we can then extend $ A_j^\dagger(v) $ to $ \FB_{\ex} $.\\
\end{proof}

We may also define the constituent operators of $ H = H_{0, y} + A^\dagger(v) + A(v) - E_\infty $, as well as $ H_{0, x} $ on $ \FB $:

\begin{proposition}[Constituents of $ H $ are well-defined]
Consider the momentum space definitions of $ H_0 $ \eqref{eq:H0}, $ A^\dagger(v) $ \eqref{eq:adaggerp}, $ A(v) $ \eqref{eq:ap} and $ E_\infty $ \eqref{eq:Ep}, where we only assume $ \theta, \omega, v \in \Sdot_1 $ and an arbitrary self-energy function $ E_1: \RRR^d \to \Ren $. Then, the above four momentum space definitions entail well-defined operators
\begin{equation}
\begin{aligned}
	H_0: &\FB \to \FB \;, \qquad 
	&A^\dagger(v): &\FB \to \FB \;,\\
	A(v): &\FB \to \FB_{\ex} \;, \qquad 	
	&E_\infty: &\FB \to \FB_{\ex} \;.\\
\end{aligned}
\end{equation}
It is also possible to restrict
\begin{equation}
\begin{aligned}
	H_0: &\Sdot_\sF \to \Sdot_\sF \;, \qquad 
	&A^\dagger(v): &\Sdot_\sF \to \Sdot_\sF \;,\\
	A(v): &\Sdot_\sF \to \Ren^{\Qdot} \;, \qquad 	
	&E_\infty: &\Sdot_\sF \to \Ren^{\Qdot} \;,\\
\end{aligned}
\end{equation}
or to extend
\begin{equation}
	H_0: \FB_{\ex} \to \FB_{\ex} \;, \qquad 
	A^\dagger(v): \FB_{\ex} \to \FB_{\ex} \;.
\end{equation}
Further, the statements for $ H_0 $ equally hold true for $ H_{0, x} $ and $ H_{0, y} $.\\
\label{prop:operatorextension}
\end{proposition}
\begin{proof}
Well-definedness and the mapping properties of $ A^\dagger(v) $ and $ A(v) $ are a consequence of Lemma \ref{lem:aadaggerwelldefined}: The function $ v(\bk) $ in \eqref{eq:adaggerp} and \eqref{eq:ap} can be seen as a special case of $ v(\bp, \bk) $ in \eqref{eq:aadaggervpk}. Taking the finite linear sum over $ j \in \{1, \ldots, M\} $ sustains the mapping properties.\\

The operator $ H_0 $ as in \eqref{eq:H0} just multiplies with a function in $ \Sdot_\sF $ in momentum space, so for $ \Psi \in \Sdot_\sF $, we also have $ H_0 \Psi \in \Sdot_\sF $. The same holds true for $ H_{0, x} $ and $ H_{0, y} $. By an analogous argument, $ H_0 $ can be extended to $ H_0: \Ren^{\Qdot} \to \Ren^{\Qdot} $.\\

Finally, we consider $ E_\infty \Psi $ for $ \Psi \in \Sdot_\sF $. By \eqref{eq:Ep}, at each fixed $ (\bP,\bK) \in \Qdot $ the expression $ E_\infty \Psi $ is defined as a finite sum over terms $ E_1(\bp) \Psi(\bP,\bK) \in \Ren $. So indeed, $ E_\infty \Psi \in \Ren^{\Qdot} $.\\

Extensions to $ \FB $ or $ \FB_{\ex} $ can again be done by $ \eRen $-linear combination.\\
\end{proof}

Thus, also the linear operator $ H: \FB \to \FB_{\ex} $ is well-defined.\\

Finally, we prove that the momentum space definition \eqref{eq:aadaggervpk} indeed entails certain canonical commutation relations on the ESS:

\begin{lemma}[Extended CCR]
For $ \phi, \varphi \in \Sdot_1 $, the definitions \eqref{eq:adaggerp} and \eqref{eq:ap} imply the commutation relations
\begin{equation}
	[A_j^\dagger(\varphi), A_{j'}^\dagger(\phi)] = 0 \;, \qquad
	[A_j(\varphi), A_{j'}^\dagger(\phi)] = \begin{cases} \langle \varphi, \phi \rangle \qquad &\text{for } j = j'\\ V_{jj'}(\varphi^* \phi) \qquad &\text{for } j \neq j' \end{cases} \;,
\label{eq:CCR}
\end{equation}
as a strong operator identity. That is, we have operators
\begin{equation}
	[A_j^\dagger(\varphi), A_{j'}^\dagger(\phi)]: \FB \to \FB \;, \qquad
	[A_j(\varphi), A_{j'}^\dagger(\phi)]: \FB \to \FB_{\ex} \;.
\label{eq:CCRoperators}
\end{equation}
Here, the interaction potential $ V_{jj'}: \FB \to \FB_{\ex} $ for momentum transfer from fermion $ j' $ to $ j $ is given by
\begin{equation}
	V_{jj'}(\varphi^* \phi)(\bP, \bK) := \int \varphi^*(\tilde{\bk}) \phi(\tilde{\bk}) \Psi(\bP + (e_{j'} - e_j) \tilde{\bk}, \bK ) \; d\tilde{\bk} \;.
\label{eq:Vjj}
\end{equation}

\label{lem:CCR}
\end{lemma}

\begin{proof}
	By Lemma \ref{lem:aadaggerwelldefined}, the products $ A_j(\varphi) A_{j'}^\dagger(\phi) $ and $ A_{j'}^\dagger(\phi) A_j(\varphi) $ are well-defined as operators $ \FB \to \FB_{\ex} $. A momentum space evaluation renders
\begin{equation}
\begin{aligned}
	&\big( [A_j(\varphi), A_{j'}^{\dagger}(\phi)] \Psi \big) (\bP,\bK)\\
	= &\int \varphi^*(\tilde{\bk}) \phi(\tilde{\bk}) \Psi(\bP + (e_{j'} - e_j) \tilde{\bk}, \bK ) \; d\tilde{\bk}\\
	= &\begin{cases} \langle \varphi,\phi \rangle \Psi(\bP,\bK) \qquad &\text{for } j = j' \\ \int \varphi^*(\tilde{\bk}) \phi(\tilde{\bk}) \Psi(\bP + (e_{j'} - e_j) \tilde{\bk}, \bK ) \; d\tilde{\bk} \qquad &\text{for } j \neq j'
	\end{cases} \;.\\
\end{aligned}
\label{eq:commutatorV}
\end{equation}

Similarly, Lemma \ref{lem:aadaggerwelldefined} establishes that $ A_j^\dagger(\varphi) A_{j'}^\dagger(\phi) $ and $ A_{j'}^\dagger(\phi) A_j^\dagger(\varphi) $ are well-defined operators $ \FB \to \FB $ and a short momentum space calculation verifies that they are equal.\\

\end{proof}

The operator $ V_{jj'} $ defined above can be seen as an interaction potential operator. Under an inverse Fourier transform $ \cF^{-1} $, it amounts to a multiplication operator in position space via
\begin{equation}
	(V_{jj'}(\varphi^* \phi) \Psi)(\bX, \bY) = \cF^{-1}(\varphi^* s)(\bx_j - \bx_{j'}) \Psi(\bX, \bY) \;,
\label{eq:Vjjpositionspace}
\end{equation}
provided that $ \cF^{-1}(\varphi^* \phi) $ exists (e.g., for exact scaling degrees $ \beta_\varphi + \beta_s > -d \Rightarrow \varphi^* s \in \cS'(\RRR^d) $).\\
The following property about $ V_{jj'} $ will become useful in later proofs:

\begin{lemma}
If either of the functions $ \phi, \varphi \in \Sdot_1 $ is an element of $ C_c^\infty(\RRR^d \setminus \{0\}) $, then we even have
\begin{equation}
	V_{jj'}(\varphi^* \phi): \FB \to \FB \;.
\end{equation}
\label{lem:Vrange}
\end{lemma}

\begin{proof}
First, let us consider $ \Psi \in \Sdot_\sF $. Without loss of generality, assume that $ \varphi \in C_c^\infty(\RRR^d \setminus \{0\}) $, so $ \varphi $ is compactly supported, and the function
\begin{equation*}
	\tilde{\bk} \mapsto \phi(\tilde{\bk}) \Psi(\bP + (e_{j'} - e_j) \tilde{\bk},\bK) \;,
\end{equation*}
is smooth everywhere on that support. So the function
\begin{equation*}
	\tilde{\bk} \mapsto \varphi^*(\tilde{\bk}) \phi(\tilde{\bk}) \Psi(\bP + (e_{j'} - e_j) \tilde{\bk} ,\bK) \quad \text{is in} \quad C_c^\infty \;,
\end{equation*}
and the integral over it converges to a $ \CCC $-number.\\
We now show that this number depends smoothly on $ (\bP,\bK) \in \Qdot $: Consider any multi-index $ \alpha $ corresponding to a derivative $ \partial^\alpha $ composed of arbitrarily many partial derivatives $ \partial_{\bp_j}, \partial_{\bk_{\ell}} $ with $ j, \ell \in \NNN $. Then also $ \partial^\alpha \Psi \in \Sdot_\sF $ and by the same arguments as above
\begin{equation}
	\tilde{\bk} \mapsto \varphi^*(\tilde{\bk}) \phi(\tilde{\bk}) \partial^\alpha \Psi(\bP + (e_{j'} - e_j) \tilde{\bk},\bK) \quad \text{is in} \quad C_c^\infty \;.
\end{equation}
So the integral converges absolutely, derivative and integral commute, and we obtain $ \partial^\alpha (V_{jj'}(\varphi^* \phi) \Psi)(\bP,\bK) \in \CCC $ for any multi-index $ \alpha $. Hence, $ (V_{jj'}(\varphi^* \phi) \Psi)(\bP,\bK) $ is smooth at $ (\bP,\bK) \in \Qdot $.\\

Polynomial scaling of $ V_{jj'}(\varphi^* \phi) \Psi $ can be seen as follows: Under the coordinate rotation $ \bp_+ := \bp_j + \bp_{j'} $ and $ \bp_- := \bp_j - \bp_{j'} $, the expression $ V_{jj'} \Psi $ becomes a convolution in $ \bp_- $ of a polynomially scaling function with the function $ \varphi^* \phi \in C_c^\infty $. Polynomial scaling bounds are neither affected by the coordinate rotation nor by the convolution with a $ C_c^\infty $-function. So indeed, $ V_{jj'}(\varphi^* \phi) \Psi \in \Sdot_\sF $, and we have that $ V_{jj'}(\varphi^* \phi): \Sdot_\sF \to \Sdot_\sF $.\\

This mapping property extends to $ V_{j'j}(\varphi^* \phi): \FB \to \FB $ by $ \eRen $-linearity.\\
\end{proof}

\paragraph{Remarks.}

\begin{enumerate}
\setcounter{enumi}{\theremarks}
\item \label{rem:aa} In \eqref{eq:CCR}, we have not included the commutation relation for annihilation operators $ [A_j(\varphi), A_{j'}(\phi)] = 0 $. The reason is that products of two or more annihilation operators are not necessarily defined, since we only have $ A_j(\varphi): \FB \to \FB_{\ex} $. An arbitrary product $ A_j(\varphi) A_{j'}(\phi) \Psi $ with $ \Psi \in \Sdot_\sF $ contains a double integral
\begin{equation}
\begin{aligned}
	&\big( A_j(\varphi) A_{j'}(\phi) \Psi \big)(\bP, \bK)\\
	= &\sqrt{(N+1)(N+2)} \int \int \varphi(\tilde\bk)^* \phi(\tilde\bk')^* \Psi(\bP - e_j \tilde\bk - e_{j'} \tilde\bk', \bK, \tilde\bk, \tilde\bk') \; d \tilde\bk d \tilde\bk' \;,
\end{aligned}
\label{eq:doubleintegral}
\end{equation}
where the first integral produces a configuration space function $ \Qdot \to \Ren $. And a second integral over such a function can generally not be interpreted as an element in $ \Ren $.\\

A definition of such operator products would require a modification of $ \FB_{\ex} $ such that it also accommodates general divergent integrals over multiple coordinates of a $ \Sdot_\sF $-function as in \eqref{eq:doubleintegral}. We postpone the investigation of such choices for $ \FB_{\ex} $ to future investigations.\\
\end{enumerate}
\setcounter{remarks}{\theenumi}

\section{Dressing on the ESS}	
\label{sec:Fren}

Our next step is to define a dressing operator $ W(s) $ with $ s \in \Sdot_1 $. To do so, a naive approach would be to start from the expression $ W(s) = e^{A_M^\dagger(s) - A_M(s)} \ldots e^{A_1^\dagger(s) - A_1(s)} $ (with $ M $ depending on the fermion sector) and expand the exponentials into series
\begin{equation*}
	e^{A_j^\dagger(s) - A_j(s)} = \sum_{n \in \NNN_0}\frac{(A_j^\dagger(s) - A_j(s))^n}{n!} \;,
\end{equation*}
which can be multiplied out. There are two difficulties with this approach:
\begin{itemize}
\item Some terms in the resulting sum contain two or more annihilation operators $ A(s) $ (see Remark \ref{rem:aa}).
\item There is an infinite number of such terms.
\end{itemize}
So with the current definition of $ \FB_{\ex} $ we cannot simply define $ W(s) $ as an operator $ \FB \to \FB_{\ex} $. Instead, we pursue a different approach and define $ W(s) : \cD_W \to \sF_{\ex} $. Here, we choose $ \cD_W \subset \FB_{\ex} $ in \eqref{eq:cDW}, such that $ \cD_\sF $ (which is a symmetrized version of $ \cD_W \cap L^2 $) is dense in $ \sF $.\\

If we consider $ W_\sF(\varphi): \sF \to \sF, \varphi \in \fh $ as a unitary operator on Fock space, together with some suitable $ \Psi \in \sF $, then there is a well-defined expression (similar to \eqref{eq:Ws1}) for $ W_\sF(\varphi) \Psi $ as an $ L^2 $-function on momentum--configuration space. For $ \varphi $ replaced by $ s \in \Sdot_1 $, we may then define $ W(s) \Psi \in \FB $ based on the momentum space expression of $ W_\sF(\varphi) \Psi $.\\

The domain $ \cD_W $ in \eqref{eq:cDW} is generated by vectors $ \Psi_C $ of the form
\begin{equation*}
	\Psi_C = W_1(\varphi) A_j^\dagger(v) \Psi_m \qquad \text{or} \qquad
	\Psi_C = X W_1(\varphi) \Psi_m \;,
\end{equation*}
where
\begin{itemize}
\item $ \Psi_m = \Psi_{mx} \otimes \Omega_y \in \sF $ with $ \Psi_{mx} \in \cS(\cQ_x) $ (i.e., we have a Schwartz function) and $ \Omega_y $ describing the boson field in the vacuum.

\item $ W_1(\varphi) = e^{A_1^\dagger(\varphi) - A_1(\varphi)}, \varphi \in \fh \cap \Sdot_1 $ describes a dressing induced by only the first fermion.

\item $ A_j^\dagger(v), v \in \Sdot_1 $ describes creation by only the fermion with number $ j \in \{1, \ldots, M\} $.

\item $ X $ is a linear combination of $ \Ren_1 $-constants and operators $ V_{jj'}(v^* s) $ that formally commute with $ W(s) $.

\end{itemize}
When setting $ X = 1 $ (which formally commutes with $ W(s) $), we see that $ \cD_W $ contains vectors of the kind $ W_1(\varphi) \Psi_m $. We show in Lemma \ref{lem:W1equivalence}, that these are equal to $ W_{\sF, 1}(\varphi) \Psi_m $ and, after symmetrization, span a dense subspace of $ \sF $ (Lemma \ref{lem:coherentdense}). This will later allow for a dense definition of $ \widetilde{H} $.\\

The definition of $ W(s) \Psi_C $ now exactly works as explained above: We establish a momentum space expression in case $ s, v \in \fh \cap \Sdot_1 $ using Lemma \ref{lem:commutationW}. Then, we generalize to $ s, v \in \Sdot_1 $ by a suitable definition. As discussed in the introduction, we remove certain exponential factors of the form $ e^{V_{jj'}} $ in an ad-hoc modification. Thus, $ W_j(\varphi_j) $ differs from $ W_{\sF, j}(\varphi_j) $ for $ \varphi_j \in \fh \cap \Sdot_1 $. However, we show in Lemma \ref{lem:Xcommute} that any $ V_{j'j''} $ commutes with $ W_{\sF, j}(\varphi_j) $, so the omitted factor $ e^{V_{jj'}} $ can heuristically be ``pulled into any position''. Heuristically, this factor disappears when performing a dressing, which justifies the omission within the computation of $ \widetilde{H} $.\\

\subsection{Bosonic Dressing $ W_y(\varphi) $}
\label{subsec:coherent}

The upcoming proofs are based on some well-known facts about coherent states, where only bosons are present, i.e., $ \Psi_y \in \sF_y $. The momentum space representation of the bosonic creation and annihilation operators $ a^\dagger(v), a(v) $ with form factor $ v \in \fh $ given by
\begin{equation}
\begin{aligned}
	(a^{\dagger}(v) \Psi_y)(\bK) &:= \frac{1}{\sqrt{N}} \sum_{\ell = 1}^N v(\bk_{\ell}) \Psi_y(\bK \setminus \bk_{\ell}) \;,\\
	(a(v) \Psi_y)(\bK) &:= \sqrt{N+1} \int v(\tilde{\bk})^* \Psi_y(\bK ,\tilde{\bk}) \; d \tilde{\bk} \;.
\end{aligned}
\label{eq:ay}
\end{equation}

This definition implies that the commutation relations $ [a(v_1), a^\dagger(v_2)] = \langle v_1, v_2 \rangle $ hold as a strong operator identity on a dense domain in $ \sF_y $. These operators $ a^{\dagger}(v), a(v) $ substantially differ from $ A^{\dagger}(v), A(v) $ defined in \eqref{eq:adaggerp}, \eqref{eq:ap}, which create one boson at each position of a fermion, whereas in $ \sF_y $, there are no fermions.\\

Using $ a^{\dagger}(v), a(v) $, we may define a set of displacement operators
\begin{equation}
	W_y(\varphi) = e^{a^{\dagger}(\varphi) - a(\varphi)} \;,
\label{eq:Wy}
\end{equation} 
and coherent states $ \Psi_y(\varphi):= W_y(\varphi) \Omega_y $. Indeed, $ W_y(\varphi) $ is well-defined, since for $ \varphi \in \fh $, we have the bounds
\begin{equation}
	\big\Vert a^{\dagger}(\varphi) \Psi_y \big\Vert \le \big\Vert (N+1)^{1/2} \Psi_y \big\Vert \; \Vert \varphi \Vert \;, \qquad
	\big\Vert a(\varphi) \Psi_y \big\Vert \le \big\Vert N^{1/2} \Psi_y \big\Vert \; \Vert \varphi \Vert \;,
\label{eq:nelsonestimate}
\end{equation}
so the exponential series \eqref{eq:Wy} converges in norm
\begin{equation}
\begin{aligned}
	 \Psi_y(\varphi) = W_y(\varphi) \Omega_y := &\sum_{k=0}^\infty \frac{1}{k!} (a^{\dagger}(\varphi) - a(\varphi))^k \Omega_y\\
	 \text{where} \quad \left\Vert \frac{1}{k!} (a^{\dagger}(\varphi) - a(\varphi))^k \Omega_y \right\Vert \le &\frac{1}{k!} \big\Vert 2^k (k!)^{1/2} \Omega_y \big\Vert \; \Vert \varphi \Vert^k = (k!)^{-1/2} \Vert 2 \varphi \Vert^k \;.
\end{aligned}
\label{eq:expconvergencefock}
\end{equation}

Here, we used in the second line that $ (a^{\dagger}(\varphi) - a(\varphi))^k \Omega_y $ occupies only sectors in Fock space with $ \le k $ particles, so we can set $ N \le (k-1) $ in \eqref{eq:nelsonestimate}. Subsequent application of \eqref{eq:nelsonestimate} leads to the factor $ (k!)^{1/2} $.\\
In momentum space representation and in terms of tensor products,
\begin{equation}
	\Psi_y(\varphi)(\bK) = e^{-\frac{\Vert \varphi \Vert^2}{2}} \frac{1}{\sqrt{N!}} \left( \prod_{\ell = 1}^N \varphi(\bk_{\ell}) \right) \quad \Leftrightarrow \quad \Psi_y(\varphi) = \sum_{N=0}^\infty \frac{e^{-\frac{\Vert \varphi \Vert^2}{2}}}{\sqrt{N!}} \; \underbrace{\varphi \otimes_S \ldots \otimes_S \varphi}_{N \text{ times}} \;.
\label{eq:coherentstatep}
\end{equation}

A calculation similar to \eqref{eq:expconvergencefock} verifies that $ W_y(\varphi) $ can be defined on all $ \Psi_y $ with finite particle number, i.e., $ \Psi_y \in \sF_{\fin, y} $ with
\begin{equation}
	\sF_{\fin, y} := \big\{ \Psi_y \in \sF_y \; \big\vert \; \exists N_{\max} \in \NNN: \; \Psi_y^{(N)} = 0 \; \forall N > N_{\max}  \big\} \;.
\label{eq:sFfiny}
\end{equation}
And by continuity, we can thus define $ W_y(\varphi) $ on all of $ \sF_y $.\\

Moreover, the $ W_y(\varphi) $ are unitary, so $ \big\Vert \Psi_y(\varphi) \big\Vert = 1 $, and they satisfy the Weyl relations. Further, it is a well-known fact that the span of the set of coherent states $ \{\Psi_y(\varphi) \; \mid \; \varphi \in \fh\} $ is dense in $ \sF_y $ \cite[Prop. 12]{I02}. In addition,
\begin{equation}
	\langle \Psi_y(\varphi_1),\Psi_y(\varphi_2) \rangle_{\sF_y} = e^{-\frac{\Vert \varphi_1 \Vert^2 + \Vert \varphi_2 \Vert^2}{2}} e^{\langle \varphi_1,\varphi_2 \rangle} \;,
\label{eq:scalarproductcoherent}
\end{equation}
so
\begin{equation}
	\big\Vert \Psi_y(\varphi_1) - \Psi_y(\varphi_2) \big\Vert^2 =
	\big\Vert \Psi_y(\varphi_1) \big\Vert^2 + \big\Vert \Psi_y(\varphi_2) \big\Vert^2 - 2 e^{-\frac{\Vert \varphi_1 \Vert^2 + \Vert \varphi_2 \Vert^2}{2}} \Re \big( e ^{\langle \varphi_1,\varphi_2 \rangle} \big) \;.
\end{equation}
As $ \fh $ is separable, we can find a countable dense set $ (\varphi_n)_{n \in \NNN} $ in $ \fh $, such that $ (\Psi_y(\varphi_n))_{n \in \NNN} $ is also dense in the coherent states. So
\begin{equation}
	\span \big\{ \Psi_y(\varphi_n) \; \big\vert \; n \in \NNN \big\} \quad \text{is dense in } \sF_y \;.
\label{eq:coherentdense}
\end{equation}

Since $ \fh \cap \Sdot_1 \supset C_c^\infty $ is dense in $ \fh $ the above statements hold true, if we replace $ \varphi_n \in \fh $ by $ \varphi_n \in \fh \cap \Sdot_1 $.\\

\subsection{Dressing Induced by Fermions $ W_1(\varphi) $}
\label{subsec:coherent2}

\subsubsection{Density}

Now, let us turn to the case with two particle species, i.e., $ \sF = \sF_x \otimes \sF_y $ and $ A^{\dagger}, A $ instead of $ a^{\dagger}, a $. In order to make an analogous statement to \eqref{eq:coherentdense} work, we restrict from $ A^{\dagger}(\varphi) = \sum_{j=1}^M A_j^\dagger(\varphi) $ to $ A_1^{\dagger}(\varphi) $, i.e., creation by only the first fermion. Just as $ W_y(\varphi) $, the Fock space operator $ W_{\sF, 1}(\varphi) = e^{A_1^\dagger(\varphi) - A_1^\dagger(\varphi)} $ can be defined in analogy to $ W_y(\varphi) $. The operators $ A_1^{\dagger}(\varphi) $ and $ W_{\sF, 1}(\varphi) $ break the fermionic symmetry, so they map $ \sF \to L^2(\cQ_x) \otimes \sF_y $ (instead of $ \sF \to \sF $). We will therefore proceed by considering vectors $ \Psi \in L^2(\cQ_x) \otimes \sF_y $, i.e., with only bosonic exchange symmetry.\\

As a ``cyclic set'' of vectors $ \Psi_m $, used for generating further domains, we choose
\begin{equation}
	\cC_{WS} := \cS(\cQ_x) \otimes \{\Omega_y\} \subset L^2(\cQ_x) \otimes \sF_y \;.
\label{eq:cCWS}
\end{equation}
Since the boson field is in the vacuum, $ \cC_{WS} $ is obviously not dense in $ L^2(\cQ_x) \otimes \sF_y $. However, it generates a dense subspace by applying operators $ W_{\sF, 1}(\varphi) $ to it. The momentum representation after such an application is given by
\begin{equation}
	 (W_{\sF, 1}(\varphi) \Psi_m)(\bP,\bK) = \frac{e^{-\frac{\Vert \varphi \Vert^2}{2}}}{\sqrt{N!}} \left( \prod_{\ell = 1}^N \varphi(\bk_{\ell}) \right) \Psi_{mx} \left( \bp_1 + \sum_{\ell = 1}^N \bk_{\ell}, \bp_2, \ldots, \bp_M \right) \;.
\label{eq:W1s}
\end{equation}

\begin{definition}
The \textbf{span of coherent states created by the first fermion} is given by
\begin{equation}
	\cD_{WS} := \span \big\{ W_{\sF, 1}(\varphi)\Psi_m \; \big\vert \; \varphi \in \fh \cap \Sdot_1, \; \Psi_m \in \cC_{WS} \big\} \;.
\label{eq:cDWS}
\end{equation}
\label{def:sco}
\end{definition}
With \eqref{eq:cDWS}, the following is true.
\begin{lemma}
$ (S_- \otimes \id)[\cD_{WS}] $ is dense in $ \sF = \sF_x \otimes \sF_y $.
\label{lem:coherentdense}
\end{lemma}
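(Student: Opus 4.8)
The plan is to recognize that, in the momentum representation, each generator $W_1(\varphi)\Psi_m$ of $\cD_{W0}$ is nothing but a tensor product (fermion function) $\otimes$ (bosonic coherent state), post-composed with a single fixed unitary that transfers the total boson momentum onto the first fermion. Once this is observed, density will follow from the density of coherent states in $\sF_y$, the density of $C_c^\infty(\cQ_x)$ in $L^2(\cQ_x)$, and a final application of the antisymmetrization projection $S_-\otimes I$.

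Concretely, I would introduce the momentum-shift operator $U$ on $L^2(\cQ_x)\otimes\sF_y$ by
\[
(U\Phi)(\bP,\bK) := \Phi\Big(\bp_1 + \sum_{l=1}^N \bk_l, \bp_2,\ldots,\bp_M;\, \bK\Big).
\]
On each sector this is a shear of $\RRR^{Md+Nd}$ with unit Jacobian, hence measure preserving, so $U$ is unitary; and since the shift amount $\sum_l \bk_l$ is symmetric in the boson momenta, $U$ respects the bosonic symmetry and maps $L^2(\cQ_x)\otimes\sF_y$ onto itself. Comparing the explicit action \eqref{eq:W1s} with the coherent-state formula \eqref{eq:coherentstatep}, one reads off, for $\Psi_m = f\otimes\Omega_y$ with $f\in C_c^\infty(\cQ_x)$,
\[
W_1(\varphi)\,\Psi_m = U\big(f\otimes\Psi(\varphi)\big),
\]
so the generators of $\cD_{W0}$ are precisely the $U$-images of the elementary tensors $f\otimes\Psi(\varphi)$.

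From here the argument is short. By \eqref{eq:coherentdense}, and the remark that it persists when $\varphi$ is restricted to the dense subset $\Sdot_1\cap\cH_1\subset\cH_1$, the span of $\{\Psi(\varphi):\varphi\in\Sdot_1\cap\cH_1\}$ is dense in $\sF_y$, while $C_c^\infty(\cQ_x)$ is dense in $L^2(\cQ_x)$; hence the algebraic tensor product
\[
\text{span}\{f\otimes\Psi(\varphi): f\in C_c^\infty(\cQ_x),\ \varphi\in\Sdot_1\cap\cH_1\}
\]
is dense in $L^2(\cQ_x)\otimes\sF_y$. Applying the unitary $U$ preserves density, so $\cD_{W0}=U[\,\text{span}\{f\otimes\Psi(\varphi)\}\,]$ is itself dense in $L^2(\cQ_x)\otimes\sF_y$. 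Finally, $S_-\otimes I$ is a bounded orthogonal-projection surjection onto $\sF_x\otimes\sF_y=\sF$, and a continuous linear surjection carries dense sets to dense sets in its image; therefore $(S_-\otimes I)[\cD_{W0}]$ is dense in $\sF$, as claimed.

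The conceptually important and slightly counterintuitive point, rather than any hard estimate, is that although the cyclic set $\cC_{W0}$ is very small (boson vacuum only, no fermionic antisymmetry), sweeping $\varphi$ through a dense family of coherent states already recovers a dense subspace of the full $L^2(\cQ_x)\otimes\sF_y$, with the antisymmetrization needed only at the very end. The main technical care I expect to be needed is (i) verifying that the action \eqref{eq:W1s}, including its normalization and the symmetrization $\sigma$-sum, really reproduces $U(f\otimes\Psi(\varphi))$ and not some per-sector reweighting of it, and (ii) justifying that the algebraic tensor product of the two dense families is dense in the Hilbert-space tensor product. If one preferred to avoid introducing $U$, an equivalent route is to take $\Phi\in\sF$ orthogonal to every $(S_-\otimes I)W_1(\varphi)\Psi_m$; using $S_-^*=S_-$ this means $\Phi\perp W_1(\varphi)\Psi_m$ in $L^2(\cQ_x)\otimes\sF_y$, and the same coherent-state completeness then forces $\Phi=0$.
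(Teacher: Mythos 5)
Your proof is correct and follows essentially the same route as the paper: your shear operator $U$ is exactly the inverse of the paper's ``disentangling operator'' $D_1$, and both arguments reduce the claim to density of $\mathrm{span}\{f\otimes\Psi(\varphi)\}$ in $L^2(\cQ_x)\otimes\sF_y$ via unitarity of the shear, finishing with the bounded surjection $S_-\otimes I$. The only cosmetic difference is the direction in which the momentum shift is applied (you map tensor products onto dressed states, the paper maps dressed states onto tensor products), which is immaterial.
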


The proof is based on denseness of coherent states in $ \sF_y $ and can be found in Appendix \ref{sec:coherentdense}.\\

\subsubsection{Dressed One-Boson States}

Just as $ W_{\sF, 1}(\varphi) $, we may define $ W_{\sF, j}(\varphi), j \in \NNN $ and $ W_\sF(\varphi) = W_{\sF, M}(\varphi) \ldots W_{\sF, 1}(\varphi) $ with $ \varphi \in \fh \cap \Sdot_1 $ on each $ M $-fermion sector. These operators are all unitary on $ L^2(\cQ_x) \otimes \sF_y $ and well-defined on $ \sF $.\\
We will now establish some useful commutation relations on the dense subspace
\begin{equation}
	\sF_{\fin} := \sF_x \otimes \sF_{\fin, y} \subset \sF \;,
\label{eq:sFfinESS}
\end{equation}
with $ \sF_{\fin, y} $ defined in \eqref{eq:sFfiny}.\\

\begin{lemma}[Commutation relations for $ W_\sF $]
For $ \varphi, \phi \in \fh \cap \Sdot_1 $, we have the following strong operator identities\footnote{By a strong operator identity $ A = B $ for $ A, B: \sF \to L^2(\cQ_x) \otimes \sF_y $, we mean that $ A \Psi = B \Psi \; \forall \Psi \in \sF $, even if possibly $ A \Psi, B \Psi \notin \sF $.} on $ \sF_{\fin} $:
\begin{equation}
\begin{aligned}
	W_{\sF, j}(\varphi) A_{j'}^{\dagger}(\phi) &= \begin{cases} \big( A_{j'}^{\dagger}(\phi) - \langle \varphi,\phi \rangle \big) W_{\sF, j}(\varphi) \quad &\text{if } j = j' \\ \big( A_{j'}^{\dagger}(\phi) - V_{jj'}(\varphi^* \phi) \big) W_{\sF, j}(\varphi) \quad &\text{if } j \neq j' \end{cases} \;,\\
	W_\sF(\varphi) A_{j'}^{\dagger}(\phi) &= \big( A_{j'}^{\dagger}(\phi) - \langle \varphi,\phi \rangle - V_{\bullet j'}(\varphi^* \phi) \big) W_\sF(\varphi) \;,\\
	W_{\sF, j}(\varphi) A^{\dagger}(\phi) &= \big( A^{\dagger}(\phi) - \langle \varphi,\phi \rangle - V_{j \bullet}(\varphi^* \phi) \big) W_{\sF, j}(\varphi) \;,\\
	W_\sF(\varphi) A^{\dagger}(\phi) &= \big( A^{\dagger}(\phi) - M \langle \varphi,\phi \rangle - V(\varphi^* \phi) \big) W_\sF(\varphi) \;,\\
\end{aligned}
\label{eq:WAdaggers}
\end{equation}
as well as
\begin{equation}
\begin{aligned}
	W_{\sF, j'}(\varphi) A_j(\phi) &= \begin{cases} \big( A_j(\phi) - \langle \phi,\varphi \rangle \big) W_{\sF, j'}(\varphi) \quad &\text{if } j = j' \\ \big( A_j(\phi) - V_{jj'}(\phi^* \varphi) \big) W_{\sF, j'}(\varphi) \quad &\text{if } j \neq j' \end{cases} \;,\\
	W_\sF(\varphi) A_j(\phi) &= \big( A_j(\phi) - \langle \phi,\varphi \rangle - V_{j \bullet}(\phi^* \varphi) \big) W_\sF(\varphi) \;,\\
	W_{\sF, j'}(\varphi) A(\phi) &= \big( A(\phi) - \langle \phi,\varphi \rangle - V_{\bullet j'}(\phi^* \varphi) \big) W_{\sF, j'}(\varphi) \;,\\
	W_\sF(\varphi) A(\phi) &= \big( A(\phi) - M \langle \phi,\varphi \rangle - V(\phi^* \varphi) \big) W_\sF(\varphi) \;,\\
\end{aligned}
\label{eq:WAs}
\end{equation}
with $ V_{jj'} $ defined in \eqref{eq:Vjj} and\footnote{Here, $ \sum_{j:j \neq j'} $ is to be understood as a sum over only $ j $, while $ \sum_{j \neq j'} $ is a sum over both $ j $ and $ j' $. The second kind of sum will appear more often, so we give it a shorter notation.}
\begin{equation}
\begin{aligned}
	V_{\bullet j'}(\varphi^* \phi) &:= \sum_{j:j \neq j'} V_{jj'}(\varphi^* \phi) \;, \qquad
	V_{j \bullet}(\varphi^* \phi) 	:= \sum_{j':j \neq j'} V_{jj'}(\varphi^* \phi) \;,\\
	V(\varphi^* \phi) 				&:= \sum_{j \neq j'} V_{jj'}(\varphi^* \phi) \;.\\
\end{aligned}
\label{eq:V1}
\end{equation}

\label{lem:commutationW}
\end{lemma}

The proof of Lemma \ref{lem:commutationW} is straightforward by applying the CCR. We present it in Appendix \ref{sec:commutationW}.\\

\subsection{Extended Dressing $ W(s) $}
\label{subsec:extdress}

We would now like the relations in \eqref{eq:WAdaggers} to also hold true if we replace $ \varphi, \phi \in \fh \cap \Sdot_1 $ by $ s, v \in \Sdot_1 $. In that case, the Fock space operators $ W_{\sF, j} $ turn into extended operators $ W_j $. More precisely, it would be desirable to have
\begin{equation}
	W_j(s) A_{j'}^{\dagger}(v) \Psi_m = \begin{cases} \big( A_{j'}^{\dagger}(v) - \langle s,v \rangle \big) W_j(s) \Psi_m \quad &\text{if } j = j' \\ \big( A_{j'}^{\dagger}(v) - V_{jj'}(s^* v) \big) W_j(s) \Psi_m \quad &\text{if } j \neq j' \end{cases} \;,\\
\label{eq:WAdaggerextended}
\end{equation}
for $ \Psi_m \in \cC_{WS} $. By Lemma \ref{lem:CCR}, and since $ \langle s, v \rangle \in \Ren_1 $, we may obviously interpret
\begin{equation}
	V_{jj'}(s^* v): \FB \to \FB_{\ex} \;, \qquad
	\langle s, v \rangle: \FB \to \FB_{\ex} \;.
\end{equation}
So if we can define $ W_j(s) \Psi_m \in \FB $, then the right-hand side of \eqref{eq:WAdaggerextended} serves as a definition for $ W_j(s) A_{j'}^{\dagger}(v) \Psi_m \in \FB_{\ex} $. And if we can further define products like $ W(s) W_1(\varphi) \Psi_m = W_M(s) \ldots W_1(s) W_1(\varphi) \Psi_m \in \FB $, then a generalization of \eqref{eq:WAdaggers} may even be used to define $ W(s) W_1(\varphi) A_{j'}^{\dagger}(v) \Psi_m \in \FB_{\ex} $.\\
However, before doing so, it is first necessary to specify what $ W_1(s) W_1(\varphi) $ is, which we will do by introducing some ``extended Weyl relations''.\\

\subsubsection{Extended Weyl Relations}
\label{subsec:WB}

In order to treat products of factors $ W_j(s), s \in \Sdot_1 $, we introduce an extended Weyl algebra $ \WB $ that is generated by all $ W_j(s) $ and taken over the field $ \eRen $ (as in Definition \ref{def:eren}). Recall that $ \fc \in \eRen $ is a fraction of linear combinations of exponentials $ e^\fr $, with $ \fr \in \Ren_1 $ being a possibly divergent integral (see Definition \ref{def:ren1}). Multiplication on $ \WB $ is defined by the Weyl relations
\begin{equation}
\begin{aligned}
	W_j(s)^{-1} &= W_j(-s) \;,\\
	W_j(s_1) W_j(s_2) &= e^{-\frac{i}{2} \sigma(s_1,s_2)} W_j(s_1+s_2) \;,\\
\end{aligned}
\label{eq:WeylrelationsESS}
\end{equation}
with symplectic form
\begin{equation}
\begin{aligned}
	\sigma &= \Sdot_1 \times \Sdot_1 \to \Ren_1 \;,\\
	(s_1,s_2) &\mapsto \langle s_1,s_2 \rangle - \langle s_2,s_1 \rangle \;.\\
\end{aligned}
\label{eq:symplecticform}
\end{equation}
Note that $ e^{-\frac{i}{2} \sigma(s_1,s_2)} = e^{\Im\langle s_1,s_2 \rangle} \in \eRen $ is not necessarily a complex number.\\

This $ \WB $ can be seen as an ``almost-extension'' of the usual Weyl algebra generated by $ \{W_j(s) \; \mid \; s \in \fh \} $ (strictly speaking, it only extends some Weyl algebra $ \cW_0 $ generated by $ \{W_j(s) \; \mid \; s \in \fh \cap \Sdot_1\} $).\\

The definition of the extended Weyl algebra now allows us to write
\begin{equation}
\begin{aligned}
	W(s) W_1(\varphi) = &W_M(s) \ldots W_2(s) W_1(s) W_1(\varphi)\\
	= &e^{\Im\langle s,\varphi \rangle} W_M(s) \ldots W_2(s) W_1(s + \varphi) \;.\\
\end{aligned}
\label{eq:WW1rewriting}
\end{equation}
So $ W(s) W_1(\varphi) $ can be brought into the form $ W_M(s_M) \ldots W_1(s_1) $ times an $ \eRen $-factor, which is the same on each $ M $-fermion sector.\\

\subsubsection{Extended Dressing on Coherent States}
\label{subsec:dresssco}

In order to define vectors of the kind $ W_M(s_M) \ldots W_1(s_1) \Psi_m \in \FB $, we make use of the momentum space definition of $ W_{\sF, M}(\varphi_M) \ldots W_{\sF, 1}(\varphi_1) \Psi_m $ for $ \varphi_{\ell} \in \fh \cap \Sdot_1 $. For two dressing operators with $ j \neq j' $, the Baker--Campbell--Hausdorff formula implies
\begin{equation}
\begin{aligned}
	W_{\sF, j}(\varphi_j) W_{\sF, j'}(\varphi_{j'}) \Psi_m
	&= e^{A_j^\dagger(\varphi_j) - A_j(\varphi_j)} e^{A_{j'}^\dagger(\varphi_{j'}) - A_{j'}(\varphi_{j'})} \Psi_m\\
	&= e^{- \frac{\Vert \varphi_j \Vert^2}{2} - \frac{\Vert \varphi_{j'} \Vert^2}{2}} e^{A_j^\dagger(\varphi_j)} e^{- A_j(\varphi_j)} e^{A_{j'}^\dagger(\varphi_{j'})} \underbrace{e^{- A_{j'}(\varphi_{j'})} \Psi_m}_{= \Psi_m}\\
	&= e^{- \frac{\Vert \varphi_j \Vert^2}{2} - \frac{\Vert \varphi_{j'} \Vert^2}{2} - [A_j(\varphi_j), A_{j'}^\dagger(\varphi_{j'})]} e^{A_j^\dagger(\varphi_j)} e^{A_{j'}^\dagger(\varphi_{j'})} e^{- A_j(\varphi_j)} \Psi_m\\
	&= e^{- \frac{\Vert \varphi_j \Vert^2}{2} - \frac{\Vert \varphi_{j'} \Vert^2}{2} - V_{jj'}(\varphi_j^* \varphi_{j'})} e^{A_j^\dagger(\varphi_j) + A_{j'}^\dagger(\varphi_{j'})} \Psi_m \;.
\end{aligned}
\label{eq:WjWjprime}
\end{equation}
Here, we used that $ V_{jj'} $ commutes with all $ A^\dagger_{j''} $ and $ A_{j''} $, which follows by the same arguments as in the proof of Lemma \ref{lem:Xcommute} below. Thus, all double commutators between $ A $- and $ A^\dagger $-operators vanish. The generalization to arbitrarily many factors, $ W_{\sF, M}(\varphi_M) \ldots W_{\sF, 1}(\varphi_1) \Psi_m $ is straightforward, and we obtain an exponential of constants and $ V_{jj'} $-terms, followed by $ e^{A_M^\dagger(\varphi_M) + \ldots + A_1^\dagger(\varphi_1)} \Psi_m $.\\
We now define $ W_M(\varphi_M) \ldots W_1(\varphi_1) $ by dropping the $ V_{jj'} $-terms in \eqref{eq:WjWjprime}, which yields the following momentum space expression for $ \Psi_m = \Psi_{mx} \otimes \Omega_y \in \cC_{WS} $
\begin{equation}
	(W_M(\varphi_M) \ldots W_1(\varphi_1) \Psi_m)(\bP, \bK) := \frac{1}{\sqrt{N!}} e^{-\sum_{j = 1}^M \frac{\Vert \varphi_j \Vert^2}{2}} \sum_{\sigma} \left( \prod_{\ell = 1}^N \varphi_{\sigma(\ell)}(\bk_{\ell}) \right) \Psi_{mx} \left( \bP' \right) \;,
\label{eq:WWWPsimomentum}
\end{equation}
where the sum over $ \sigma $ runs over all $ M^N $ maps
\begin{equation}
	\sigma: \{1, \ldots, N\} \to \{1, \ldots, M\} \;,
\end{equation}
assigning each boson $ \ell $ to a fermion $ j = \sigma(\ell) $. The shifted momentum, illustrated in Figure \ref{fig:dressingmomentum}, is then
\begin{equation}
	\bP' := \bP + \sum_{\ell} e_{\sigma(\ell )} \bk_{\ell} \;.
\end{equation}
\begin{figure}[hbt]
	\centering
	\begin{tikzpicture}
\begin{feynhand}
	
	\vertex (a1) at (0,0);
	\vertex (a2) at (6,0);
	\vertex (b1) at (0,0.5);
	\vertex (b2) at (6,0.5);
	\vertex (c1) at (6,1);
	\vertex (c2) at (6,1.5);
	\vertex (c3) at (6,2);
	\vertex [dot] (o1) at (1.8,0.5) {};
	\vertex [dot] (o2) at (3.3,0) {};
	\vertex [dot] (o3) at (4.8,0.5) {};
	\propag [fermion] (a1) node[anchor = east]{$ \bp_1 + \bk_2 $} to (a2) node[anchor = west]{$ \bp_1 $};
	\propag [fermion] (b1) node[anchor = east]{$ \bp_2 + \bk_1 + \bk_3 $} to (b2) node[anchor = west]{$ \bp_2 $};
	\propag [photon] (o1) to [out = 70, in = 180] (c1) node[anchor = west]{$ \bk_3 \quad \sigma(3) = 2 $};
	\propag [photon] (o2) to [out = 90, in = 180] (c2) node[anchor = west]{$ \bk_2 \quad \sigma(2) = 1 $};
	\propag [photon] (o3) to [out = 90, in = 180] (c3) node[anchor = west]{$ \bk_1 \quad \sigma(1) = 2 $};
	\draw [decorate,decoration={brace, amplitude=5pt, mirror}]
(9.5,0.8) -- (9.5,2.2)node  [midway,anchor = west,xshift = 5pt] {$ \bK $};
	\draw [decorate,decoration={brace, amplitude=5pt, mirror}]
(9.5,-0.2) -- (9.5,0.7)node  [midway,anchor = west,xshift = 5pt] {$ \bP $};
	\draw [decorate,decoration={brace, amplitude=5pt}]
(-3,-0.2) -- (-3,0.7)node  [midway,anchor = east,xshift = -5pt] {$ \bP' $};
\end{feynhand}
\end{tikzpicture}
	\caption{Visual representation of the momentum shift within the dressing. The wiggly lines represent bosons that get attached to fermions (solid lines). $ \sum_\sigma $ runs over all $ M^N $ possible ways to attach all bosons to the fermions. For a fixed $ \sigma $, we obtain the entries of $ \bP' $ by summing up all particle momenta on the respective fermion line.}
	\label{fig:dressingmomentum}
\end{figure}

\begin{lemma}[Products of $ W $ are well-defined]
Consider a sequence $ (s_j)_{j \in \NNN} \subset \Sdot_1 $ and $ \Psi_m \in \cC_{WS} $. Then, the momentum space definition \eqref{eq:WWWPsimomentum} renders a well-defined vector
\begin{equation}
	W_M(s_M) \ldots W_1(s_1) \Psi_m \in \FB \;,
\label{eq:WWPsim}
\end{equation}
where \eqref{eq:WWPsim} is to be interpreted as a sector-wise definition in $ M \in \NNN $.\\
\label{lem:WjsWjs}
\end{lemma}
\begin{proof}
Copying the momentum space definition \eqref{eq:WWWPsimomentum}, we obtain
\begin{equation}
	(W_M(s_M) \ldots W_1(s_1) \Psi_m)(\bP, \bK) := \frac{1}{\sqrt{N!}} e^{-\sum_{j = 1}^M \frac{\Vert s_j \Vert^2}{2}} \sum_{\sigma} \left( \prod_{\ell = 1}^N s_{\sigma(\ell)}(\bk_{\ell}) \right) \Psi_{mx} \left( \bP' \right) \;.
\label{eq:WWWsPsimomentum}
\end{equation}
Obviously, $\left( \prod_{\ell = 1}^N s_{\sigma(\ell)}(\bk_{\ell}) \right) \Psi_{mx} \left( \bP' \right) $ defines a function in $ \Sdot_\sF $, which is still true after taking the finite sum over $ \sigma $.\\

Further, we have $ \Vert s_j \Vert^2 = \langle s_j, s_j \rangle \in \Ren_1 $, so $ e^{-\sum_{j = 1}^M \frac{\Vert s_j \Vert^2}{2}} \in \eRen $.\\

Therefore, the expression \eqref{eq:WWWPsimomentum} defines an element of $ \FB $.\\
\end{proof}

This already allows us to define $ W_M(s_M) \ldots W_1(s_1) $ on vectors $ \Psi_m \in \cC_{WS} $ with the boson field in the vacuum. In order to define $ W_M(s_M) \ldots W_1(s_1) $ also on a dense domain in $ \sF $, we extend the definition to vectors $ W_{\sF, 1}(\varphi) \Psi_m \in \cD_{WS} $, whose symmetrized span, by Lemma \ref{lem:coherentdense}, is dense in $ \sF $. This extension is done by assuming that $ W_{\sF, 1}(\varphi) $ can be merged into $ W_1(s_1) $, just as $ W_1(\varphi) $ in \eqref{eq:WW1rewriting}.\\
We will also allow for a treatment of state vectors by using the operator $ (S_- \otimes \id) $, which can obviously be extended to $ (S_- \otimes \id): \FB \to \FB $ or $ (S_- \otimes \id): \FB_{\ex} \to \FB_{\ex} $, using the momentum space definition \eqref{eq:S+S-}.\\

\begin{definition}
Let $ (s_j)_{j \in \NNN} \subset \Sdot_1 $. Then, by Lemma \ref{lem:WjsWjs}, copying the momentum space definition \eqref{eq:WWWPsimomentum} results in a well-defined \textbf{product of dressing operators}
\begin{equation}
\begin{aligned}
	W_M(s_M) \ldots W_1(s_1)&: \cD_{WS} \to \FB \;,\\
	W_M(s_M) \ldots W_1(s_1) W_{\sF, 1}(\varphi) \Psi_m &:= e^{\Im \langle s_1, \varphi \rangle} W_M(s_M) \ldots W_1(s_1 + \varphi) \Psi_m \;,
\end{aligned}
\end{equation}
where $ M $ is the respective fermion number on each sector. Further, we define the extension to \textbf{symmetrized vectors}
\begin{equation}
	W_M(s_M) \ldots W_1(s_1): (S_- \otimes \id)[\cD_{WS}] \cup \cD_{WS} \to \FB \;,
\end{equation}
by imposing that $ W_M(s_M) \ldots W_1(s_1) $ shall commute with the symmetrization operator $ (S_- \otimes \id) $.\\
\label{def:WjsWjs}
\end{definition}

With this definition, it is true that
\begin{lemma}
For all $ \varphi \in \fh \cap \Sdot_1 $ and $ \Psi_m \in \cC_{WS} $, it holds that
\begin{equation}
	W_{\sF, 1}(\varphi) \Psi_m = W_1(\varphi) \Psi_m \;,
\label{eq:W1equivalence}
\end{equation}
in terms of momentum space functions.\\
\label{lem:W1equivalence}
\end{lemma}
\begin{proof}
Consider \eqref{eq:WjWjprime} with $ j = 1 $ and $ \varphi_{j'} = 0 $. Then, the $ V_{jj'} $-term vanishes, so no $ V_{jj'} $-terms are dropped when copying momentum space expressions in the transition $ W_{\sF, 1} \to W_1 $ and indeed $ W_{\sF, 1}(\varphi) \Psi_m = W_1(\varphi) \Psi_m $.\\
\end{proof}

\paragraph{Remarks.}

\begin{enumerate}
\setcounter{enumi}{\theremarks}
\item It may seem natural to extend Definition \ref{def:WjsWjs} to a general $ \Psi \in \sF $. By Lemma \ref{lem:coherentdense}, we can write $ \Psi $ as a symmetrized version of $ \Psi' = \sum_{nn'} W_1(\varphi_n) \Psi_{n'} $ with $ \varphi_n \in \fh \cap \Sdot_1 $ and $ \Psi_{n'} \in \cC_{WS} $. In that case, $ W(s) \Psi = \sum_{nn'} W(s) W_1(\varphi_n) \Psi_{n'} $ contains a possibly infinite sum over functions $ \Qdot \to \CCC $, which may not converge.\\
However, our aim is to give a dense definition of $ \widetilde{H}: \sF \supset \dom(\widetilde{H}) \to \sF $, so it suffices to consider the action of $ W(s) $, and $ H W(s) $ on a dense subset of $ \sF $, such as $ (S_- \otimes \id)[\cD_{WS}] $.\\

\item Concerning the renormalization classes: Two ESS vectors $ W(s) W_1(\varphi)\Psi_m $ and $ W(s) W_1(\tilde{\varphi})\Psi_{m'} $ with $ \Psi_m $ and $ \Psi_{m'} $ concentrated on the same $ M $-fermion sector can be added if the wave function renormalizations $ \fc = e^{\fr}, \fr = - \frac{\Vert s \Vert^2}{2} $ belong to the same renormalization factor class, i.e.,
\begin{equation}
\begin{aligned}
	\fr - \tilde{\fr} \in \CCC \quad &\Leftrightarrow \quad \left\vert \; \Vert s + \varphi \Vert^2 - \Vert s + \tilde{\varphi} \Vert^2 + 2\Im \langle s,\varphi - \tilde{\varphi} \rangle \; \right\vert < \infty\\
	&\Leftrightarrow \quad \left\vert 2 \Re \langle s,\varphi - \tilde{\varphi} \rangle + 2 \Im \langle s,\varphi - \tilde{\varphi} \rangle + \underbrace{\Vert \varphi \Vert^2}_{< \infty} - \underbrace{\Vert \tilde{\varphi} \Vert^2}_{< \infty} \right\vert < \infty\\
	& \Leftarrow \quad | \langle s,\varphi - \tilde{\varphi} \rangle | < \infty \;.
\end{aligned}
\label{eq:rclass}
\end{equation}
That means, convergence of the integral $ \int s(\bk)^*(\varphi(\bk) - \tilde{\varphi}(\bk)) \; d\bk $ ensures that the renormalization classes coincide. Note that both $ \Re $ and $ \Im $ above may be infinite, but cancel each other out.\\
\end{enumerate}
\setcounter{remarks}{\theenumi}

\subsubsection{Extended Dressing on One-Boson States}
\label{subsec:dressscofin}

Now, as announced, when replacing $ \varphi, \phi \in \fh \cap \Sdot_1 $ by $ v, s \in \Sdot_1 $ in \eqref{eq:WAdaggers}, we obtain a well-defined right-hand side. This allows for the following extension of dressing operator products

\begin{definition}
Let $ v \in \Sdot_1, (s_j)_{j \in \NNN} \subset \Sdot_1 $ and $ \Psi_m \in \cC_{WS} $. We extend the \textbf{product of dressing operators} to one-boson states via
\begin{equation}
	W_M(s_M) \ldots W_1(s_1) A_{j'}^\dagger(v) \Psi_m \in \FB_{\ex} \;,
\end{equation}
where $ M $ is the respective fermion number on each sector, via
\begin{equation}
\begin{aligned}
	W_M(s_M) \ldots W_1(s_1) A_{j'}^\dagger(v) \Psi_m
	:= &\left( A_{j'}^\dagger(v) - \sum_{j = 1}^M X_j \right) W_M(s_M) \ldots W_1(s_1) \Psi_m \\
	\text{with} \quad X_j = &\begin{cases} \langle s,v \rangle \quad &\text{if } j = j' \\ V_{j j'}(s^* v) \quad &\text{if } j \neq j' \end{cases}\;.
\end{aligned}
\label{eq:WjsWjsAdagger}
\end{equation}
This operator can further be extended to \textbf{symmetrized vectors} by imposing that $ W_M(s_M) \ldots W_1(s_1) $ shall commute with the symmetrization operator $ (S_- \otimes \id) $.
\label{def:WjsWjsAdagger}
\end{definition}

It is easy to see that the right-hand side of \eqref{eq:WjsWjsAdagger} makes sense: By Lemma \ref{lem:WjsWjs}, we have $ W_M(s_M) \ldots W_1(s_1) \Psi_m \in \FB $. Lemma \ref{lem:aadaggerwelldefined} implies that $ A_{j'}^\dagger(v): \FB \to \FB $ and by Lemma \ref{lem:CCR} and $ \langle s,v \rangle \in \Ren_1 $, we have that $ X_j: \FB \to \FB_{\ex} $.\\

Heuristically, the factors $ X_j $ now commute with $ W_{j'}(s) $, since
\begin{lemma}
For $ \varphi, \varphi', \phi \in \fh $ it is true that
\begin{equation}
	[W_{\sF, j}(\varphi'), V_{j'j''}(\varphi^* \phi)] = 0 \qquad \text{and} \qquad [W_{\sF, j}(\varphi'), \langle \varphi,\phi \rangle ] = 0 \;,
\label{eq:Xcommute}
\end{equation}
as a strong operator identity on $ \sF $.
\label{lem:Xcommute}
\end{lemma}
\begin{proof}
Since $ \varphi, \phi \in L^2 $, we have $ \varphi^* \phi \in L^1 $, so after a Fourier transform, the operator $ V_{j'j''}(\varphi^* \phi) $ amounts to a multiplication by an $ L^\infty $-function, and is hence bounded. Further, $ W_{\sF, j}(\varphi') $ is unitary on $ L^2(\cQ_x) \otimes \sF_y $ (and hence bounded). So the commutator is defined on all of $ L^2(\cQ_x) \otimes \sF_y $ and hence $ \sF $.\\
Now, in position space, both $ A_j^\dagger(\varphi') $ and $ A_j(\varphi') $ can be decomposed into a fiber integral by fiber-decomposing $ L^2(\cQ_x) \otimes \sF_y = \int_{\cQ_x} \sF_y \; d \bX $ (see \eqref{eq:adaggerx} and \eqref{eq:ax}). So we can also decompose $ W_{\sF, j}(\varphi') = e^{A_j^\dagger(\varphi') - A_j(\varphi')} $ into a fiber integral. And by \eqref{eq:Vjjpositionspace}, the operator  $ V_{j'j''}(\varphi^* \phi) $ just amounts to a multiplication by a complex constant on each fiber Hilbert space. So the fiber operators commute on all fibers and hence the original operators commute on $ L^2(\cQ_x) \otimes \sF_y $ and $ \sF $.\\

The expression $ \langle \varphi,\phi \rangle $ is just a constant, so it trivially commutes with $ W_{\sF, j}(\varphi') $.\\
\end{proof}

Mathematically, if we replace $ \varphi, \varphi', \phi \in \fh $ by $ s, s', v \in \Sdot_1 $, then the commutation relations \eqref{eq:Xcommute} are not a priori valid, since $ W_j(s) $ is not necessarily defined on vectors of the kind $ V_{j'j''}(\varphi^* \phi) \Psi_m $ or $ \langle \varphi,\phi \rangle \Psi_m $. We enforce their validity by taking \eqref{eq:Xcommute} as a definition for an extension of $ W_j(s) $.
\begin{definition}
Let $ (s_j)_{j \in \NNN} \subset \Sdot_1 $, $ \varphi \in \fh \cap \Sdot_1 $ and $ \Psi_m \in \cC_{WS} $, and let $ X $ be an element of the set of operators
\begin{equation}
	\cX := \span_{\eRen} \big\{ \langle s,v \rangle, \; V_{j j'}(s^* v) \; \big\vert \; s, v \in \Sdot_1 \big\} \;,
\label{eq:cX}
\end{equation}
so $ X $ formally commutes with all $ W_j(s_j) $. Then we extend the \textbf{product of dressing operators} via
\begin{equation}
\begin{aligned}
	W_M(s_M) \ldots W_1(s_1) X \Psi_m
	:= &X W_M(s_M) \ldots W_1(s_1) \Psi_m \;,\\
	W_M(s_M) \ldots W_1(s_1) X W_{\sF, 1}(\varphi) \Psi_m
	:= &X W_M(s_M) \ldots W_1(s_1) W_{\sF, 1}(\varphi) \Psi_m \\
	= &X W_M(s_M) \ldots W_1(s_1) W_1(\varphi) \Psi_m\;,\\
\end{aligned}
\label{eq:WjsWjsX}
\end{equation}
with $ M $ being the respective fermion number on each sector and where the last equality in \eqref{eq:WjsWjsX} holds by Lemma \ref{lem:Xcommute}. Again, we may extend the definition to \textbf{symmetrized vectors} by imposing that $ W_M(s_M) \ldots W_1(s_1) $ shall commute with the symmetrization operator $ (S_- \otimes \id) $.\\
\label{def:WjsWjsX}
\end{definition}

Again, it is easy to see that this definition makes sense: By Lemma \ref{lem:WjsWjs}, we have $ W_M(s_M) \ldots W_1(s_1) \Psi_m \in \FB $. And since $ X \in \cX $ maps $ \FB \to \FB_{\ex} $, indeed
\begin{equation}
	X W_M(s_M) \ldots W_1(s_1) \Psi_m \in \FB_{\ex} \;,
\end{equation}
so the right-hand sides of \eqref{eq:WjsWjsX} are well-defined.\\

\paragraph{Remarks.}

\begin{enumerate}
\setcounter{enumi}{\theremarks}
\item It seems natural to define \eqref{eq:WjsWjsX} for all operators $ X $ which commute with $ A_j^\dagger(s') $ in a sufficiently regular case and $ A_j(s') $. However, since we have only defined $ A_j(s'): \FB \to \FB_{\ex} $ and $ V_{j'j''}(s^* v): \FB \to \FB_{\ex} $, it is not clear how to interpret the commutator $ [A_j(s'), V_{j'j''}(s^* v)] $. So $ V_{j'j''}(s^* v) $ would then not be a valid $ X $-operator, although it commutes with $ A_j^\dagger(s') $ and $ A_j(s') $ for $ s, s', v \in \fh $.\\
If one succeeded to modify the definition of $ \FB, \FB_{\ex} $ such that commutators as $ [A_j(s'), V_{jj'}(s^* v)] $ are well-defined operators, then it seems reasonable to change the set of allowed $ X $ in Definition \ref{def:WjsWjsX} to all $ X $ with $ [A_j^\dagger(s'), X] = [A_j(s'), X] = 0 $.\\
\end{enumerate}
\setcounter{remarks}{\theenumi}

\subsubsection{Final definition of $ W(s) $}
\label{subsec:finaldef}

With Definitions \ref{def:WjsWjsAdagger} and \ref{def:WjsWjsX}, we may now provide the final domains for the product $ W_M(s_M) \ldots W_1(s_1) $: The \textbf{extended dressing domain} $ \cD_W $ is defined as
\begin{equation}
\begin{aligned}
	 &\cD_W := \\
	 &\span_{\eRen} \big\{ W_1(\varphi) A_j^{\dagger}(v) \Psi_m, \; X  W_1(\varphi) \Psi_m \; \big\vert \; \varphi \in \fh \cap \Sdot_1, \; v \in \Sdot_1, \; X \in \cX, \; \Psi_m \in \cC_{WS} \big\} \;,
\end{aligned}
\label{eq:cDW}
\end{equation}

with $ \Sdot_1 $ defined in \eqref{eq:Sdot1}, $ \cX $ defined in \eqref{eq:cX} and $ \cC_{WS} $ defined in \eqref{eq:cCWS}. Well-definedness of $ W(s) $ on $ \cD_W $ can be seen by combining \eqref{eq:WW1rewriting} with Definitions \ref{def:WjsWjsAdagger} and \ref{def:WjsWjsX}. By imposing that $ W_M(s_M) \ldots W_1(s_1) $ shall commute with $ (S_- \otimes \id) $, we extend $ W_M(s_M) \ldots W_1(s_1) $ to $ (S_- \otimes \id)[\cD_W] \cup \cD_W $.\\
The maximal domain of $ W(s) $ in Fock space is now given by the \textbf{large domain}
\begin{equation}
	\cD_\sF := (S_- \otimes \id) \big[ \cD_W \cap (L^2(\cQ_x) \otimes \sF_y) \big] \;.
\label{eq:cDsF2}
\end{equation}
The symmetrization operator $ (S_- \otimes \id) $ ensures that indeed $ \cD_\sF \subset \sF $. With this definition, it holds true that
\begin{lemma}
We have the inclusion
\begin{equation}
	\cD_{WS} \subset \cD_W \cap (L^2(\cQ_x) \otimes \sF_y) \;,
\label{eq:cDWSinclusion}
\end{equation}
and in particular, $ \cD_\sF $ is dense in $ \sF $.\\
\label{lem:cDsFdense}
\end{lemma}
\begin{proof}
Setting $ X = 1 $ and using Lemma \ref{lem:W1equivalence}, we see that $ W_{\sF, 1}(\varphi) \Psi_m = W_1(\varphi) \Psi_m \in \cD_{WS} $ with $ \varphi \in \fh \cap \Sdot_1 $ and $ \Psi_m \in \cC_{WS} $ is also an element of $ \cD_W $. Further, $ W_{\sF, 1}(\varphi) \Psi_m \in L^2(\cQ_x) \otimes \sF_y $, which yields the inclusion relation \eqref{eq:cDWSinclusion}.\\

Hence, the symmetrized version $ (S_- \otimes \id)[\cD_{WS}] $ is included in $ \cD_\sF $. And since the former is dense in $ \sF $ (Lemma \ref{lem:coherentdense}), also the latter is.\\
\end{proof}

In order to define the renormalized Hamiltonian $ \widetilde{H} = W(s)^{-1} H W(s) $, we also need to have a well-defined inverse $ W(s)^{-1} $.\\

\begin{lemma}
$ W(s) $ with $ s \in \Sdot_1 $ is invertible on $ \cD_\sF $.
\label{lem:Wsinvertible}
\end{lemma}

In the proof of this Lemma we employ another general statement about coherent states.\\

\begin{lemma}
For $ k \in \{1, \ldots, K\}, K \in \NNN $, consider $ \Psi'_{m, k} \in L^2(\cQ_x) \otimes \{\Omega_y\} $ and $ \varphi_k \in \fh \cap \Sdot_1 $. Further, choose any partition $ \{1, \ldots, K\} = \cK_{WA} \cup \cK_W $, as well as $ v_k \in \fh \cap \Sdot_1 $ and $ j_k \in \NNN $ for $ k \in \cK_{WA} $, and define
\begin{equation}
	\Psi_k := \begin{cases}
		W_{\sF, 1}(\varphi_k) A^\dagger_{j_k}(v_k) \Psi'_{m, k} \quad &\text{if } k \in \cK_{WA} \\
		W_{\sF, 1}(\varphi_k) \Psi'_{m, k} \quad &\text{if } k \in \cK_W	
	\end{cases} \;,
\label{eq:Psikform}
\end{equation}
such that $ \Psi_k \neq 0 $. Further, assume that  $ \varphi_k \neq \varphi_{k'} $ whenever $ k \neq k' $ both belong to either $ \cK_{WA} $ or $ \cK_W $. Then the set
\begin{equation}
	\big\{ \Psi_k \; \mid \; k \in \{1, \ldots, K\} \big\} \subset L^2(\cQ_x) \otimes \sF_y \;,
\label{eq:setWWA}
\end{equation}
is linearly independent.\\
\label{lem:coherentcombination}
\end{lemma}
Heuristically speaking, the proof relies on the argument, that there is a ``largest $ \varphi_k $'', for which the term $ \varphi_k^{\otimes N} $, occurring in a coherent state, eventually grows ``too large to be canceled by the $ K - 1 $ other terms'' as $ N \to \infty $. The proof itself is rather technical and can be found in Appendix \ref{sec:coherentcombination}.\\

\begin{proof}[Proof of Lemma \ref{lem:Wsinvertible}]
We need to show that $ W(s) $ is injective on $ \cD_\sF $. That is, there is no $ \Psi \in \cD_\sF, \Psi \neq 0 $ with $ W(s) \Psi = 0 $.

First of all, note that by definition of $ \cD_W $ \eqref{eq:cDW} and $ \cD_\sF $ \eqref{eq:cDsF2}, any $ \Psi \in \cD_\sF $ can be written as a finite sum $ \Psi = \sum_{k = 1}^{K} \Psi_k $ with
\begin{equation}
	\Psi_k = W_1(\varphi_k) A^\dagger_{j_k}(v_k) \Psi_{m, k} \qquad \text{or} \qquad \Psi_k = X_k W_1(\varphi_k) \Psi_{m, k} \;,
\end{equation}
where $ 0 \neq \Psi_{m, k} = \Psi_{mx, k} \otimes \Omega_y $. Since $ \Psi_k \in L^2 $, we have $ \varphi_k, v_k \in \fh \cap \Sdot_1 $ and without loss of generality we may assume that $ v_k \neq 0 $. And since $ X_k $ just multiplies by a function depending on the fermion momenta, $ X_k $ commutes with $ W_1(\varphi_k) $ so it can be absorbed into $ \Psi_{m, k} $. That is, we may re-define $ X_k \Psi_{m, k} $ to be the new $ \Psi_{m, k} $ and obtain that, without loss of generality, we could have chosen
\begin{equation}
	\Psi_k = W_1(\varphi_k) A^\dagger_{j_k}(v_k) \Psi_{m, k} \qquad \text{or} \qquad \Psi_k = W_1(\varphi_k) \Psi_{m, k} \;,
\end{equation}
with $ \Psi_{m, k} \in L^2 $. So we may define a disjoint union $ \{1, \ldots, K\} = \cK_{WA} \cup \cK_W $, such that
\begin{equation}
	\Psi = \sum_{k \in \cK_{WA}} W_1(\varphi_k) A^\dagger_{j_k}(v_k) \Psi_{m, k} +
	\sum_{k \in \cK_W} W_1(\varphi_k) \Psi_{m, k} \;.
\label{eq:Psiform}
\end{equation}

Now assume there was some $ \Psi \neq 0 $ with $ W(s) \Psi = 0 $. We define a ``compression operator'' $ B $ which ``compresses'' $ W(s) \Psi $ into $ L^2 $. For this purpose, let $ m_s $ and $ \beta_s $ be the UV/IR-scaling degrees of $ s $, respectively, and pick some real numbers $ m_b < - m_s - d/2 $ and $ \beta_b > - \beta_s - d/2 $. Choose a function $ b \in \Sdot_{1, >} $ (so $ b: \RRR^d \setminus \{ 0 \} \to \CCC $ is invertible) which has exact UV/IR-scaling degrees $ m_b $ and $ \beta_b $. With that choice, the ``compressed'' product function $ \bk \mapsto s(\bk) b(\bk) $ is in $ \fh \cap \Sdot_1 $, as is $  \bk \mapsto \varphi_k(\bk) b(\bk) $. Now, we define the compression operator $ B: \FB_{\ex} \to \FB_{\ex} $ as
\begin{equation}
	(B \Psi)(\bP, \bK) = \left( \prod_{\ell = 1}^N b(\bk_\ell) \right) \Psi(\bP, \bK) \;.
\end{equation}
It is easy to see that $ B $ maps $ \FB \to \FB $, $ \Sdot_\sF \to \Sdot_\sF $ and $ B^{-1}: \FB_{\ex} \to \FB_{\ex} $ exists with
\begin{equation}
	(B^{-1} \Psi)(\bP, \bK) = \left( \prod_{\ell = 1}^N \frac{1}{b(\bk_\ell)} \right) \Psi(\bP, \bK) \;.
\end{equation}
So $ W(s) \Psi = 0 $, if and only if $ B W(s) \Psi = 0 $.\\

Further, for $ \Psi_k = W_1(\varphi_k) \Psi_{m, k} $, a momentum space calculation renders the following identity:
\begin{equation}
\begin{aligned}
	&(B W(s) W_1(\varphi_k) \Psi_{m, k})(\bP, \bK)\\
	\overset{\eqref{eq:WW1rewriting}}{=} &(B e^{\Im \langle s, \varphi_k \rangle} W_M(s) \ldots W_2(s) W_1(s + \varphi_k) \Psi_{m, k})(\bP, \bK)\\
	\overset{\eqref{eq:WWWPsimomentum}}{=} &\frac{1}{\sqrt{N!}} e^{\Im \langle s, \varphi_k \rangle - \frac{(M-1)\Vert s \Vert^2 + \Vert s + \varphi_k \Vert^2}{2}} \sum_{\sigma} \left( \prod_{\ell = 1}^N b(\bk_{\ell}) s_{k, \sigma(\ell)}(\bk_{\ell}) \right) \Psi_{mx, k} \left( \bP' \right)\\
	= &\left( e^{\Im \langle s, \varphi_k \rangle - \frac{(M-1)\Vert s \Vert^2 + \Vert s + \varphi_k \Vert^2}{2}} e^{A^\dagger(b s) + A_1^\dagger(b \varphi_k)} \Psi_{m, k} \right) (\bP, \bK) \;,
\end{aligned}
\label{eq:BWW}
\end{equation}
with $ s_{k, 1} := s + \varphi_k $, as well as $ s_{k, 2} = \ldots = s_{k, M} := s $. As above, the sum is running over all maps $ \sigma: \{1, \ldots, N \} \to \{1, \ldots, M \} $ and $ \bP' = \bP + \sum_{\ell = 1}^N e_{\sigma(\ell)} \bk_{\ell} $. On the other hand, for unitary Fock space operators $ W_\sF, W_{\sF, j} $, the Weyl relations yield
\begin{equation}
\begin{aligned}
	&W_\sF(b s) W_{\sF, 1}(b \varphi_k) \Psi_{m, k}\\
	= &e^{\Im \langle b s, b \varphi_k \rangle} W_{\sF, M}(b s) \ldots W_{\sF, 2}(b s) W_{\sF, 1}(b (s + \varphi_k)) \Psi_{m, k}\\
	\overset{\eqref{eq:WjWjprime}}{=} &e^{\Im \langle b s, b \varphi_k \rangle - \frac{(M-1)\Vert b s \Vert^2 + \Vert b (s + \varphi_k) \Vert^2}{2}} e^{-\sum_{j > j'} V_{jj'}(s_{k, j}^* b^* b s_{k, j'})} e^{A^\dagger(b s) + A_1^\dagger(b \varphi_k) } \Psi_{m, k} \;.\\
\end{aligned}
\label{eq:WWb}
\end{equation}
Since $ b s_{k, j} \in L^2 $, the operators $ V_{jj'}(s_{k, j}^* b^* b s_{k, j'}) $ amount to a convolution with an $ L^1 $-function, which, after a Fourier transformation, is equivalent to a multiplication by a bounded function. So the $ V_{jj'} $-operators are all bounded and likewise, the exponential $ e^{-\sum_{j > j'} V_{jj'}(s_{k, j}^* b^* b s_{k, j'})} $ is bounded. Further, this bounded exponential commutes with $ e^{A^\dagger(b s) + A_1^\dagger(b \varphi_k)} $ by a similar fiber decomposition argument as in the proof of Lemma \ref{lem:Xcommute}. So comparing \eqref{eq:BWW} with \eqref{eq:WWb}, we obtain
\begin{equation}
	B W(s) W_1(\varphi_k) \Psi_{m, k} = \fc_k \cdot W_\sF (b s) W_{\sF, 1}(b \varphi_k) e^{-\sum_{j > j'} V_{jj'}(s_{k, j}^* b^* b s_{k, j'})} \Psi_{m, k} \;,
\label{eq:BWWcomparison1}
\end{equation}
for some $ \fc_k \in \eRen, \fc_k \neq 0 $. An analogous momentum space calculation yields
\begin{equation}
	B W(s) W_1(\varphi_k) A_{j_k}^\dagger(v_k) \Psi_{m, k} = \fc_k \cdot W_\sF (b s) W_{\sF, 1}(b \varphi_k) A_{j_k}^\dagger(b v_k) e^{-\sum_{j > j'} V_{jj'}(s_{k, j}^* b^* b s_{k, j'})} \Psi_{m, k} \;.
\label{eq:BWWcomparison2}
\end{equation}
Thus, we have the following chain of implications:
\begin{equation}
\begin{aligned}
	&&0 = &W(s) \Psi\\
	\Leftrightarrow \qquad && 0 = &B W(s) \Psi\\
	\overset{\eqref{eq:Psiform}}{\Leftrightarrow} \qquad && 0 = &\sum_{k \in \cK_{WA}} B W(s) W_1(\varphi_k) A^\dagger_{j_k}(v_k) \Psi_{m, k} +
	\sum_{k \in \cK_W} B W(s) W_1(\varphi_k) \Psi_{m, k}\\
	\Rightarrow \qquad && 0 = &\sum_{k \in \cK_{WA}} \fc_k W_\sF (b s) W_{\sF, 1}(b \varphi_k) A^\dagger_{j_k}(b v_k)  e^{-\sum_{j > j'} V_{jj'}(s_{k, j}^* b^* b s_{k, j'})} \Psi_{m, k}\\
	&&&+ \sum_{k \in \cK_W} \fc_k W_\sF (b s) W_{\sF, 1}(b \varphi_k) e^{-\sum_{j > j'} V_{jj'}(s_{k, j}^* b^* b s_{k, j'})} \Psi_{m, k}\\
	\Leftrightarrow \qquad && 0 = &\sum_{k \in \cK_{WA}} \fc_k W_{\sF, 1}(b \varphi_k) A^\dagger_{j_k}(b v_k) e^{-\sum_{j > j'} V_{jj'}(s_{k, j}^* b^* b s_{k, j'})} \Psi_{m, k}\\
	&&&+ \sum_{k \in \cK_W} \fc_k W_{\sF, 1}(b \varphi_k) e^{-\sum_{j > j'} V_{jj'}(s_{k, j}^* b^* b s_{k, j'})} \Psi_{m, k} \;,\\
\end{aligned}
\label{eq:zerocondition}
\end{equation}
where we exploited the unitarity of $ W_\sF (b s) $ in the last step.\\

We may now partition the indices $ k $ into several equivalence classes according to the relation
\begin{equation}
	k \sim k' \quad :\Leftrightarrow \quad \fc_k = c \fc_{k'} \text{ for some } c \in \CCC \;.
\end{equation}
The last equation of \eqref{eq:zerocondition} now implies that for each equivalence class $ [k'] $, the following sum with $ c_k := \frac{\fc_k}{\fc_k'} $ must vanish:
\begin{equation}
\begin{aligned}
	0 = &\sum_{k \in [k'] \cap \cK_{WA}} c_k W_{\sF, 1}(b \varphi_k) A^\dagger_{j_k}(b v_k) e^{-\sum_{j > j'} V_{jj'}(s_{k, j}^* b^* b s_{k, j'})} \Psi_{m, k}\\
	&+ \sum_{k \in [k'] \cap \cK_W} c_k W_{\sF, 1}(b \varphi_k) e^{-\sum_{j > j'} V_{jj'}(s_{k, j}^* b^* b s_{k, j'})} \Psi_{m, k} \;.
\end{aligned}
\end{equation}
This is exactly a linear combinations of vectors of the form \eqref{eq:Psikform} in Lemma \ref{lem:coherentcombination} with $ b \varphi_k, b v_k \in \fh \cap \Sdot_1 $ where all $ b \varphi_k $ are distinct, and with $ \Psi'_{m, k} := e^{-\sum_{j > j'} V_{jj'}(s_{k, j}^* b^* b s_{k, j'})} \Psi_{m, k} $. Since $ \fc_k \neq 0 $, we also have $ c_k \neq 0 $. Only the premise $ \Psi'_{m, k} \neq 0 $ is missing in order for Lemma \ref{lem:coherentcombination} to apply. However, if this premise was true, then Lemma \ref{lem:coherentcombination} would imply that $ c_k = 0 $ for all $ k $, which we rtuled out above. Thus, we conclude that $ \Psi'_{m, k} = 0 $ for at least one $ k $, and by repeatedly excluding this $ k $ from the linear combination and applying the argument to exclude further $ k $, we arrive at
\begin{equation}
	\Psi'_{m, k} = e^{-\sum_{j > j'} V_{jj'}(s_{k, j}^* b^* b s_{k, j'})} \Psi_{m, k} = 0 \qquad \forall k \in \{1, \ldots, K\} \;.
\end{equation}
By boundedness of the operator $ V_{jj'}(s_{k, j}^* b^* b s_{k, j'}) $, we can now conclude that the exponential is semibounded from below, that is,
\begin{equation}
	\Big\Vert e^{-\sum_{j > j'} V_{jj'}(s_{k, j}^* b^* b s_{k, j'})} \Psi' \Big\Vert
	\ge c' \Vert \Psi' \Vert \quad \forall \Psi' \in L^2(\cQ_x) \otimes \sF_y \qquad \text{for some } c' > 0 \;.
\end{equation}
So $ e^{-\sum_{j > j'} V_{jj'}(s_{k, j}^* b^* b s_{k, j'})} \Psi_{m, k} = 0 $ implies $ \Psi_{m, k} = 0 $ for all $ k $, which contradicts $ \Psi = 0 $ and concludes the proof.\\
\end{proof}

\paragraph{Remarks.}

\begin{enumerate}
\setcounter{enumi}{\theremarks}
\item In essence, we just transferred the commutation relations \eqref{eq:WAdaggers} for creation operators $ A^\dagger_j $ from Lemma \ref{lem:commutationW} in a certain sense to extended dressing operators $ W_j(s) $. This was done by imposing definitions such that these commutation relations still hold true. What about the commutation relations \eqref{eq:WAs} for annihilation operators $ A_j $?\\
In fact, these relations cannot be imposed by definition, but one may show that they are an immediate consequence of Definition \ref{def:WjsWjs}. This is proved in the following lemma:
\end{enumerate}
\setcounter{remarks}{\theenumi}
\begin{lemma}
Let $ s, v \in \Sdot_1 $ and $ \Psi_m \in \cC_{WS} $. Then, we have the commutation relations
\begin{equation}
\begin{aligned}
	W_{j'}(s) A_j(v) \Psi_m &= \begin{cases} \big( A_j(v) - \langle v,s \rangle \big) W_{j'}(s) \Psi_m \quad &\text{if } j = j' \\ \big( A_j(v) - V_{jj'}(v^* s) \big) W_{j'}(s) \Psi_m \quad &\text{if } j \neq j' \end{cases} \;,\\
	W(s) A_j(v) \Psi_m &= \big( A_j(v) - \langle v,s \rangle - V_{j \bullet}(v^* s) \big) W(s) \Psi_m \;,\\
	W_{j'}(s) A(v) \Psi_m &= \big( A(v) - \langle v,s \rangle - V_{\bullet j'}(v^* s) \big) W_{j'}(s) \Psi_m \;,\\
	W(s) A(v) \Psi_m &= \big( A(v) - M \langle v,s \rangle - V(v^* s) \big) W(s) \Psi_m \;.\\
\end{aligned}
\label{eq:WAsextended}
\end{equation}
\label{lem:WAextended}
\end{lemma}
\begin{proof}
First, note that $ A_j(v) \Psi_m = 0 $. The first line in \eqref{eq:WAsextended} then follows by momentum space definitions \eqref{eq:ap} and \eqref{eq:WWWsPsimomentum}
\begin{equation}
\begin{aligned}
	&\big( A_j(v) W_{j'}(s) \Psi_m \big) (\bP, \bK)\\
	= &\frac{e^{-\frac{\Vert s \Vert^2}{2}}}{\sqrt{N!}} \int v(\tilde{\bk})^* s(\tilde{\bk}) \left( \prod_{\ell = 1}^N s(\bk_{\ell}) \right) \Psi_{mx} (\bP' + (e_{j'} - e_j) \tilde{\bk}) \; d \tilde{\bk}\\
	= &\begin{cases} \langle v,s \rangle \Psi_m(\bP, \bK) \quad &\text{if } j = j' \\ \big(V_{jj'}(v^* s)\Psi_m \big) (\bP, \bK) \quad &\text{if } j \neq j' \end{cases} \;,
\end{aligned}
\end{equation}
with $ \bP' = \bP + e_{j'} \sum_{\ell = 1}^N \bk_{\ell} $.\\
 
The second line in \eqref{eq:WAsextended} is established similarly. We use again \eqref{eq:ap} and \eqref{eq:WWWsPsimomentum}, yielding:
\begin{equation}
\begin{aligned}
	&\big( A_j(v) W(s) \Psi_m \big) (\bP, \bK)\\
	= &\frac{e^{-\frac{\Vert s \Vert^2}{2}}}{\sqrt{N!}} \sum_{\tilde\sigma} \int v(\tilde{\bk})^* s(\tilde{\bk}) \left( \prod_{\ell = 1}^N s(\bk_{\ell}) \right) \Psi_{mx} (\bP' + (e_{\tilde\sigma(N+1)} - e_j) \tilde{\bk}) \; d \tilde{\bk} \;,\\
\end{aligned}
\end{equation}
where the sum runs over all $ \tilde\sigma: \{1, \ldots, N+1\} \to \{1, \ldots, M\} $ and we have set $ \bP' = \bP + \sum_{\ell = 1}^N e_{\tilde\sigma(\ell)} \bk_{\ell} $, as well as $ \bk_{N+1} = \tilde\bk $. We can split this sum into a sum over $ (\sigma, j) $ with $ \sigma: \{1, \ldots, N\} \to \{1, \ldots, M\}, \sigma(\ell) = \tilde\sigma(\ell) $ and $ j' \in \{1, \ldots, M\}, j' = \tilde\sigma(N+1) $:
\begin{equation}
\begin{aligned}
	&\big( A_j(v) W(s) \Psi_m \big) (\bP, \bK)\\
	= &\frac{e^{-\frac{\Vert s \Vert^2}{2}}}{\sqrt{N!}} \sum_{j'} \sum_{\sigma} \int v(\tilde{\bk})^* s(\tilde{\bk}) \left( \prod_{\ell = 1}^N s(\bk_{\ell}) \right) \Psi_{mx} (\bP' + (e_{j'} - e_j) \tilde{\bk}) \; d \tilde{\bk} \;.\\
\end{aligned}
\end{equation}
Now, the term with $ j' = j $ renders the contribution $ \langle s, v \rangle \Psi_m $ and all other $ M - 1 $ terms add up to $ \sum_{j': j \neq j'} V_{jj'}(v^* s) \Psi_m = V_{j \bullet}(v^* s) \Psi_m $, which is exactly the desired contribution.\\

Lines three and four of \eqref{eq:WAsextended} just follow by summing over $ j \in \{1, \ldots, M\} $ in the first two lines.\\
\end{proof}

\section{Pulling Back the Hamiltonian}	
\label{sec:pullback}

This section concerns taking a formal Hamiltonian
\begin{equation*}
	H = H_{0, y} + A^{\dagger}(v) + A(v) - E_\infty \;,
\end{equation*}
and pulling it back under the dressing transformation $ W(s) $, i.e., we compute
\begin{equation*}
	\widetilde{H}: \sF_{\ex} \supset \cD_{WS} \to \FB_{\ex} \qquad \text{with} \qquad W(s) \widetilde{H} = H W(s) \;.
\end{equation*}

The computation is split into two steps. In Section \ref{subsec:AE}, we compute the pullback of $ (A(v) - E_\infty) $. Pulling back only $ A(v) $ will result in divergences which are canceled by $ E_\infty $.\\
The pullback of $ (H_{0, y} + A^\dagger(v)) $ is then computed in Section \ref{subsec:H0A}. Combining $ H_{0, y} $ and $ A^\dagger(v) $ yields a particularly easy result.\\

Our main theorem is the following.

\begin{theorem}
Let $ s = - \frac{v}{\omega} $ with $ s, v, \omega \in \Sdot_1 $. Then the pullback of the self-energy renormalized Hamiltonian
\begin{equation}
	\widetilde{H} := H_{0, y} + V(v^* s) \qquad \text{satisfies} \qquad W(s) \widetilde{H} = \big( H_{0, y} + A^{\dagger}(v) + A(v) - E_\infty \big) W(s) \;,
\label{eq:pullback}
\end{equation}
which holds as a strong operator identity on $ \cD_{WS} $ (defined in \eqref{eq:cDWS}), as well as on $ (S_- \otimes \id)[\cD_{WS}] $.\\
\label{thm:pullback}
\end{theorem}

Note that, the potential interaction $ V $ defined in \eqref{eq:V1} via \eqref{eq:Vjj} acts as
\begin{equation*}
	(V \Psi)(\bP,\bK) = \sum_{j \neq j'} \int v(\tilde{\bk})^* s(\tilde{\bk}) \Psi(\bP + (e_j - e_{j'}) \tilde{\bk},\bK) \; d \tilde{\bk} \;.
\end{equation*}

\subsection{Pulling Back $ A - E_\infty $}
\label{subsec:AE}

We recall that by Proposition \ref{prop:operatorextension}, one can define $ E_\infty: \FB \to \FB_{\ex} $ with
\begin{equation}
	(E_\infty \Psi)(\bP,\bK) = M \langle v,s \rangle \; \Psi(\bP,\bK) = \sum_{j=1}^M \int - \frac{v(\bk)^* v(\bk)}{\omega(\bk)} \; d \bk \; \Psi(\bP,\bK) \;,
\label{eq:E}
\end{equation}
even if $ \langle v,s \rangle \notin \CCC $, but $ \langle v,s \rangle \in \Ren_1 $.\\

\begin{lemma}
Let $ \Psi_m \in \cC_{WS} $ and $ s = - \frac{v}{\omega} $ with $ s, v, \omega \in \Sdot_1 $. Then for $ \varphi \in \Sdot_1 \cap \fh $,
\begin{equation}
	\big( A(v) - E_\infty \big) W(s) W_{\sF, 1}(\varphi) \Psi_m = W(s) \big( \res_1(\varphi) + V(v^* s) \big) W_{\sF, 1}(\varphi) \Psi_m \quad \in \FB_{\ex} \;,
\label{eq:AEinftypullback}
\end{equation}
where the \textbf{residual operator}
\begin{equation}
	\res_1(\varphi) = \underbrace{\langle v,\varphi \rangle}_{\in \Ren_1} + V_{\bullet 1}(v^* \varphi) \;,
\label{eq:res1sn}
\end{equation}
is by Lemma \ref{lem:CCR} a well-defined mapping $ \FB \to \FB_{\ex} $.
\label{lem:AE}
\end{lemma}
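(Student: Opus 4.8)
The plan is to reduce everything to the pull-through identities already available for the creation operator and to exploit the special form factor $s=-v/\omega$ of \eqref{eq:s} to make the self-energy cancel. I would first carry out the computation for $s\in\cH_1\cap\Sdot_1$, where $W(s)$ is a genuine unitary and the manipulations are honest operator identities, and only then read off the extension to $s\in\Sdot_1$, where the right-hand side still makes sense in $\FB_{ex}$ by Lemma \ref{lem:WsW1varphi} and Lemma \ref{lem:V1spaces}. The whole point is that each term produced by the pull-through is one of the objects whose well-definedness on $\FB$, $\FB_{ex}$ has already been secured.

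The first step is to pull $A(v)$ through $W(s)$. Since $v$ does not depend on the fermion momentum, the Remark following Lemma \ref{lem:nonfockdressing} tells us that $V$ and the scalar commutators are central, i.e. they commute with $A^\dagger(s)-A(s)$. Hence the geometric-series argument of \eqref{eq:pullthrough}--\eqref{eq:pullthroughA} applies verbatim with $A(v)$ in place of $A^\dagger(\phi)$: from $[A(v),A(s)]=0$ and the annihilation counterpart of \eqref{eq:commutatorV}, namely $[A(v),A^\dagger(s)]=M\langle s,v\rangle+V(s^*s)$ with $V(s^*v)$ as the off-diagonal piece, one gets $[A(v),W(s)]=\big(M\langle s,v\rangle+V(s^*v)\big)W(s)$, and therefore the operator identity $W(s)^*A(v)W(s)=A(v)+M\langle s,v\rangle+V(s^*v)$. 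Substituting $s=-v/\omega$ now produces the cancellation: pointwise $s^*v=v^*s=-|v|^2/\omega$, so $V(s^*v)=V(v^*s)$, and $M\langle s,v\rangle$ is exactly the self-energy $E_\infty$ of \eqref{eq:E}. As $E_\infty$ acts as a scalar on each $M$-fermion sector it commutes with the fermion-number-preserving $W(s)$, whence $W(s)^*(A(v)-E_\infty)W(s)=A(v)+V(v^*s)$.

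Applying this to $W_1(\varphi)\Psi_m$, the potential term already yields $V(v^*s)W_1(\varphi)\Psi_m$, matching the claim. For the remaining term I would use that $\Psi_m=\Psi_{mx}\otimes\Omega_y$ lies in the boson vacuum, so $A(v)\Psi_m=0$ and thus $A(v)W_1(\varphi)\Psi_m=[A(v),W_1(\varphi)]\Psi_m$. Running the same pull-through through the single-fermion dressing $W_1(\varphi)$ gives a diagonal scalar $\langle v,\varphi\rangle$ (from $A_1(v)$ meeting $W_1(\varphi)$) and an off-diagonal potential $V_1(v^*\varphi)$ (from $A_j(v)$, $j\neq1$), i.e. exactly $res_1(\varphi)W_1(\varphi)\Psi_m$. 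Adding the two contributions produces $\big(res_1(\varphi)+V(v^*s)\big)W_1(\varphi)\Psi_m$, as asserted.

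I expect the main obstacle to be the index, orientation and conjugation bookkeeping in the off-diagonal potentials, not the overall structure. The raw commutators produce terms carrying momentum shifts of the form $(e_1-e_{j'})\tilde\bk$ with integrand $v^*\varphi$, whereas \eqref{eq:V1} is written with the shift $(e_{j'}-e_1)\tilde\bk$; reconciling the two requires the substitution $\tilde\bk\mapsto-\tilde\bk$ together with the symmetry conditions $v(-\bk)=v(\bk)$, $\omega(-\bk)=\omega(\bk)$ of \eqref{eq:symetry}, which is precisely why that symmetry is imposed, and the same step re-identifies $V(s^*v)$ with $V(v^*s)$ and the scalar $\langle s,v\rangle$ with $\langle v,s\rangle$ in the self-energy cancellation. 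Once the bookkeeping is settled, well-definedness in $\FB_{ex}$ is automatic: $A(v)\colon\FB\to\FB_{ex}$ by Lemma \ref{lem:operatorextension}, multiplication by $\langle v,\varphi\rangle\in Ren_1$ keeps us in $\FB_{ex}$, and $V_1(v^*\varphi),\,V(v^*s)$ map $\FB\to\FB_{ex}$ by Lemma \ref{lem:V1spaces}, so $res_1(\varphi)$ is a well-defined map $\FB\to\FB_{ex}$ and the entire right-hand side lies in $\FB_{ex}$.
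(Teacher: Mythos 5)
Your proposal is correct and matches the paper's own proof in all essentials: the same pull-through commutators (with $[A,A^\dagger]$ producing the scalar $M\langle v,s\rangle$ plus the potentials $V$, $V_1$), the same use of $A(v)\Psi_m=0$ on the boson vacuum, the exact cancellation of $M\langle v,s\rangle$ against $E_\infty$ via $s=-v/\omega$, and the same final step of commuting $W(s)^*$ past the ($\bp$-independent) scalars and potentials, with well-definedness supplied by Lemmas \ref{lem:operatorextension}, \ref{lem:WsW1varphi} and \ref{lem:V1spaces}. The only difference is organizational rather than substantive — you pull $A(v)$ through $W(s)$ and $W_1(\varphi)$ in two separate stages, whereas the paper evaluates the single commutator $[A(v),W(s)W_1(\varphi)]$ using the combined form factor $\tilde s$ — and your bookkeeping remark (reconciling the momentum-shift orientation by $\tilde\bk\mapsto-\tilde\bk$ using \eqref{eq:symetry}) correctly flags the one point where care is needed.
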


The proof of Lemma \ref{lem:AE} is given in Appendix \ref{sec:AE}.\\

\subsection{Pulling Back $ H_{0, y} + A^{\dagger} $}
\label{subsec:H0A}

\begin{lemma}
Let $ \Psi_m \in \cC_{WS} $ and $ s = - \frac{v}{\omega} $ with $ s, v, \omega \in \Sdot_1 $. Then for $ \varphi \in \Sdot_1 \cap \fh $,
\begin{equation}
	\big( H_{0, y} + A^{\dagger}(v) \big) W(s) W_{\sF, 1}(\varphi) \Psi_m = W(s) \big( H_{0, y} - \res_1(\varphi) \big) W_{\sF, 1}(\varphi) \Psi_m \quad \in \FB_{\ex} \;,
\label{eq:H0Adaggerundressed}
\end{equation}
with the same residual operator $ \res_1 = \langle v,\varphi \rangle + V_{\bullet 1}(v^* \varphi)$ as in Lemma \ref{lem:AE}.
\label{lem:H0A}
\end{lemma}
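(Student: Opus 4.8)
The plan is to mirror the proof of Lemma~\ref{lem:AE}: commute $H_0+A^\dagger(v)$ through the dressing $W(s)$ and exploit the defining choice \eqref{eq:s}, i.e. the pointwise identity $\omega s=-v$, which is precisely what makes the explicit creation term $A^\dagger(v)$ cancel against the linear term generated by the $H_0$-pullback. I would first establish an operator identity for $s\in\cH_1\cap\Sdot_1$, where $W(s)$ is a genuine unitary, and afterwards transport it to $s\in\Sdot_1$ by copying the momentum-space definition as in Theorem~\ref{thm:WsWelldefined}, checking that every term stays meaningful in $\FB_{ex}$.

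For the creation part I would use \eqref{eq:pullthroughA}/\eqref{eq:WAdaggers} to obtain $W(s)^*A^\dagger(v)W(s)=A^\dagger(v)+M\langle v,s\rangle+V(v^*s)$. For $H_0=d\Gamma_y(\omega)$ (recall $\theta\equiv 0$) the key computation is $[d\Gamma_y(\omega),A_j^\dagger(s)]=A_j^\dagger(\omega s)$ and $[d\Gamma_y(\omega),A_j(s)]=-A_j(\omega s)$, so that with the generator $B=A^\dagger(s)-A(s)$ of $W(s)$ one has $[d\Gamma_y(\omega),B]=A^\dagger(\omega s)+A(\omega s)=:C$. A Baker--Campbell--Hausdorff expansion then gives $W(s)^*d\Gamma_y(\omega)W(s)=d\Gamma_y(\omega)+C-\tfrac12[B,C]+\dots$, and the decisive point is that this series \emph{terminates}: by \eqref{eq:commutatorV} the bracket $[B,C]$ is the sum of the scalar $-2M\langle s,\omega s\rangle$ and the interaction operators $V_{jj'}(\omega|s|^2)$, and since $v$ is independent of $\bp$ these $V_{jj'}$ commute with $B$ (the same fact used after \eqref{eq:AEinftyPsi} that $V$ commutes with $W(s)$). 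Hence all higher brackets vanish and $W(s)^*d\Gamma_y(\omega)W(s)=H_0+A^\dagger(\omega s)+A(\omega s)+M\langle s,\omega s\rangle-V(v^*s)$, where $V(\omega|s|^2)=-V(v^*s)$ because $\omega|s|^2=-v^*s$ and $V$ is linear in its kernel.

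Adding the two pullbacks and inserting $\omega s=-v$ produces three cancellations: $A^\dagger(\omega s)=-A^\dagger(v)$ kills the explicit creation term; the scalars cancel, $\langle s,\omega s\rangle+\langle v,s\rangle=0$ (their integrands $\omega|s|^2$ and $v^*s$ sum to zero pointwise, so they agree as elements of $Ren_1$); and the potentials cancel, $-V(v^*s)+V(v^*s)=0$. What survives is the clean identity $W(s)^*(H_0+A^\dagger(v))W(s)=H_0-A(v)$. Applying this to $W_1(\varphi)\Psi_m$ and using $A(v)\Psi_m=0$ together with the commutator $[A(v),W_1(\varphi)]$ computed exactly as in \eqref{eq:APsi} at $s=0$, one finds $A(v)W_1(\varphi)\Psi_m=(\langle v,\varphi\rangle+V_1(v^*\varphi))W_1(\varphi)\Psi_m=res_1(\varphi)W_1(\varphi)\Psi_m$, whence $W(s)^*(H_0+A^\dagger(v))W(s)W_1(\varphi)\Psi_m=(H_0-res_1(\varphi))W_1(\varphi)\Psi_m$; membership in $\FB_{ex}$ then follows from Lemma~\ref{lem:V1spaces}.

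The main obstacle is the $H_0$-pullback, not the $A^\dagger$-pullback. One must (i) track the fermion recoil carried by the $A_j^\sharp$ operators correctly inside the commutators, (ii) justify that the BCH series truncates — which rests entirely on $[B,V_{jj'}]=0$, i.e. on $v$ being $\bp$-independent — and (iii) check that the resulting interaction $V(\omega|s|^2)$ equals $-V(v^*s)$ exactly, so that it annihilates the $+V(v^*s)$ coming from the creation part. For $s\notin\cH_1$ there is the additional but routine point that $\langle s,\omega s\rangle$ and $\langle v,s\rangle$ are now $Ren_1$-valued, so the cancellations must be read as formal equalities of integrands — which is exactly the setting furnished by $eRen$ and Theorem~\ref{thm:WsWelldefined}.
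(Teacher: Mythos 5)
Your proposal is algebraically correct, and it takes a genuinely different route from the paper. The paper never forms your clean intermediate identity $W(s)^*(H_0+A^\dagger(v))W(s)=H_0-A(v)$. Instead it works entirely with the explicit momentum-space expansion of $W(s)W_1(\varphi)\Psi_m$ (the sums over $\sigma$ and $\alpha$ in \eqref{eq:WsW1}), computes $[H_0,W(s)]W_1(\varphi)\Psi_m$ and $A^\dagger(v)W(s)W_1(\varphi)\Psi_m$ as explicit functions, substitutes $v=-\omega s$ \emph{inside} those sums so that the creation terms are absorbed into the dressing combinatorics, and only at the very end identifies the surviving term $A_1^\dagger(\omega\varphi)\Psi$ as $W(s)\bigl(A_1^\dagger(\omega\varphi)-\langle v,\varphi\rangle-V_1(v^*\varphi)\bigr)W_1(\varphi)\Psi_m$ via Theorem \ref{thm:WsWelldefined}; the residual $res_1$ emerges there. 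In your version $res_1$ instead emerges from $A(v)W_1(\varphi)\Psi_m=\bigl(\langle v,\varphi\rangle+V_1(v^*\varphi)\bigr)W_1(\varphi)\Psi_m$, after the three pointwise cancellations ($A^\dagger(\omega s+v)=0$, $\omega|s|^2+v^*s=0$ as $Ren_1$-integrands, $V(\omega|s|^2)+V(v^*s)=0$). Your commutator computations check out: $[d\Gamma_y(\omega),B]=A^\dagger(\omega s)+A(\omega s)$ uses $\theta\equiv 0$ exactly as you note, the BCH series terminates because $V_{jj'}$ and $Ren_1$-scalars commute with $B$ for $\bp$-independent form factors, and the resulting operator identity, evaluated on $W_1(\varphi)\Psi_m$, reproduces \eqref{eq:H0Adaggerundressed}. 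As a bonus, your identity combined with Lemma \ref{lem:AE} gives Theorem \ref{thm:pullback} in one line, which is a cleaner organization than the paper's.

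The one point you underestimate is the ``transport'' to $s\in\Sdot_1$, which you call routine. For $s\notin\cH_1$ the operator $W(s)$ is \emph{not} $e^{A^\dagger(s)-A(s)}$ — no such series is defined — it exists only through momentum-space formulas copied onto specific vectors (Lemma \ref{lem:WsW1varphi}, Theorem \ref{thm:WsWelldefined}). A BCH expansion therefore has no a priori meaning in the extended setting; each ingredient you invoke (the pull-through relations, commutativity of $V$ and of $Ren_1$-scalars with $W(s)$, and the Weyl relation $W(s)^*W(s)=1$ applied to vectors such as $H_0W_1(\varphi)\Psi_m$ and $V_1(v^*\varphi)W_1(\varphi)\Psi_m$, which lie outside the representation domain $\DW$ of \eqref{eq:DW}) has to be re-verified directly against the momentum-space definitions. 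That verification is precisely what the paper's $\sigma,\alpha$ computation accomplishes, so a rigorous version of your argument is not shorter than the paper's — it just packages the cancellations more transparently. Also, a small bookkeeping slip: $[B,C]=-2M\langle s,\omega s\rangle-2V(\omega|s|^2)$, with a factor $-2$ on the potential term as well; your final formula is nevertheless correct.
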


As for Lemma \ref{lem:AE}, the proof of Lemma \ref{lem:H0A} is rather technical. It can be found in Appendix \ref{sec:H0A}. With both lemmas at hand, Theorem \ref{thm:pullback} can directly be proved.\\

\begin{proof}[Proof of Theorem \ref{thm:pullback}]
This is a simple consequence of Lemmas \ref{lem:AE} and \ref{lem:H0A}. We put together \eqref{eq:AEinftypullback} and \eqref{eq:H0Adaggerundressed} which yields
\begin{equation}
\begin{aligned}
	W(s) \widetilde{H} = \big( H_{0, y} + A^\dagger(v) + A(v) - E_\infty \big) W(s) = &W(s) \big( H_{0, y} + V(v^* s) + \res_1(\varphi) - \res_1(\varphi) \big)\\
	= &W(s) \big( H_{0, y} + V(v^* s) \big) \;,\\
\end{aligned}
\end{equation}
as a strong operator identity on all $ \Psi = W_{\sF, 1}(\varphi) \Psi_m, \varphi \in \Sdot_1 \cap \fh $. And these $ \Psi $ span $ \cD_{WS} $.\\

Since we imposed in Definitions \ref{def:WjsWjs}, \ref{def:WjsWjsAdagger} and \ref{def:WjsWjsX} that symmetrization $ (S_- \otimes \id) $ shall commute with $ W(s) $, the strong operator identity is also valid on $ (S_- \otimes \id)[\cD_{WS}] $.\\

\end{proof}


\section{Self-Adjointness}
\label{sec:selfadjoint}

In this section, we prove that in certain cases, $ \widetilde{H} $ can indeed be defined as a self-adjoint operator $ \widetilde{H}: \sF \supset \dom(\widetilde{H}) \to \cD_\sF $. So far we have by Theorem \ref{thm:pullback} that $ \widetilde{H}: (S_- \otimes \id)[\cD_{WS}] \to \FB_{\ex} $ is well-defined.\\
In order for the image of $ \widetilde{H} $ to be in $ \cD_\sF $, we need to restrict the domain of $ \widetilde{H} $ even further to some subspace $ \widetilde{\cD}_\sF \subset \cD_\sF $, defined in \eqref{eq:cDsFtilde}, and prove well-definedness of $ \widetilde{H}: \widetilde{\cD}_\sF \to \cD_\sF $ (Lemma \ref{lem:denselydefined}). The existence of a self-adjoint extension on some $ \dom(\widetilde{H}) \supset \widetilde{\cD}_\sF $ is then a simple consequence (Corollary \ref{cor:selfadjointext}).\\

\subsection{Existence of Self-adjoint Extensions}
\label{subsec:selfadjointext}

First, we verify that $ \widetilde{H} = H_{0, y} + V $ is well-defined and symmetric on a dense domain in $ \sF $.

\begin{definition}
Let $ \cQ_{\col, x} $ be the set of \textbf{collision configurations}, i.e., all fermion position space configurations
\begin{equation}
	\cQ_{\col, x} := \big\{ \bX \in \cQ_x \; \big\vert \; \exists j \neq j': \; \bx_j = \bx_{j'} \big\} \;.
\label{eq:Qcolx}
\end{equation}
Denote by
\begin{equation}
	\widetilde{\cC}_{WS} := \big\{ \Psi_m = \Psi_{mx} \otimes \Omega_y \; \big\vert \; \cF^{-1}(\Psi_{mx}) \in C_c^\infty(\cQ_x \setminus \cQ_{\col, x}) \big\} \;,
\label{eq:cCW0x}
\end{equation}
the ``cyclic set'' of functions whose support avoids the collision configurations (where $ \cF^{-1} $ is the inverse Fourier transform). We define the \textbf{small domain}, on which $ \widetilde{H} $ is initially defined as a Fock space operator as
\begin{equation}
	\widetilde{\cD}_\sF := (S_- \otimes \id) \big[ \span \big\{ W_{\sF, 1}(\varphi) \Psi_m \; \big\vert \; \varphi \in C_c^{\infty}(\RRR^d), \; \Psi_m \in \widetilde{\cC}_{WS} \big\} \big] \;,
\label{eq:cDsFtilde}
\end{equation}
see Figure \ref{fig:Qcol}. It is easy to see that $ \widetilde{\cD}_\sF \subseteq \cD_\sF $ and $ \widetilde{\cD}_\sF \subseteq (S_- \otimes \id)[\cD_{WS}] $.\\
\end{definition}

\begin{figure}[hbt]
	\centering
	\begin{tikzpicture}

\filldraw[dashed, fill = gray!10!white] (-6,-1) -- (-1.2,-1) -- (-0.84,0.8) -- (-4.2,0.8) -- cycle;
\filldraw[red!50!blue, fill opacity = 0.2] (-4.5,-0.5) .. controls (-4,-0.8) and (-2.5,-0.8) .. (-2,-0.5) .. controls (-1.5,-0.2) and (-1.5,0.2) .. (-2,0.6) .. controls (-2.5,1) and (-5,-0.2) .. cycle ;
\draw[red!50!blue] (-3,0.4) -- ++(-0.9,0.5) node[anchor = south] {$ \text{supp} \cF^{-1}(\Psi_{mx}) $};
\draw[thick,->] (-5,0) -- (-1,0) node[anchor = south west] {$ \bx_1 $};
\draw[thick,->] (-3.6,-1) -- (-2.32,0.8) node[anchor = south west] {$ \bx_2 $};
\node at (-3,0.2) {0};

\node[red!50!blue] at (-3.6,-1.5) {$ \cF^{-1}(\Psi_{mx}) $ appearing in $ \cC_{WS} $};

\filldraw[dashed, fill = gray!10!white] (7,-1) -- (2.2,-1) -- (1.84,0.8) -- (5.2,0.8) -- cycle;
\filldraw[red!50!blue, fill opacity = 0.2] (2.8,0.4) ellipse (0.4 and 0.12);
\filldraw[red!50!blue, fill opacity = 0.2] (5.35,-0.5) ellipse (0.6 and 0.18);
\draw[red!50!blue] (2.8,0.4) -- ++(-0.9,0.5) node[anchor = south] {$ \text{supp} \cF^{-1}(\Psi_{mx}) $};
\draw[thick,->] (2,0) -- (6,0) node[anchor = south west] {$ \bx_1 $};
\draw[thick,->] (4.6,-1) -- (3.32,0.8) node[anchor = south west] {$ \bx_2 $};
\draw[line width = 2,red] (2.2,-1) -- (5.2,0.78) node[anchor = south west] {$ \cQ_{\col, x} $};
\node at (3.7,0.2) {0};

\node[red!50!blue] at (4.6,-1.5) {$ \cF^{-1}(\Psi_{mx}) $ appearing in $ \widetilde{\cC}_{WS} $};

\end{tikzpicture}
	\caption{Within $ \cC_{WS} $, the fermionic wave functions must be in $ \cS $, allowing for any support (including compact ones). Within $ \widetilde{\cC}_{WS} $, only $ C_c^\infty $-functions are allowed with support avoiding the collision configurations $ \cQ_{\col, x} $.}
	\label{fig:Qcol}
\end{figure}

In the following Lemma, we will use that if $ \omega, v \in \Sdot_1 $ satisfy \eqref{eq:scalingfunctions} and \eqref{eq:scalingdispersion} (so they scale polynomially), then the potential function
\begin{equation}
	\hat{V} := v^* s = -\tfrac{v^* v}{\omega} \;,
\label{eq:Vhat}
\end{equation}
also scales polynomially with
\begin{equation}
	m_V = 2 m_v - m_\omega \;, \qquad \beta_V = 2 \beta_v - \beta_\omega \;.
\label{eq:Vscaling}
\end{equation}
Further, if $ \beta_V > -d $, then the inverse Fourier transform $ V = \cF^{-1}(\hat{V}) \in S'(\RRR^d) $ also exists, so we can make statements about the singular support of $ V $.

\begin{lemma}[$ \widetilde{H} $ is densely defined and symmetric]
The set $ \widetilde{\cD}_\sF $ is dense in $ \sF $.\\
Assume, that $ s, v, \omega \in \Sdot_1 $ with $ \omega, v $ satisfying \eqref{eq:scalingfunctions} and \eqref{eq:scalingdispersion}, as well as $ \beta_V > -d $. If now
\begin{equation}
	\mathrm{sing supp}(V) \subseteq \{0\} \;,
\label{eq:singsupp}
\end{equation}
then $ \widetilde{H} $ maps $ \widetilde{\cD}_\sF \to \cD_\sF $ and is thus densely defined. If in addition the symmetry condition \eqref{eq:symmetry} holds, then $ \widetilde{H} $ is symmetric.
\label{lem:denselydefined}
\end{lemma}
\begin{proof}
Density of $ \widetilde{\cD}_\sF $ in $ \sF $ is established as density of $ (S_- \otimes \id) \cD_{WS} $ in Lemma \ref{lem:coherentdense} (proof in Appendix \ref{sec:coherentdense}). We recall that by the last line of \eqref{eq:D1W1split},
\begin{equation*}
	D_1 W_{\sF, 1}(\varphi) \Psi_m = \Psi_{mx} \otimes W_y(\varphi) \Omega_y \;.
\end{equation*}
In the proof of Lemma \ref{lem:coherentdense}, we argue that $ (S_- \otimes \id) \cD_{WS} $ is dense in $ \sF $, since $ \Psi_{mx} \otimes W_y(\varphi) \Omega_y $ approximates any $ \Psi \in L^2(\cQ_x) \otimes \sF_y $ arbitrarily well. The transition from $ (S_- \otimes \id) \cD_{WS} $ to $ \widetilde{\cD}_\sF $ is achieved by a restriction to $ \Psi_{mx} \in \cF[C_c^\infty(\cQ_x \setminus \cQ_{\col, x} )] $. The set $ \cQ_{\col, x} $ is a union of hyperplanes on each sector in the fermionic configuration space, so $ C_c^\infty(\cQ_x \setminus \cQ_{\col, x} ) $ is dense in $ L^2(\cQ_x) $. The Fourier transform $ \cF $ is an isometry, so the allowed set for $ \Psi_{mx} $ is dense in $ L^2(\cQ_x) $. Thus, we can still approximate any $ \Psi \in L^2(\cQ_x) \otimes \sF_y $ arbitrarily well by $ \Psi_{mx} \otimes W_y(\varphi) \Omega_y $ and by the same arguments as in the proof of Lemma \ref{lem:coherentdense}, $ \widetilde{\cD}_\sF $ is dense in $ \sF $.\\

Now we verify that $ H_{0, y} $ maps $ \widetilde{\cD}_\sF \to \cD_\sF $. By linearity, it suffices to show well-definedness on all vectors of the form $ W_{\sF, 1}(\varphi) \Psi_m, \varphi \in C_c^{\infty}(\RRR^d) $. Denote by $ P_y^{(N)} $ the projection of $ \sF_y $ to the $ N $-boson sector $ \sF_y^{(N)} $, so $ \sum_{N \in \NNN_0} P_y^{(N)} = 1 $. The $ L^2 $-norm squares of $ W_{\sF, 1}(\varphi) \Psi_m $ are Poisson-distributed over $ N $, i.e.,
\begin{equation}
	\big\Vert P_y^{(N)} W_{\sF, 1}(\varphi) \Psi_m \big\Vert^2 = e^{-\Vert \varphi \Vert^2} \frac{\Vert \varphi \Vert^{2N}}{N!} \;,
\end{equation}
so for any $ 0 < q < 1 $ they decay faster than $ q^N $ in $ N $-direction. Now, define
\begin{equation}
	\lambda := \max_{\bk \in \supp (\varphi)} |\omega(\bk)| \;.
\end{equation}
Then,
\begin{equation}
	\big\Vert P_y^{(N)} H_{0, y} W_{\sF, 1}(\varphi) \Psi_m \big\Vert^2 \le N^2 \lambda^2 \big\Vert P_y^{(N)} W_{\sF, 1}(\varphi) \Psi_m \big\Vert^2 \;,
\end{equation}
which still decays faster than $ q^N $ in $ N $-direction. Thus, $ \big\Vert H_{0, y} W_{\sF, 1}(\varphi) \Psi_m \big\Vert < \infty $ and we have that $ H_{0, y} W_{\sF, 1}(\varphi) \Psi_m \in L^2(\cQ_x) \otimes \sF_y $.\\

It remains to be shown that $ V = V(v^* s) $ is well-defined, which amounts to proving that
\begin{equation*}
	(V \Psi)(\bP,\bK) = \sum_{j \neq j'} \int \hat{V}(\tilde{\bk}) \Psi(\bP + (e_j - e_{j'}) \tilde{\bk},\bK) \; d \tilde{\bk} \;,
\end{equation*}
defines an $ L^2 $-function on $ \cQ $. Since $ \beta_V > -d $, we have that $ \hat{V} \in L^1_{\loc} \Rightarrow \hat{V} \in \cS' $, so we can take the Fourier transform as in \eqref{eq:Vjjpositionspace}
\begin{equation*}
	(V \Psi)(\bX,\bY) = \sum_{j \neq j'} V(\bx_j - \bx_{j'}) \Psi(\bX,\bY) \;,
\end{equation*}
with $ V(\bx) = \cF^{-1}(\hat{V})(\bx) $.

Now, for $ \Psi = W_{\sF, 1}(\varphi) \Psi_m $ we obtain the position space representation by Fourier-transforming \eqref{eq:W1s}:
\begin{equation}
	\Psi(\bX,\bY) = \frac{e^{-\frac{\Vert \varphi \Vert^2}{2}}}{\sqrt{N!}} \left( \prod_{\ell = 1}^N \check\varphi(\by_{\ell} - \bx_1) \right) \Psi_{mx}(\bX) \;,
\end{equation}
where $ \check \varphi = \cF^{-1}(\varphi) $ is a Schwartz function, as $ \varphi \in C_c^\infty $ is Schwartz. So as $ \Psi_{mx}(\bX) $ is a smooth function with compact support apart from collision configurations in $ \cQ_x $, also $ \Psi $ is smooth, and it is zero at fermion collision configurations in $ \cQ $. Since the singular support of $ V(\bx) $ is at most $ \{0\} $, the multiplication function $ \sum_{j \neq j'} V(\bx_j - \bx_{j'}) $ is smooth on $ \supp(\Psi_{mx}) $ (which excludes collision configurations). And as $ \supp(\Psi_{mx}) $ is compact, there is some $ C_\Psi \in \RRR $ with
\begin{equation}
	\max_{\bX \in \supp(\Psi_{mx})} \left\vert \sum_{j \neq j'} V(\bx_j - \bx_{j'}) \right\vert \le C_\Psi \;.
\end{equation}
Further, by compactness of support, a maximum occupied fermion sector $ \overline{M} $ exists. So
\begin{equation*}
	\Vert V \Psi \Vert^2 \le \overline{M}^4 C_\Psi^2 \Vert \Psi \Vert^2 < \infty \;,
\end{equation*}
for $ \Psi \in \widetilde{\cD}_\sF $. Thus, $ V \Psi \in L^2(\cQ_x) \otimes \sF_y $.\\

Symmetrization for fermions by $ (S_- \otimes \id) $ is preserved by $ H_{0, y} $ and $ V $. Hence, indeed $ \widetilde{H} \Psi \in \sF $. And by Theorem \ref{thm:pullback}, we have that $ \widetilde{H} \Psi \in (S_- \otimes \id)[\cD_W] $ (otherwise, we could not apply $ W(s) $ to it).\\

Symmetry of $ \widetilde{H} $ is an obvious consequence of the symmetry condition \eqref{eq:symmetry}. And since $ \widetilde{H} $ preserves symmetry, it maps $ \widetilde{\cD}_\sF \to \cD_\sF = (S_- \otimes \id)[\cD_W \cap (L^2(\cQ_x) \otimes \sF_y)] $ (compare \eqref{eq:cDsF2}).\\
\end{proof}

\begin{corollary}[Existence of a self-adjoint extension]
$ \widetilde{H}: \widetilde{\cD}_\sF \to \cD_\sF $ as in Lemma \ref{lem:denselydefined} allows for a self-adjoint extension.
\label{cor:selfadjointext}
\end{corollary}
\begin{proof}
This is a direct consequence of \cite[Thm. X.3]{ReedSimon2} (von Neumann's theorem): For a symmetric operator $ \widetilde{H} $ (called $ A $ within \cite{ReedSimon2}), this theorem asserts that there is a self-adjoint extension, provided that a conjugation operator $ C: \widetilde{\cD}_\sF \to \widetilde{\cD}_\sF $ can be found, such that
\begin{equation}
	C \widetilde{H} = \widetilde{H} C \;.
\label{eq:conjugation}
\end{equation}
As a conjugation, we choose $ (C\Psi)(\bK) = \Psi(-\bK)^* $, which amounts to complex conjugation in particle--position representation. By symmetry \eqref{eq:symmetry} and since $ \omega $ is real-valued, $ \hat{V}(\bk) = \hat{V}(-\bk)^* $, so $ VC = CV $. And analogously, $ C H_{0, y} = H_{0, y} C $. Thus, \eqref{eq:conjugation} holds, and we have at least one self-adjoint extension.
\end{proof}

\paragraph{Remarks.}

\begin{enumerate}
\setcounter{enumi}{\theremarks}
\item \textit{Essential self-adjointness}: We cannot expect $ \widetilde{H} $ to be generally self-adjoint on $ \widetilde{\cD}_\sF $: The $ N = 0 $-sector of $ \widetilde{\cD}_\sF $ \eqref{eq:cDsFtilde} is
\begin{equation}
	\bigoplus_{M = 0}^\infty \widetilde{\cD}_\sF \cap \sF^{(M, 0)} = \cF^{-1}[C_c^\infty(\cQ_x \setminus \cQ_{\col, x})] \;.
\end{equation}
The removed diagonal $ \cQ_{\col, x} $ may now allow for putting various boundary conditions that result in different self-adoint extensions. An easy example is the case $ \theta(\bp) = |\bp|^2, \omega = 0, v = 0 $ for $ d = 3 $ (so $ V = 0 $). In the position representation of the $ N = 2 $-sector,
\begin{equation}
	\sF[\widetilde{\cD}_\sF \cap \sF^{(2, 0)}] = C_c^\infty(\RRR^6 \setminus \{\bx_1 = \bx_2\}) \;.
\end{equation}
Now, $ \widetilde{H} = H = H_0 $ acts as $ (- \Delta_{\bx_1} - \Delta_{\bx_2}) $ on this sector, which, in center-of-mass coordinates ($ \bx = \bx_1 - \bx_2 $) reduces to $ (- 2 \Delta_{\bx}) $ acting on $ C_c^\infty(\RRR^3 \setminus \{0\}) $. The latter Hamiltonian is well-known to have many self-adjoint extensions \cite[Ch.~I.1]{AlbeverioGesztesy} that correspond to either a free particle or a particle in a $ \delta $-potential of arbitrary strength. Including all sectors $ N \in \NNN_0 $, we even obtain more boundaries, which allow for further choices of boundary conditions.\\

However, in case $ \omega \ge 0 $ there is a distinguished self-adjoint extension of $ \widetilde{H} $ as in Lemma \ref{lem:denselydefined}: We may decompose $ \sF $ and $ \widetilde{H} $ into fibers as
\begin{equation}
	\sF = \int_{\bX \in \cQ_x}^\oplus \sF_y \;, \qquad
	\widetilde{H} = \int_{\bX \in \cQ_x}^\oplus \widetilde{H}_{\bX} \;,
\label{eq:fiberdecomposition}
\end{equation}
where
\begin{equation}
	\widetilde{H}_{\bX} = d \Gamma(\omega) + \sum_{j \neq j'} V(\bx_j - \bx_{j'}) \qquad
	\text{for } \bX \in \cQ_x \setminus \cQ_{\col, x} \;.
\label{eq:fiberdecomposition2}
\end{equation}
Since the diagonal $ \cQ_{\col, x} $ is a null set, $ \widetilde{H}_{\bX} $ is defined for almost every $ \bX \in \cQ_x $, so the fiber integral \eqref{eq:fiberdecomposition} makes sense. Since $ \widetilde{H}_{\bX} = d \Gamma(\omega) $ up to an $ \bX $-dependent constant, for $ \omega \ge 0 $, each $ \widetilde{H}_{\bX} $ is positive and can be realized as a self-adjoint Friedrichs extension on $ \sF_y $. The fiber integral \eqref{eq:fiberdecomposition} then readily defines a self-adjoint operator $ \widetilde{H} $.\\

\end{enumerate}
\setcounter{remarks}{\theenumi}

\section{Further Applications of the ESS}	
\label{sec:furtherdress}

Employing an ESS construction, the cancellation of infinities within IBC renormalization mentioned in the introduction can be presented in a particularly convenient way: Formally, one would re-arrange divergent terms such that they cancel and consider only the steps after the cancellation as mathematically rigorous. Using the ESS framework, we can turn the formal manipulations before and including the ``cancellation of infinities'' into a rigorous statement.\\
Effectively, the IBC construction uses a non-unitary dressing operator $ W = W_{\IBC}^{-1} = (1 + H_0^{-1} A^{\dagger})^{-1} $ defined on a subspace of $ \sF $, which replaces the $ W(s) $ from the above sections. We prove that under rather general assumptions, $ W_{\IBC} $ can be defined as a bijection $ W_{\IBC}: \FB \to \FB $ (Proposition \ref{prop:IBC}) and can thus be used to construct a renormalized Fock space.\\
Another example for (non-unitary) dressing transformations comes from constructive QFT. These transformations depend on the model and can be quite involved. We will use the ESS to define a prototypical dressing operator $ T = e^{- \bLambda/2} e^{-H_0^{-1} A^{\dagger}}: \FB \to \FB $, resembling typical CQFT dressing operators \cite[Ch. 3]{Friedrichs1965}, \cite[§ 3]{Glimm1968}. The question, how to define concrete CQFT dressing operators and if they allow for a cutoff-free renormalization, is an interesting topic for future research.\\

\subsection{Rigorous Cancellation of Infinities within IBC Renormalization}	
\label{subsec:IBCinfinities}

Let us briefly sketch the IBC renormalization technique used in \cite{I03, I04, I06, I07, I10} and show how the ESS framework serves for making cancellations of infinities both explicit and rigorous. We assume that $ \theta, \omega, v \in \Sdot_1 $ with $ \theta(\bp), \omega(\bk) > 0 $. For any self-energy operator $ E: \FB \to \FB_{\ex} $ we can define $ H_{\IBC} = H_0 + A^{\dagger} + A - E $ as $ H_{\IBC}: \FB \to \FB_{\ex} $ by Proposition \ref{prop:operatorextension}. The following (usually formal) manipulation now becomes a rigorous statement about equality of operators $ \FB \to \FB_{\ex} $:
\begin{equation}
\begin{aligned}
	H_{\IBC} = &H_0 + A^{\dagger} + A &&-E\\
	=&H_0 + A^{\dagger} + A + A H_0^{-1} A^{\dagger} &&- A H_0^{-1} A^{\dagger}-E\\
	=&\underbrace{(1 + A H_0^{-1}) H_0^{1/2}}_{=:S^*} \underbrace{H_0^{1/2} (1+H_0^{-1} A^{\dagger})}_{=:S} &&\underbrace{- A H_0^{-1} A^{\dagger}-E}_{=: T}\\
	=&S^* S + T \;.\\
\end{aligned}
\label{eq:formalmanipulationIBC}
\end{equation}
That is because $ H_0 $ is just a multiplication on momentum configuration space by a strictly positive function, so by the same argument as in the proof of Proposition \ref{prop:operatorextension}, $ H_0^{-1}, H_0^{1/2}: \FB \to \FB $ are well-defined.\\
 
In \eqref{eq:formalmanipulationIBC}, $ S^* $ is just a formal expression and not the adjoint of $ S $. However, in sufficiently regular cases, $ S^* S $ can be densely defined on $ \sF $ as a closed operator, with $ S^* $ being the adjoint of $ S $, so it is self-adjoint by \cite[Thm. X.25]{ReedSimon2}. Now, if $ T: \sF \supset \cD(T) \to \sF $ is a densely defined Kato-perturbation of $ S^* S $, that is
\begin{equation}
	\Vert T \Psi \Vert \le a \Vert S^* S \Psi \Vert + b \Vert \Psi \Vert \qquad \forall \Psi \in \dom(S^*S) \;,
\end{equation}
with $ a < 1 $, then $ H_{\IBC} $ is self-adjoint on $ \dom(S^*S) $ by the Kato--Rellich theorem \cite[X.12]{ReedSimon2}. In \cite{I03, I04, I06, I07, I10} this argument was successfully employed in order to establish self-adjointness of IBC Hamiltonians $ S^* S + T $ on the domain
\begin{equation}
	\dom(S^*S) = \big\{ \Psi \in \sF \; \big\vert \; (1 + H_0^{-1} A^{\dagger}) \Psi \in \dom(H_0) \big\} \;.
\label{eq:IBCdomain}
\end{equation} 
The condition $ (1 + H_0^{-1} A^{\dagger}) \Psi \in \dom(H_0) $ is called \textbf{interior--} or \textbf{abstract boundary condition}.\\

We remark that earlier works on IBC \cite{I00, I00b, I02} consider models where the integral $ \int \frac{v^2}{\omega} $ and thus $ E $ is finite. In essence, these works consider the same formal Hamiltonian as we do in \eqref{eq:formalhamiltonian2} and explicitly diagonalize it by a Weyl transformation as in Theorem \ref{thm:pullback}, but for much less singular $ v $ and $ \omega $.\\
In the most singular case, $ \int \frac{v^2}{\omega^2} = \infty \Leftrightarrow \frac{v}{\omega} \notin L^2 $, for any $ \Psi \in \dom(S^*S), \Psi^{(0)} \neq 0 $, we formally have
\begin{equation}
	((1 + H_0^{-1} A^{\dagger}) \Psi)^{(1)}(\bk) = \Psi^{(1)}(\bk) + \frac{v(\bk)}{\omega(\bk)} \Psi^{(0)} \;,
\end{equation}
which is required to define a function in $ \dom(\omega) \subset L^2 $. However, for $ \frac{v}{\omega} \notin L^2 $, this would imply $ \Psi^{(1)} \notin L^2 $, so $ \Psi $ cannot be an element of Fock space. Similar problems appear if $ \Psi^{(0)} = 0 $ but $ \Psi \neq 0 $, when replacing $ \Psi^{(0)} $ with the lowest occupied sector of $ \Psi $. So for $ \frac{v}{\omega} \notin L^2 $, the only Fock space vector in $ \dom(S^*S) $ is 0, and we expect IBC renormalization as above to fail. Indeed, there even exists a no-go theorem \cite{I11}, excluding any interactions mediated by boundary conditions at a point source for $ \omega $ corresponding to a Dirac particle.\\
Getting beyond the threshold where $ \dom(S^*S) $ becomes $ \{0\} $ as vectors ``move out of Fock space'' was a central motivation for introducing the ESS framework. Indeed, with Proposition \ref{prop:IBC}, $ W_{\IBC} = (1 + H_0^{-1} A^{\dagger}) $ can be defined as a bijective operator on the ESS $ \FB $, so we are able to define a generalized IBC domain
\begin{equation}
	\big\{ \Psi \in \FB \; \big\vert \; (1 + H_0^{-1} A^{\dagger}) \Psi \in \dom(H_0) \big\} = W_{\IBC}^{-1}[\dom(H_0) \cap \Sdot_\sF] \;,
\label{eq:ESSIBCdomain}
\end{equation}
which is identified by $ W_{\IBC} $ with the dense subspace $ \dom(H_0) \cap \Sdot_\sF \subset \sF $.\\

There are also recent works on IBC, which use strategies different from the one above to obtain quantum dynamics. In \cite{I09}, a self-adjoint Hamiltonian is reconstructed from resolvents. By contrast, in \cite{I05} quantum dynamics are formulated in terms of a multi-time integral equation with boundary conditions and solutions are constructed by a fixed point argument.\\
Further, there is a recent series of works about direct renormalization on spin--boson type models \cite{Lonigro2021, Lonigro2022, Lonigro2023}, which recover the IBC domain \eqref{eq:IBCdomain}, but use an explicit computation of deficiency spaces to prove self-adjointness. These works cover cases where $ v \notin L^2 $ and even $ \int \frac{v^2}{\omega} = \infty $, but still $ \int \frac{v^2}{\omega^2} < \infty $. We hope that the ESS framework allows progressing also to $ \int \frac{v^2}{\omega^2} = \infty $ in models of this kind.\\

\subsection{Renormalized Fock Spaces for IBC}	
\label{subsec:renormfockspace}

In order for \eqref{eq:ESSIBCdomain} to be a valid characterization of the generalized IBC domain, we need $ W_{\IBC}^{-1} $ to be well-defined, which is true under fairly mild assumptions.

\begin{proposition}
Let $ \theta, \omega, v \in \Sdot_1 $ with $ \theta > 0 $, $ \omega(\bk) > 0 $ for $ \bk \neq 0 $, and $ \omega(\bk) \ge c |\bk|^\beta $ in some neighborhood of $ \bk = 0 $ for some $ c, \beta > 0 $. Define $ H_0 $ and $ A^\dagger = A^\dagger(v) $ as in Proposition \ref{prop:operatorextension} (via \eqref{eq:H0} and \eqref{eq:adaggerp}).\\
Then, the operator $ W_{\IBC} = (1 + H_0^{-1} A^\dagger): \FB \to \FB $ is bijective.
\label{prop:IBC}
\end{proposition}
We remark that the technical condition $ \omega(\bk) \ge c |\bk|^\beta $ near $ \bk = 0 $ is only necessary to make $ H_0^{-1} $ well-defined. This would also be true for the stronger condition $ \omega > 0 $, which, however, excludes the physically interesting cases $ \omega(\bk) = |\bk| $ and $ \omega(\bk) = |\bk|^2 $.\\
\begin{proof}
We show that $ W_{\IBC}: \Sdot_\sF \to \Sdot_\sF $ is bijective by constructing $ W_{\IBC}^{-1}: \Sdot_\sF \to \Sdot_\sF $. Bijectivity of $ W_{\IBC} $ on $ \FB $ then readily follows by linearity with respect to the field $ \eRen $.\\

First, note that $ H_0^{-1}: \Sdot_\sF \to \Sdot_\sF $ is well-defined. That is because $ H_0 $ is a multiplication on configuration space by the function
\begin{equation}
	H_0(\bP, \bK) = \sum_{j = 1}^M \theta(\bp_j) + \sum_{\ell = 1}^N \omega(\bk_\ell) \;,
\end{equation}
which is nonzero on all configurations $ (\bP, \bK) $ that do not contain a zero boson momentum, i.e., everywhere apart from $ \exists(\bk = 0) $ (defined in \eqref{eq:existsk0}). So $ H_0(\bP, \bK)^{-1} $ is also defined apart from $ \exists(\bk = 0) $. Now, in the vicinity of $ \exists(\bk = 0) $, $ H_0(\bP, \bK) $ is bounded from below by $ c \sum_{\ell = 1}^N |\bk_\ell|^\beta $, so $ H_0(\bP, \bK)^{-1} $ blows up at most polynomially in each $ |\bk_\ell| $. Thus, the multiplication by $ H_0(\bP, \bK)^{-1} $ is a map $ \Sdot_\sF \to \Sdot_\sF $ that defines the operator $ H_0^{-1} $.\\

On $ \Sdot_\sF $, we now define $ W_{\IBC}^{-1} $ by the Neumann series:
\begin{equation}
	W_{\IBC}^{-1} = (1 + H_0^{-1} A^{\dagger})^{-1} := \sum_{k=0}^\infty (- H_0^{-1} A^\dagger)^k \;.
\label{eq:neumannseries}
\end{equation}

\underline{Claim:} \eqref{eq:neumannseries} defines an operator $ \Sdot_\sF \to \Sdot_\sF $.\\

\underline{Proof of the Claim:} Each $ - H_0^{-1} A^{\dagger} $ increases the boson number by 1. So $ (- H_0^{-1} A^\dagger)^k \Psi $ is only supported on configuration space sectors with $ N \ge k $. Hence, on each $ (\bP,\bK) \in \Qdot $ with $ \bK \in \RRR^{Nd} $, we have that
\begin{equation}
\begin{aligned}
	\Big( (- H_0^{-1} A^\dagger)^k \Psi \Big)(\bP,\bK) &= 0 \qquad \text{for }k > N\\
	\Rightarrow \quad \left( \sum_{k=0}^\infty (- H_0^{-1} A^\dagger)^k \Psi \right)(\bP,\bK) &= \left( \sum_{k=0}^N (- H_0^{-1} A^\dagger)^k \Psi \right)(\bP,\bK) \;.\\
\end{aligned}
\label{eq:creationvanishing}
\end{equation}

Since $ H_0^{-1} A^{\dagger} $ maps $ \Sdot_\sF $ to itself, also all sums $ \sum_{k=0}^n (- H_0^{-1} A^\dagger)^k $ with $ n \in \NNN $ map $ \Sdot_\sF \to \Sdot_\sF $. So the Neumann series is defined on each $ N $-boson sector and hence on all of $ \Sdot_\sF $. \hfill $ \Diamond $\\

\underline{Claim:} The Neumann series \eqref{eq:neumannseries} is the inverse of $ (1 + H_0^{-1} A^{\dagger}) $.\\

\underline{Proof of the Claim:} We use the first line of \eqref{eq:creationvanishing} and perform a sector-wise verification:
\begin{equation}
\begin{aligned}
	&(1 + H_0^{-1} A^{\dagger}) \left( \sum_{k=0}^\infty (- H_0^{-1} A^\dagger)^k \Psi \right)(\bP,\bK)\\
	 = &\left( \sum_{k=0}^\infty (- H_0^{-1} A^\dagger)^k \Psi - \sum_{k=1}^\infty (- H_0^{-1} A^\dagger)^k \Psi \right)(\bP,\bK)\\
	 \overset{\eqref{eq:creationvanishing}}{=} &\left( \sum_{k=0}^N (- H_0^{-1} A^\dagger)^k \Psi - \sum_{k=1}^N (- H_0^{-1} A^\dagger)^k \Psi \right)(\bP,\bK)\\
	 = &\Psi(\bP,\bK) \;.\\
\end{aligned}
\end{equation} 
So indeed $ (1 + H_0^{-1} A^{\dagger}) \left( \sum_{k=0}^\infty (- H_0^{-1} A^\dagger)^k \right) = 1 $. \hfill $ \Diamond $\\

So $ W_{\IBC}^{-1}: \Sdot_\sF \to \Sdot_\sF $ defined in \eqref{eq:neumannseries} is the (well-defined) inverse of $ W_{\IBC} $ on $ \Sdot_\sF $, which establishes the proposition.\\
\end{proof}

With Proposition \ref{prop:IBC}, we have a well-defined linear space $ W_{\IBC}^{-1}[\Sdot_\sF \cap \sF] $ that can be equipped with a scalar product
\begin{equation}
	\langle W_{\IBC}^{-1} \Psi, W_{\IBC}^{-1} \Phi \rangle_{\renI} := \langle \Psi, \Phi \rangle \qquad \text{for } \Psi, \Phi \in \Sdot_\sF \cap \sF \;.
\end{equation}
The completion of $ W_{\IBC}^{-1}[\Sdot_\sF \cap \sF] $ with respect $ \langle \cdot, \cdot \rangle_{\renI} $ is a Hilbert space $ \sF_{\renI} $, which we call the \textbf{IBC-renormalized Fock space}. $ H_{\IBC} $ is then defined on $ \sF_{\renI} $. The pullback to $ \sF $ reads
\begin{equation}
	\widetilde{H}_{\IBC} = W_{\IBC} H_{\IBC} W_{\IBC}^{-1} \;.
\label{eq:HIBCpullback1}
\end{equation}
Whenever the expression \eqref{eq:HIBCpullback1} extends to a self-adjoint operator, $ H_{\IBC} $ extends to a self-adjoint operator on $ \sF_{\renI} $.\\

\paragraph{Remarks.}

\begin{enumerate}
\setcounter{enumi}{\theremarks}
\item Another IBC-renormalized Hamiltonian would formally be given by the expression $ (W_{\IBC}^*)^{-1} H_{\IBC} W_{\IBC}^{-1} $ with $ W_{\IBC}^* = 1 + A H_0^{-1} $. In view of \eqref{eq:formalmanipulationIBC}, this expression would even be more desirable, since $ (W_{\IBC}^*)^{-1} $ would eliminate the factor of $  (1 + A H_0^{-1}) $ inside $ S^* $. However, we only know of $ A $ that it maps $ \FB \to \FB_{\ex} $, so $ A H_0^{-1} A $ and hence a Neumann series for $ (W_{\IBC}^*)^{-1} $ is generally ill-defined. If one succeeds to refine the definitions of $ \FB $ and $ \FB_{\ex} $ in such a way that also higher powers of $ A $ are well-defined, then the expression $ (W_{\IBC}^*)^{-1} H_{\IBC} W_{\IBC}^{-1} $ might be used to define a renormalized Hamiltonian. Still, such a refinement is beyond the scope of this paper.\\

\end{enumerate}
\setcounter{remarks}{\theenumi}



\subsection{The $ e^{- H_0^{-1} A^{\dagger}} $-Transformation Inspired by CQFT}	
\label{subsec:hepptrafo}

In the CQFT literature \cite{Friedrichs1965, Glimm1968, Hepp}, operators of a form similar to $ T \propto e^{- H_0^{-1} A^{\dagger}} $ appear. Further, within \cite{Glimm1968}, a renormalized scalar product of the kind
\begin{equation}
	\langle T \Psi, T \Phi \rangle_{\ren} := \lim_{\Lambda \to \infty} \langle T_\Lambda \Psi, T_\Lambda \Phi \rangle e^{-\bLambda_\Lambda} \qquad \forall T \Psi, T \Phi \in T[\cD], \quad \cD \subset \sF 
\end{equation}
is constructed, where the operators $ T, T_\Lambda $ in \cite{Glimm1968} are, however, much more involved. Here, $ \bLambda_\Lambda = 4! \Vert \gamma^{-1} v_{4, \Lambda} \Vert_2^2 $, where $ v_4: \RRR^{4d} \to \CCC $ is a cut-off form factor of the $ \Phi^4 $-interaction and $ \gamma(\bk_1, \ldots, \bk_4) = \sum_{\ell = 1}^4 \omega(\bk_\ell) $. The renormalized Hamiltonian is then constructed by a limiting procedure \cite[Thm.~4.1.1]{Glimm1968}:
\begin{equation}
	\langle T \Psi, H_{\ren} T \Phi \rangle_{\ren} = \lim_{\Lambda \to \infty} \langle T_\Lambda \Psi, H_{\ren}(\Lambda) T_\Lambda \Phi \rangle e^{-\bLambda_\Lambda} \;,
\end{equation}
with $ H_{\ren}(\Lambda) $ containing cut-off counterterms. The limit $ \Lambda \to \infty $ formally leads to an infinite wave function renormalization of the kind $ e^{- \bLambda/2} $.\\

As a simplified divergent term in our prototypical transformation, we choose
\begin{equation}
	\bLambda = 4! \Vert \omega^{-1} v \Vert_2^2 \;.
\end{equation}
The ESS construction then allows to interpret $ \bLambda \in \Ren_1 $. In the simplified case $ T \propto e^{- H_0^{-1} A^{\dagger}} $, we can even define the dressing transformation directly on $ \FB $:
\begin{proposition}
	Under the assumptions of Proposition \ref{prop:IBC}, for $ \Psi \in \Sdot_\sF $, we have
\begin{equation}
	T \Psi := e^{- \bLambda/2} e^{-H_0^{-1} A^{\dagger}} \Psi \in \FB \;,
\end{equation}
with $ e^{- \bLambda/2} \in \eRen $. In particular, $ e^{-H_0^{-1} A^{\dagger}} $ and $ T $ are well-defined linear operators
\begin{equation*}
	e^{-H_0^{-1} A^{\dagger}}: \Sdot_\sF \to \Sdot_\sF \;, \qquad
	T: \Sdot_\sF \to \FB \;,
\end{equation*}
that extend by $ \eRen $-linearity to $ \FB \to \FB $.\\
\label{prop:Hepptrafo}
\end{proposition}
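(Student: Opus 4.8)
The plan is to treat the two factors of $T$ separately — the scalar prefactor $e^{-\Lambda/2}$ and the operator $e^{-H_0^{-1}A^\dagger}$ — and then recombine them using the $eRen$-vector space structure of $\FB$.

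First I would check that $e^{-\Lambda/2} \in eRen$. The constant $\Lambda$ is a norm-square $\langle f, f \rangle$ of a function $f \in \Sdot_1$ (in the $\phi^4$ notation $f \propto \omega v$, and $\omega v \in \Sdot_1$ since $\omega, v \in \Sdot_{1,>} \subseteq \Sdot_1$ and $\Sdot_1$ is closed under products). Because $|f|^2 \geq 0$, the pairing lies in the positive cone, so $\Lambda \in Ren_{1+} \subseteq Ren_1$. Then $e^{-\Lambda/2}$ is the exponential of a renormalization factor, which by Definition \ref{def:eren} represents an element of $\CCC[Ren_1]/\cI \subseteq eRen$.

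The core of the argument is to show that $e^{-H_0^{-1}A^\dagger}$ maps $\Sdot_\sF$ into itself, and here I would follow the Neumann-series strategy of Proposition \ref{prop:IBC} almost verbatim; the only difference is the factorial weights $1/k!$, which play no role in the termination argument. Two observations suffice. (i) The operator $H_0^{-1}A^\dagger$ maps $\Sdot_\sF \to \Sdot_\sF$: by Lemma \ref{lem:operatorextension}, $A^\dagger$ maps $\Sdot_\sF \to \Sdot_\sF$, and since every summand carries at least one boson created by $A^\dagger$, the subsequent $H_0^{-1}$ acts on a sector where $E(\bP,\bK) = \sum_j \theta(\bp_j) + \sum_l \omega(\bk_l) > 0$ on $\Qdot$ (because $\omega > 0$ away from the origin by exact scaling and $\bk_l \neq 0$ on $\Qdot$), so $H_0^{-1}$ is multiplication by a function smooth on $\Qdot$ with polynomial scaling and thus preserves $\Sdot_\sF$. (ii) Each factor $-H_0^{-1}A^\dagger$ raises the boson number by exactly one, so $(-H_0^{-1}A^\dagger)^k \Psi$ is supported only on sectors with $N \geq k$. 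Consequently, on a fixed $N$-boson sector the exponential series truncates,
\begin{equation*}
\left( e^{-H_0^{-1}A^\dagger} \Psi \right)(\bP,\bK) = \left( \sum_{k=0}^{N} \tfrac{1}{k!} (-H_0^{-1}A^\dagger)^k \Psi \right)(\bP,\bK),
\end{equation*}
a finite sum of $\Sdot_\sF$-elements, hence itself in $\Sdot_\sF$. As this holds sector-by-sector, $e^{-H_0^{-1}A^\dagger}\Psi \in \Sdot_\sF$, and linearity is immediate.

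Finally I would assemble the pieces: since $\FB$ consists of finite $eRen$-linear combinations of functions in $\Sdot_\sF$ (Definition \ref{def:FB}), multiplying $e^{-H_0^{-1}A^\dagger}\Psi \in \Sdot_\sF \subset \FB$ by the scalar $e^{-\Lambda/2} \in eRen$ keeps the result in $\FB$, giving $T\Psi \in \FB$. The main obstacle I anticipate is not the termination of the series, which is clean and entirely parallel to Proposition \ref{prop:IBC}, but the bookkeeping in observation (i): one must confirm that on every sector visited by the series — all of which carry at least one boson — $H_0$ is strictly positive on $\Qdot$ and that the induced reciprocal, including its possible IR pole of order governed by $\beta_\omega$, still obeys polynomial scaling and hence remains within $\Sdot_\sF$. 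This is precisely where the hypotheses $\theta, \omega \in \Sdot_{1,>}$ together with $\beta_\theta, \beta_\omega, m_\theta, m_\omega \geq 0$ enter.
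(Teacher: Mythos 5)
Your proposal is correct and follows essentially the same route as the paper's proof: the sector-wise truncation of the exponential series (each factor $-H_0^{-1}A^\dagger$ strictly raises the boson number, so only finitely many terms contribute on any fixed sector), each term lying in $\Sdot_\sF$ by Lemma \ref{lem:operatorextension}, and then multiplication by the scalar $e^{-\Lambda/2}\in eRen$ to land in $\FB$. Your additional care in observation (i) — justifying invertibility of $H_0$ via the boson created by $A^\dagger$ and the exact scaling of $\omega$, rather than assuming $\theta>0$ as the paper does in the preceding IBC subsection — and your explicit verification that $\Lambda\in Ren_{1+}$ are welcome refinements of details the paper leaves implicit, but they do not change the structure of the argument.
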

\begin{proof}
As argued in the proof of Proposition \ref{prop:IBC}, $ H_0^{-1} A^\dagger $ is well-defined on $ \Sdot_\sF $ and maps Fock space vectors supported on the $ N $-particle sector to those supported on the $ N+1 $-particle sector. We can now write the exponential series as
\begin{equation}
	e^{- H_0^{-1} A^{\dagger}} = \sum_{k=0}^{\infty} \frac{(- H_0^{-1} A^{\dagger})^k}{k!} = 1 - H_0^{-1} A^{\dagger} + \frac{1}{2} H_0^{-1} A^{\dagger}H_0^{-1} A^{\dagger} - \ldots \;.
\label{eq:eH0Adagger}
\end{equation}
Each $ - H_0^{-1} A^{\dagger} $ maps $ \Sdot_\sF $ to itself and strictly increases the sector number. So the $ N $-sector of $ e^{- H_0^{-1} A^{\dagger}} \Psi $ may only depend on at most $ N+1 $ terms of the series \eqref{eq:eH0Adagger}. All terms are elements of $ \Sdot_\sF \subset \FB $. Hence, $ e^{- H_0^{-1} A^{\dagger}} $ maps $ \Sdot_\sF $ into itself, as claimed. Clearly, the factor $ e^{- \bLambda/2} $ is an element of $ \eRen $, so $ T \Psi \in \FB $.\\
\end{proof}

Of course, the transformation $ T_\Lambda $ appearing in \cite{Glimm1968} is much more involved, and it remains an interesting and highly nontrivial question whether a corresponding $ T $ can be defined using an ESS construction adapted to that model. \\

\begin{appendices}
\addtocontents{toc}{\setcounter{tocdepth}{-2}}

\section{Proof of Lemma \ref{lem:coherentdense}}
\label{sec:coherentdense}

\begin{proof}
We show that $ \cD_{WS} $ is dense in $ L^2(\cQ_x) \otimes \sF_y $. Since $ (S_- \otimes \id): L^2(\cQ_x) \otimes \sF_y \to \sF $ is surjective and bounded with norm 1, we then immediately get density of $ (S_- \otimes \id)[\cD_{WS}] $ within $ \sF $.\\

If $ A^{\dagger}_1(\varphi) $ created a boson without giving a recoil to a fermion, we would be done: In this case, $ W_{\sF, 1}(\varphi) \Psi_m $ would be of the form
\begin{equation}
	\Psi_{mx} \otimes W_y(\varphi) \Omega_y \;.
\label{eq:Wnmequality}
\end{equation}
Now, $ \Psi_{mx} \in \cS(\cQ_x) $, which is dense in $ L^2(\cQ_x) $. Further, we have that $ \span\{W_y(\varphi) \Omega_y \; \mid \; \varphi \in \fh \cap \Sdot_1\} $ is dense in $ \sF_y $, so
\begin{equation}
	\span \big\{ \Psi_{mx} \otimes W_y(\varphi) \Omega_y \; \big\vert \; \varphi \in \fh \cap \Sdot_1, \; \Psi_{mx} \in \cS(\cQ_x) \big\} \;,
\label{eq:Wnmdense}
\end{equation}
is dense in $ L^2(\cQ_x) \otimes \sF_y $.\\
However, $ W_{\sF, 1}(\varphi) \Psi_m $ is not of the form \eqref{eq:Wnmequality}, since $ W_{\sF, 1}(\varphi) $ shifts the momentum $ \bp_1 $ by $ \sum_{\ell = 1}^N \bk_{\ell} $, as in \eqref{eq:Ws1} (the first fermion gets a recoil). The same recoil occurs when applying $ A_1^\dagger(v) $. In other words, creation and dressing ``entangle'' the fermion with the created boson by giving the fermion a recoil. In order to solve this problem, we introduce a \textbf{disentangling operator} $ D_1: L^2(\cQ_x) \otimes \sF_y \to L^2(\cQ_x) \otimes \sF_y $, which removes all recoils:
\begin{equation}
	(D_1 \Psi)(\bP,\bK) := \Psi \left(\bP - \sum_{\ell = 1}^N e_1 \bk_{\ell} , \bK \right) \;.
\end{equation}
Clearly, $ D_1 $ is unitary. Now,
\begin{equation}
\begin{aligned}
	(W_{\sF, 1}(\varphi) \Psi_m)(\bP,\bK) &= \frac{e^{-\frac{\Vert \varphi \Vert^2}{2}}}{\sqrt{N!}} \left( \prod_{\ell = 1}^N \varphi(\bk_{\ell}) \right) \Psi_{mx} \left( \bP + \sum_{\ell = 1}^N e_1 \bk_{\ell} \right)\\
	\Rightarrow (D_1 W_{\sF, 1}(\varphi) \Psi_m)(\bP,\bK) &= \frac{e^{-\frac{\Vert \varphi \Vert^2}{2}}}{\sqrt{N!}} \left( \prod_{\ell = 1}^N \varphi(\bk_{\ell}) \right) \Psi_{mx}(\bP)\\
	\Leftrightarrow D_1 W_{\sF, 1}(\varphi) \Psi_m &= \Psi_{mx} \otimes \sum_{N=0}^\infty \frac{e^{-\frac{\Vert \varphi \Vert^2}{2}}}{\sqrt{N!}} \; \underbrace{\varphi \otimes \ldots \otimes \varphi}_{N \text{ times}}\\
	&= \Psi_{mx} \otimes W_y(\varphi) \Omega_y \;.
\end{aligned}
\label{eq:D1W1split}
\end{equation}

So by density of \eqref{eq:Wnmdense}, we have that
\begin{equation*}
	\span \big\{ D_1 W_{\sF, 1}(\varphi) \Psi_m \; \big\vert \; \varphi \in \fh \cap \Sdot_1, \; \Psi_m \in \cC_{WS} \big\} \;,
\end{equation*}
is dense in $ L^2(\cQ_x) \otimes \sF_y $. And since $ D_1 $ is unitary, its preimage
\begin{equation*}
	\cD_{WS} = \span \big\{ W_{\sF, 1}(\varphi) \Psi_m \; \big\vert \; \varphi \in \fh \cap \Sdot_1, \; \Psi_m \in \cC_{WS} \big\} \;,
\end{equation*}
is dense in $ L^2(\cQ_x) \otimes \sF_y $, as well. So $ (S_- \otimes \id)[\cD_{WS}] $ is dense in $ \sF $.\\
\end{proof}

\section{Proof of Lemma \ref{lem:commutationW}}
\label{sec:commutationW}

\begin{proof}
For evaluating $ [A^\dagger,W] $, we first consider the simple case where $ \Psi $ is replaced by a boson-only vector $ \Psi_y \in \sF_y $. A vector with one boson can be written as:
\begin{equation}
	\phi = a^{\dagger}(\phi) \Omega_y \;,
\label{eq:singleboson}
\end{equation}
with $ \phi \in \fh = \sF_y^{(1)} $. A coherent displacement can then be described using
\begin{equation}
	W_y(\varphi) a^{\dagger}(\phi) = a^{\dagger}(\phi) W_y(\varphi) - [a^{\dagger}(\phi), W_y(\varphi)] \;,
\label{eq:Wa_aW_commutator}
\end{equation}
where $ \varphi \in \fh \cap \Sdot_1 $. The first expression is easily computed in momentum space
\begin{equation}
\begin{aligned}
	(a^{\dagger}(\phi) \Psi_y)(\bK) &= \frac{1}{\sqrt{N}} \sum_{\ell = 1}^N \phi(\bk_{\ell}) \Psi_y(\bK \setminus \bk_{\ell})\\
	\Rightarrow \quad (a^{\dagger}(\phi) W_y(\varphi) \Omega_y)(\bK) &= \frac{e^{-\frac{\Vert \varphi \Vert^2}{2}}}{\sqrt{N!}} \sum_{\ell = 1}^N \phi(\bk_{\ell}) \left( \prod_{\ell' \neq \ell} \varphi(\bk_{\ell'}) \right) \;.\\
\end{aligned}
\label{eq:agadderWns}
\end{equation}

The commutator $ [a^{\dagger}(\phi), W_y(\varphi)] $ is computed using
\begin{equation}
\begin{aligned}
	&(*) \quad &&[a^{\dagger}(\varphi), a^{\dagger}(\phi)] = 0 \;,\\
	&(**) \quad &&[a(\varphi), a^{\dagger}(\phi)] = \langle \varphi,\phi \rangle \;.
\end{aligned}
\label{eq:starstar}
\end{equation}
We have
\begin{equation}
\begin{aligned}
	&[a^{\dagger}(\phi), W_y(\varphi)] = [a^{\dagger}(\phi), e^{a^{\dagger}(\varphi) - a(\varphi)}] = \sum_{k=0}^\infty \frac{1}{k!} [a^{\dagger}(\phi), (a^{\dagger}(\varphi) - a(\varphi))^k]\\
	\overset{(*)}{=} &\sum_{k=0}^\infty \frac{(-1)^k}{k!} \left( [a^{\dagger}(\phi), a(\varphi)] (a^{\dagger}(\varphi) - a(\varphi))^{k-1} \right.\\
	&+ (a^{\dagger}(\varphi) - a(\varphi)) [a^{\dagger}(\phi), a(\varphi)] (a^{\dagger}(\varphi) - a(\varphi))^{k-2}\\
	& \left. + \ldots + (a^{\dagger}(\varphi) - a(\varphi))^{k-1} [a^{\dagger}(\phi), a(\varphi)] \right)\\
	\overset{(**)}{=} &- \sum_{k=1}^\infty \frac{(-1)^k}{k!} k \langle \varphi,\phi \rangle (a^{\dagger}(\varphi) - a(\varphi))^{k-1} = \sum_{k=0}^\infty \frac{(-1)^k}{k!} \langle \varphi,\phi \rangle (a^{\dagger}(\varphi) - a(\varphi))^{k}\\
	= & \langle \varphi,\phi \rangle W_y(\varphi) \;.\\
\end{aligned}
\label{eq:pullthrough}
\end{equation}

It is easy to check that all above formulas hold as strong operator identities on the space of finite-boson states $ \sF_{\fin, y} $ (defined in \eqref{eq:sFfiny}). By \eqref{eq:nelsonestimate}, an application of $ k $ operators of the kind\footnote{Here, we allow $ \sharp $ to be a different superscript in each factor. E.g., $ (a^\sharp)^4 $ would also represent $ a a^\dagger a a^\dagger $.} $ a^\sharp(\varphi) \in \{ a(\varphi), a^\dagger(\varphi) \} $ to $ \Psi_y \in \sF_{\fin, y} $ is bounded by
\begin{equation}
	\left\Vert \big( a(\varphi)^\sharp \big)^k \Psi_y \right\Vert \le 
	\sqrt{\frac{(N_{\max} + k)!}{N_{\max}!}} \Vert \Psi_y \Vert \; \Vert \varphi \Vert^n \;.
\label{eq:nelsonboundn}
\end{equation}
This allows to estimate
\begin{equation}
	\big\Vert a^\sharp(\phi) W_y(\varphi) \Psi \big\Vert \le \sum_{k \in \NNN_0} \frac{1}{k!} \sqrt{\frac{(N_{\max} + k + 1)!}{N_{\max}!}} \Vert \Psi_y \Vert \; \Vert 2 \varphi \Vert^n \; \Vert \phi \Vert < \infty \;,
\end{equation}
and an analogous estimate shows that $ W_y(\varphi) a^\sharp(\phi) $ is well-defined.\\

Putting together \eqref{eq:Wa_aW_commutator} and \eqref{eq:pullthrough}, we obtain the action of $ W_y(\varphi) $ on single-boson states:
\begin{equation}
W_y(\varphi) a^{\dagger}(\phi) \Omega_y = ( a^{\dagger}(\phi) - \langle \varphi,\phi \rangle) W_y(\varphi) \Omega_y \;.
\label{eq:Wadagger}
\end{equation}

Now, we turn to state vectors with many fermions and one boson, $ A_1^{\dagger}(\phi) \Psi_m \in L^2(\cQ_x) \otimes \sF_y $. Further, we go over from dressings by $ W_y(\varphi) $ to $ W_{\sF, j}(\varphi) $, which is done replacing $ a(\phi), a^{\dagger}(\phi) $ by $ A_j(\phi), A_j^{\dagger}(\phi) $. Note that $ A_j(\phi), A_j^{\dagger}(\phi) $ are no longer merely creating and annihilating bosons, but they also shift a fermion's momentum. Computations in \eqref{eq:pullthrough} run through in almost the same manner. We have to replace $ (**) $ by the CCR \eqref{eq:CCR}. If $ j \neq j' $, we further use that $ V_{jj'}(\varphi^* \phi) $ (which replaces $ \langle \varphi,\phi \rangle $ in \eqref{eq:pullthrough}) commutes with\footnote{This is shown by a similar fiber decomposition argument, as used in the proof of Lemma \ref{lem:Xcommute}.} $ A_{j''}(\varphi) $ and $ A_{j''}^\dagger(\varphi) $, so we can still pull it to the left.\\

The final result is
\begin{equation}
	W_{\sF, j}(\varphi) A_{j'}^{\dagger}(\phi) = \begin{cases} (A_{j'}^{\dagger}(\phi) - \langle \varphi,\phi \rangle) W_{\sF, j}(\varphi) \quad &\text{if } j = j' \\ (A_{j'}^{\dagger}(\phi) - V_{jj'}(\varphi^* \phi) ) W_{\sF, j}(\varphi) \quad &\text{if } j \neq j' \end{cases} \;.
\label{eq:Ws1nAdagger1}
\end{equation}
This is one of the four identities claimed in \eqref{eq:WAdaggers}. The other three identities follow by summation over $ j $ or $ j' $.\\

The four identities in \eqref{eq:WAs} are obtained analogously. In place of $ (**) $, we use the CCR \eqref{eq:CCR} together with $ [A_j(\varphi), A_{j'}(\phi)] = 0 $ and keep in mind that the factor $ (-1)^k $ drops out in \eqref{eq:pullthrough}, which yields the desired result.\\

It is easy to check that all identities above hold as strong operator identities on the domain $ \sF_{\fin} $ defined in \eqref{eq:sFfinESS}. The momentum space definitions of $ A_j^\dagger(\varphi), A_j(\varphi) $ in \eqref{eq:adaggerp} and \eqref{eq:ap} directly yield the well-known estimates
\begin{equation}
	\big\Vert A_j^\dagger(\varphi) \Psi \big\Vert \le \big\Vert (N+1)^{1/2} \Psi \big\Vert \; \Vert \varphi \Vert \;, \qquad
	\big\Vert A_j(\varphi) \Psi \big\Vert \le \big\Vert N^{1/2} \Psi \big\Vert \; \Vert \varphi \Vert \;,
\label{eq:nelsonestimateA}
\end{equation}
which are analogous to \eqref{eq:nelsonestimate} and allow for employing the same arguments as below \eqref{eq:sFfiny}. Thus, all expressions in \eqref{eq:WAdaggers} and \eqref{eq:WAs} are well-defined on $ \sF_{\fin} $.\\
\end{proof}

\paragraph{Remarks.}

\begin{enumerate}
\setcounter{enumi}{\theremarks}
\item If we replace the form factors $ \varphi(\bk) $ and $ \phi(\bk) $ in $ A^{\dagger}(\phi) $ by $ \varphi(\bp, \bk) $ and $ \phi(\bp, \bk) $ (as they appear in more realistic QFT models), then $ [A_j^\dagger(\varphi), A_{j'}^\dagger(\phi)] = 0 $ will no longer hold true. The operators $ A_j^{\dagger}(\phi) $ change the momentum of the fermion which is emitting a boson. So when one fermion creates two bosons with different form factors, which depend on the fermion momentum, then it can make a difference, which boson is created first. In addition, $ V_{j'j''} $ will no longer commute with $ A_j $ and $ A_j^{\dagger} $. In this case, several multi-commutators of the form $ [A_{j_1}^{\sharp},[A_{j_2}^{\sharp}, \ldots, [A_{j_n}^{\sharp},A_{j'}^{\sharp}]]] $ with $ A_j^\sharp \in \{A_j ,A_j^\dagger \} $ appear.
\end{enumerate}
\setcounter{remarks}{\theenumi}

\section{Proof of Lemma \ref{lem:coherentcombination}}
\label{sec:coherentcombination}

\begin{proof}
First, note that $ \Psi'_{m, k} \neq 0 $ and $ v_k \neq 0 $, since otherwise at least one vector $ \Psi_k $ would be 0.\\

We now establish the linear independence by an induction over $ K $.\\
For $ K = 1 $, linear independence is an obvious fact. Also for $ |\cK_{WA} | = | \cK_W | = 1 $, linear independence is easy to see: The $ (N) $-sectors have the norms
\begin{equation}
\begin{aligned}
	\left\Vert (W_{\sF, 1} (\varphi_1) \Psi'_{m, 1})^{(N)} \right\Vert^2 = &\frac{e^{- \Vert \varphi_1 \Vert^2}}{N!} \Vert \varphi_1 \Vert^{2N} \Vert \Psi'_{m, 1} \Vert^2 \;,\\ 
	\left\Vert (W_{\sF, 1} (\varphi_2) A^\dagger_{j_2}(v_2) \Psi'_{m, 2})^{(N)} \right\Vert^2 = &\frac{e^{- \Vert \varphi_2 \Vert^2}}{(N-1)!} \Vert \varphi_2 \Vert^{2(N-1)} \Vert v_2 \Vert^2 \Vert \Psi'_{m, 2} \Vert^2 \;.
\end{aligned} 
\end{equation}
These cannot agree for all $ N \in \NNN_0 $ simultaneously, as $ \frac{N!}{(N-1)!} \frac{\Vert \varphi_1 \Vert^{2N}}{\Vert \varphi_2 \Vert^{2(N-1)}} $ is never constant in $ N $.\\

Now assume (induction assumption), we have shown linear independence for any set containing $ \le K - 1 $ vectors $ \Psi_k $ of the kind \eqref{eq:Psikform} and consider any other set of $ K $ vectors of the kind \eqref{eq:Psikform}, where $ |\cK_{WA} | \ge 2 $ or $ | \cK_W | \ge 2 $, so we have at least two distinct form factors $ \varphi_k $. Suppose, there was a linear combination
\begin{equation}
	\Psi
	:= \sum_{k = 1}^K c_k \Psi_k
	= \sum_{k \in \cK_{WA}} c_k W_{\sF, 1}(\varphi_k) A^\dagger_{j_k}(v_k) \Psi'_{m, k}
	+ \sum_{k \in \cK_W} c_k W_{\sF, 1}(\varphi_k) \Psi'_{m, k}
	= 0 \;,
\label{eq:WWAlincomb}
\end{equation}
with $ c_k \in \CCC $ not being all zero. Without loss of generality, we may assume $ c_k \neq 0 $ for all $ k $.\\

Now, by premise of the lemma, each $ \varphi_k $ can appear at most twice in \eqref{eq:WWAlincomb}. We group equal form factors together by introducing the partition 
\begin{equation}
	\{1, \ldots, K\} = \bigcup_{z = 1}^Z \cK_z \;, \qquad \cK_z \cap \cK_{z'} = \emptyset \quad \text{for} \quad z \neq z' \;,
\end{equation}
such that $ \varphi_k = \varphi_{k'} $ if and only if $ k $ and $ k' $ belong to the same index set $ \cK_z $. In particular, $ K/2 \le Z \le K $ and $ |\cK_z| \le 2 $. Since there are at least two distinct $ \varphi_k $, we also have
\begin{equation}
	\max_{k} \Vert \varphi_k \Vert > 0 \;.
\end{equation}
We pick an index $ \overline{z} $, such that $ \Vert \varphi_{\overline{k}} \Vert $ attains this maximum for $ \overline{k} \in \cK_{\overline{z}} $. Regarding $ \cK_{\overline{z}} $, there are now two cases which can occur as follows.\\

\underline{Case 1}: $ \cK_{\overline{z}} \cap \cK_{WA} = \emptyset $. For $ z' \in \{1, \ldots, Z\} $ with the unique associated form factor $ \varphi_{k'}, k' \in \cK_{z'} $, consider the ``$ (N) $-sector trial state''
\begin{equation}
	\Phi_{z'}^{(N)}(\bK) = \prod_{\ell = 1}^N \varphi_{k'}(\bk_{\ell}) \;, \qquad \Phi_{z'}^{(N)} \in L^2(\RRR^{Nd}) \;.
\label{eq:trialstate1}
\end{equation}
We choose a fixed representative function for $ L^2 $-elements, and use the following abbreviations
\begin{equation}
\begin{aligned}
	\Psi''_k := e^{-\frac{\Vert \varphi_k \Vert^2}{2}} c_k \Psi'_{m, k} \;, \quad
	w_{k, \bP}(\bk) := v_k(\bk) \Psi''_k(\bP + (e_{j_k} - e_1) \bk) \;, \quad
	&\xi_{z'z} := \langle \varphi_{k'}, \varphi_k \rangle \;,
\end{aligned} 
\label{eq:Psiprimeprime}
\end{equation}
where $ w_{k, \bP} $ is only defined for $ k \in \cK_{WA} $. Integrating $ \Phi_{z'}^{(N)} $ against the $ (N) $-boson sector of $ \sum_{k \in \cK_z} \Psi_k $ (which contains at most two terms), we now obtain for almost all\footnote{There is no guarantee that $ w_{k, \bP} \in L^2 $, so integrals involving $ w_{k, \bP} $ might be ill-defined. However, since $ \Psi, v \in L^2 $, we have $ w_{k, \bP} \notin L^2 $ only for $ \bP $ within some null set in $ \cQ_x $, so \eqref{eq:xiequality} holds almost everywhere.} $ \bP \in \cQ_x $:
\begin{equation}
\begin{aligned}
	&\int \Phi_{z'}^{(N)}(\bK)^* \left( \sum_{k \in \cK_z} c_k \Psi_k \left(\bP - e_1 \sum_{\ell = 1}^N \bk_{\ell}, \bK \right) \right) \; d \bK\\
	= &\sum_{k \in \cK_z \cap \cK_{WA}} (N!)^{-1/2} N \langle \varphi_{k'}, \varphi_k \rangle^{N-1} \langle \varphi_{k'}, w_{k, \bP} \rangle\\
	&+ \sum_{k \in \cK_z \cap \cK_W} (N!)^{-1/2} \langle \varphi_{k'}, \varphi_k \rangle^N \Psi''_k(\bP)\\
	= & N (N!)^{-1/2} \xi_{z' z}^{N-1} \widetilde{\Psi}_{z' z, 1}(\bP)
	+ (N!)^{-1/2} \xi_{z' z}^{N-1} \widetilde{\Psi}_{z' z, 2}(\bP) \;,
\end{aligned}
\label{eq:xiequality}
\end{equation} 
where $ \widetilde{\Psi}_{z' z, 1}, \widetilde{\Psi}_{z' z, 2} $ do not depend on $ N $ and are given by
\begin{equation}
	\widetilde{\Psi}_{z' z, 1}(\bP) := \sum_{k \in \cK_z \cap \cK_{WA}} \langle \varphi_{k'}, w_{k, \bP} \rangle \;, \qquad 
	\widetilde{\Psi}_{z' z, 2}(\bP) := \sum_{k \in \cK_z \cap \cK_W} \xi_{z' z} \Psi''_k(\bP) \;.
\label{eq:widetildePsi}
\end{equation}
Note that each of the sums over $ \cK_z \cap \cK_{WA} $ or $ \cK_z \cap \cK_W $ contain at most one term, so the $ k $ associated with $ z $ in \eqref{eq:widetildePsi} is unique. Now, consider the ratio
\begin{equation}
	R := \esssup_{\bP \in \cQ_x \setminus \cN_x} \frac{ \left\vert \widetilde{\Psi}_{\overline{z} \overline{z}, 2}(\bP) \right\vert}{ \sum_{z \neq \overline{z}} \left\vert \widetilde{\Psi}_{\overline{z} z, 1}(\bP) \right\vert + \sum_{z \neq \overline{z}} \left\vert  \widetilde{\Psi}_{\overline{z} z, 2}(\bP) \right\vert } \ge 0 \;,
\label{eq:R}
\end{equation}
where $ \cN_x $ denotes the set of all $ \bP $, where the above fraction amounts to $ \frac{0}{0} $. Since $ \cK_{\overline{z}} \neq \emptyset $, the set $ \cK_{\overline{z}} \cap \cK_W $ contains exactly one element $ \overline{k} $, with $ \Psi''_{\overline{k}}(\bP) \neq 0 $ on a set of positive measure in $ \bP \in \cQ_x $. And as $ \varphi_{\overline{k}} \neq 0 $, we have $ \widetilde{\Psi}_{\overline{z} \overline{z}, 2}(\bP) \neq 0 $ for those $ \bP $, so $ R > 0 $. Hence, the set
\begin{equation}
	U = \left\{ \bP \in \cQ_x \; \middle\vert \; \left\vert \widetilde{\Psi}_{\overline{z} \overline{z}, 2}(\bP) \right\vert
	> \frac{R}{2} \left(  \sum_{z \neq \overline{z}} \left\vert \widetilde{\Psi}_{\overline{z} z, 1}(\bP) \right\vert + \sum_{z \neq \overline{z}} \left\vert  \widetilde{\Psi}_{\overline{z} z, 2}(\bP) \right\vert \right) \text{ and \eqref{eq:xiequality} holds} \right\} \;,
\label{eq:U}
\end{equation}
has positive measure. Further, as $ \Vert \varphi_{\overline{k}} \Vert $ is maximal and $ \varphi_k \neq \varphi_{\overline{k}} $ for $ k \neq \overline{k} $, we conclude that
\begin{equation}
	\langle \varphi_{\overline{k}}, \varphi_{\overline{k}} \rangle > | \langle \varphi_{\overline{k}}, \varphi_k \rangle | \quad \Leftrightarrow \quad
	|\xi_{\overline{z} \overline{z}}| > |\xi_{\overline{z} z}| \qquad \text{for} \quad z \neq \overline{z} \;.
\label{eq:ximax}
\end{equation}
Hence, there is an $ N \in \NNN $ with
\begin{equation}
	| \xi_{\overline{z} \overline{z}}^{N-1} | > \frac{4}{R} | N \xi_{\overline{z} z}^{N-1} | \quad \text{for} \quad z \neq \overline{z} \;.
\label{eq:Requation}
\end{equation}
Now, for this $ N $ and $ \bP \in U $, the term $ \widetilde{\Psi}_{\overline{z} \overline{z}, 2} $ is dominant in the following sense:
\begin{equation}
\begin{aligned}
	&\left\vert \int \Phi_{\overline{k}}^{(N)}(\bK)^* \left( \sum_{k = 1}^K c_k \Psi_k \left(\bP - e_1 \sum_{\ell = 1}^N \bk_{\ell}, \bK \right) \right) \; d \bK \right\vert\\
	\overset{\eqref{eq:xiequality}}{=} & (N!)^{-1/2} \left\vert \sum_{z = 1}^Z N  \xi_{\overline{z} z}^{N-1} \widetilde{\Psi}_{\overline{z} z, 1}(\bP)
	+ \sum_{z = 1}^Z \xi_{\overline{z} z}^{N-1} \widetilde{\Psi}_{\overline{z} z, 2}(\bP) \right\vert\\
	\ge & (N!)^{-1/2} \left( \left\vert \xi_{\overline{z} \overline{z}}^{N-1} \widetilde{\Psi}_{\overline{z} \overline{z}, 2}(\bP) \right\vert
	- \sum_{z \neq \overline{z}} \left\vert N \xi_{\overline{z} z}^{N-1} \widetilde{\Psi}_{\overline{z} z, 1}(\bP) \right\vert - \sum_{z \neq \overline{z}}  \left\vert\xi_{\overline{z} z}^{N-1} \widetilde{\Psi}_{\overline{z} z, 2}(\bP) \right\vert \right)\\
	\overset{\eqref{eq:Requation}}{\ge} & (N!)^{-1/2} |\xi_{\overline{z} \overline{z}}|^{N-1} \left( \left\vert \widetilde{\Psi}_{\overline{z} \overline{z}, 2}(\bP) \right\vert
	- \frac{R}{4} \left( \sum_{z \neq \overline{z}} \left\vert \widetilde{\Psi}_{\overline{z} z, 1}(\bP) \right\vert + \sum_{z \neq \overline{z}} \left\vert  \widetilde{\Psi}_{\overline{z} z, 2}(\bP) \right\vert \right) \right)\\
	\overset{\eqref{eq:U}}{\ge} & (N!)^{-1/2} |\xi_{\overline{z} \overline{z}}|^{N-1} \left( \left\vert \widetilde{\Psi}_{\overline{z} \overline{z}, 2}(\bP) \right\vert
	- \frac{1}{2} \left\vert \widetilde{\Psi}_{\overline{z} \overline{z}, 2}(\bP) \right\vert \right)
	> 0 \;.
\end{aligned}
\label{eq:Psiineq1}
\end{equation}
So the $ L^2 $-function $ \bK \mapsto \Psi \left(\bP - e_1 \sum_{\ell = 1}^N \bk_{\ell}, \bK \right) $ cannot be zero for $ \bP \in U $. Since $ U $ has positive measure, we conclude $ \Psi \neq 0 $, which contradicts \eqref{eq:WWAlincomb}.\\

\underline{Case 2}: $ \cK_{\overline{z}} \cap \cK_{WA} \neq \emptyset $. Since $ \cK_{\overline{z}} \cap \cK_{WA} $ contains at most one element, we can set $ \cK_{\overline{z}} \cap \cK_{WA} =: \{ \overline{k} \} $. We copy the notations \eqref{eq:Psiprimeprime} and \eqref{eq:widetildePsi} from Case 1 and choose as a trial state for $ N \ge 1, k' \in \cK_{z'} \cap \cK_{WA} $ and $ \bP \in \cQ_x $:
\begin{equation}
	\Xi_{z', \bP}^{(N)}(\bK) = w_{k', \bP}(\bk_1) \left( \prod_{\ell = 2}^N \varphi_{k'}(\bk_{\ell}) \right) \;.
\label{eq:trialstate2}
\end{equation}
For almost all $ \bP \in \cQ_x $, we then obtain
\begin{equation}
\begin{aligned}
	&\int \Xi_{z', \bP}^{(N)}(\bK)^* \left( \sum_{k \in \cK_z} c_k \Psi_k \left(\bP - e_1 \sum_{\ell = 1}^N \bk_{\ell}, \bK \right) \right) \; d \bK\\
	= &\sum_{k \in \cK_z \cap \cK_{WA}} (N!)^{-1/2} \langle \varphi_{k'}, \varphi_k \rangle^{N-1} \langle w_{k', \bP}, w_{k, \bP} \rangle\\
	&+ \sum_{k \in \cK_z \cap \cK_{WA}} (N!)^{-1/2} (N-1) \langle \varphi_{k'}, \varphi_k \rangle^{N-2} \langle w_{k', \bP}, \varphi_k \rangle \langle \varphi_{k'}, w_{k, \bP} \rangle\\
	&+ \sum_{k \in \cK_z \cap \cK_W} (N!)^{-1/2} \langle \varphi_{k'}, \varphi_k \rangle^{N-1} \langle w_{k', \bP}, \varphi_k \rangle \Psi''_k(\bP)\\
	= &(N!)^{-1/2} \left( (N-1) \xi_{z' z}^{N-2} \widetilde{\Psi}'_{z' z, 1}(\bP)
	+ \xi_{z' z}^{N-2} \widetilde{\Psi}'_{z' z, 2}(\bP) \right) \;,
\end{aligned}
\label{eq:xiequality2}
\end{equation} 
with
\begin{equation}
\begin{aligned}
	\widetilde{\Psi}'_{z' z, 1}(\bP) := &\sum_{k \in \cK_z \cap \cK_{WA}} \langle w_{k', \bP}, \varphi_k \rangle \langle \varphi_{k'}, w_{k, \bP} \rangle \;,\\
	\widetilde{\Psi}'_{z' z, 2}(\bP) := &\sum_{k \in \cK_z \cap \cK_{WA}}  \xi_{z' z} \langle w_{k', \bP}, w_{k, \bP} \rangle
	+ \sum_{k \in \cK_z \cap \cK_W} \xi_{z' z} \langle w_{k', \bP}, \varphi_k \rangle \Psi''_k(\bP) \;.
\end{aligned}
\label{eq:widetildePsi2}
\end{equation}
First, consider the sub-case where $ \langle w_{\overline{k}, \bP}, \varphi_{\overline{k}} \rangle \neq 0 $ holds for all $ \bP $ inside some subset of $ \cQ_x $ with positive measure. So $ \widetilde{\Psi}'_{\overline{z} \overline{z}, 1}(\bP) \neq 0 $ for those $ \bP $. In this case, the term with $ \widetilde{\Psi}'_{\overline{z} \overline{z}, 1} $ will be dominant. Here, the ratio
\begin{equation}
	R' := \esssup_{\bP \in \cQ_x \setminus \cN_x'} \frac{ \left\vert \widetilde{\Psi}'_{\overline{z} \overline{z}, 1}(\bP) \right\vert}{ \sum_{z \neq \overline{z}} \left\vert \widetilde{\Psi}'_{\overline{z} z, 1}(\bP) \right\vert + \sum_{z = 1}^Z \left\vert \widetilde{\Psi}'_{\overline{z} z, 2}(\bP) \right\vert } \;,
\label{eq:Rprime}
\end{equation}
is strictly positive. Here, $ \cN_x' $ denotes the set of all $ \bP $ with the above fraction amounting to $ \frac{0}{0} $. Therefore, the set
\begin{equation}
	U' = \left\{ \bP \in \cQ_x \; \middle\vert \; \left\vert \widetilde{\Psi}'_{\overline{z} \overline{z}, 1}(\bP) \right\vert
	> \frac{R'}{2} \left( \sum_{z \neq \overline{z}} \left\vert \widetilde{\Psi}'_{\overline{z} z, 1}(\bP) \right\vert + \sum_{z = 1}^Z \left\vert \widetilde{\Psi}'_{\overline{z} z, 2}(\bP) \right\vert \right) \text{ and \eqref{eq:xiequality2} holds} \right\} \;,
\label{eq:Uprime}
\end{equation}
has positive measure. In this case, we can find some $ N \in \NNN $, such that
\begin{equation}
	(N-1) > \frac{4}{R'} \quad \text{and} \quad | \xi_{\overline{z} \overline{z}}^{N-2} | > \frac{4}{R'} | \xi_{\overline{z} z}^{N-2} | \quad \text{for} \quad z \neq \overline{z} \;.
\label{eq:Rprimeequation}
\end{equation}
Hence, for this $ N $ and $ \bP \in U $, we obtain:
\begin{equation}
\begin{aligned}
	&\left\vert \int \Xi_{\overline{z}, \bP}^{(N)}(\bK)^* \left( \sum_{k = 1}^K c_k \Psi_k \left( \bP - e_1 \sum_{\ell = 1}^N \bk_{\ell}, \bK \right) \right) \; d \bK \right\vert\\
	\overset{\eqref{eq:xiequality2}}{=}& (N!)^{-1/2} \left\vert \sum_{z = 1}^Z (N-1) \xi_{\overline{z} z}^{N-2} \widetilde{\Psi}'_{\overline{z} z, 1}(\bP)
	+ \sum_{z = 1}^Z \xi_{\overline{z} z}^{N-2} \widetilde{\Psi}'_{\overline{z} z, 2}(\bP) \right\vert\\
	\ge & (N!)^{-1/2} \left( \left\vert (N-1) \xi_{\overline{z} \overline{z}}^{N-2} \widetilde{\Psi}'_{\overline{z} \overline{z}, 1}(\bP) \right\vert
	- \sum_{z \neq \overline{z}} \left\vert (N-1) \xi_{\overline{z} z}^{N-2} \widetilde{\Psi}'_{\overline{z} z, 1}(\bP) \right\vert - \sum_{z = 1}^Z \left\vert \xi_{\overline{z} z}^{N-2} \widetilde{\Psi}'_{\overline{z} z, 2}(\bP) \right\vert \right)\\
	\overset{\eqref{eq:Rprimeequation}}{\ge} & (N-1) (N!)^{-1/2} |\xi_{\overline{z} \overline{z}}|^{N-2} \left( \left\vert \widetilde{\Psi}'_{\overline{z} \overline{z}, 1}(\bP) \right\vert
	- \frac{R'}{4} \left( \sum_{z \neq \overline{z}} \left\vert \widetilde{\Psi}'_{\overline{z} z, 1}(\bP) \right\vert + \sum_{z = 1}^Z \left\vert \widetilde{\Psi}'_{\overline{z} z, 2}(\bP) \right\vert \right) \right)\\
	\overset{\eqref{eq:Uprime}}{\ge} & (N-1) (N!)^{-1/2} |\xi_{\overline{z} \overline{z}}|^{N-2} \left( \left\vert \widetilde{\Psi}'_{\overline{z} \overline{z}, 1}(\bP) \right\vert
	- \frac{1}{2} \left\vert \widetilde{\Psi}'_{\overline{z} \overline{z}, 1}(\bP) \right\vert \right)
	> 0 \;.
\end{aligned}
\end{equation}
By the same argument as that below \eqref{eq:Psiineq1}, we conclude $ \Psi \neq 0 $ which contradicts \eqref{eq:WWAlincomb}.\\

It remains to establish a contradiction in the sub-case where $ \langle w_{\overline{k}, \bP}, \varphi_{\overline{k}} \rangle = 0 $ for almost all $ \bP \in \cQ_x $. In that case, the dominant term is no longer $ \widetilde{\Psi}'_{\overline{z} \overline{z}, 1}(\bP) $ (which is zero almost everywhere), but instead
\begin{equation}
	\widetilde{\Psi}'_{\overline{z} \overline{z}, 2}(\bP) = \xi_{\overline{z} \overline{z}} \langle w_{\overline{k}, \bP}, w_{\overline{k}, \bP} \rangle \;.
\label{eq:widetildePsi3}
\end{equation}
Since $ \Psi'_{m, \overline{k}} \neq 0 $, we have $ w_{\overline{k}, \bP} \neq 0 $ for all $ \bP $ inside some subset of $ \cQ_x $ with positive measure, which implies $ \widetilde{\Psi}'_{\overline{z} \overline{z}, 2}(\bP) \neq 0 $ for these $ \bP $. So we are in a similar situation as in Case 1 meaning that $ \widetilde{\Psi}'_{\overline{z} \overline{z}, 1} $ is zero almost everywhere while $ \widetilde{\Psi}'_{\overline{z} \overline{z}, 2} $ is not. We may thus copy the definitions of $ R $ and $ U $ and employ the same arguments as in \eqref{eq:ximax}--\eqref{eq:Psiineq1}, replacing $ N $ by $ N - 1 $ in the required positions. This yields the desired contradiction.\\

\end{proof}

\section{Proof of Lemma \ref{lem:AE}}
\label{sec:AE}

\begin{proof}
We evaluate the pullbacks of $ A(v) $ and $ E_\infty $, keeping in mind that by Lemma \ref{lem:W1equivalence}, $ \Psi = W(s) W_{\sF, 1}(\varphi) \Psi_m = W(s) W_1(\varphi) \Psi_m $.\\

\begin{itemize}
\item $ A(v) $: The commutator $ [W(s) W_1(\varphi),A(v)] $ is evaluated using the extended commutation relations established in Lemma \ref{lem:WAextended}
\begin{equation}
\begin{aligned}
	A(v) \Psi &= A(v) W(s) W_1(\varphi) \Psi_m\\
	&= [A(v), W(s) W_1(\varphi)] \Psi_m + W(s) W_1(\varphi) \underbrace{A(v) \Psi_m}_{=0}\\
	&= \big( M \langle v,s \rangle + \langle v,\varphi \rangle + V_{\bullet 1}(v^* \varphi) + V(v^* s) \big) \Psi \;.
\end{aligned}
\label{eq:APsi}
\end{equation}

\item $ E_\infty $: The self-energy operator \eqref{eq:E} now exactly removes the term $ M \langle v,s \rangle $, so we have
\begin{equation}
	(A(v) - E_\infty) \Psi = \big( \langle v,\varphi \rangle + V_{\bullet 1}(v^* \varphi) + V(v^* s) \big) \Psi \;.
\label{eq:AEinftyPsi}
\end{equation}

\end{itemize}

Now, by Definition \ref{def:WjsWjsX}, we may commute $ \langle v,\varphi \rangle, V $ and $ V_{\bullet 1} $ with $ W(s) $ yielding
\begin{equation}
\begin{aligned}
	W^{-1}(s) \big( A(v) - E_\infty \big) W(s) W_1(\varphi) \Psi_m &= \big( \langle v,\varphi \rangle + V_{\bullet 1}(v^* \varphi) + V(v^* s) \big) W_1(\varphi) \Psi_m\\
	&= \left( \res_1(\varphi) + V \right) W_1(\varphi) \Psi_m \quad \in \FB_{\ex} \;.
\end{aligned}
\end{equation}
Now, Lemma \ref{lem:W1equivalence} allows replacing $ W_1 $ by $ W_{\sF, 1} $, which renders the desired result.\\
\end{proof}

\paragraph{Remarks.}

\begin{enumerate}
\setcounter{enumi}{\theremarks}
\item If the form factor $ v $ depended on the fermion momentum $ \bp $, then $ s = -\frac{v}{\omega} $ would also depend on $ \bp $. In that case, $ W^{-1}(s) V W(s) = V $ will no longer hold true in general, as the function $ s(\bk,\bp) $ in the dressing operator $ W(s) $ then depends on $ \bp $ and fermion momenta are changed by $ V $. If $ (A(v) - E_\infty)W(s) W_{\sF, 1}(\varphi) \Psi_m $ is then expressed in terms of dressed coherent states, multiple commutators appear.\\
\end{enumerate}
\setcounter{remarks}{\theenumi}

\section{Proof of Lemma \ref{lem:H0A}}
\label{sec:H0A}

\begin{proof}
Again, we may use Lemmas \ref{lem:WjsWjs} and \ref{lem:W1equivalence}, to write
\begin{equation*}
	\Psi = W(s) W_{\sF, 1}(\varphi) \Psi_m = W(s) W_1(\varphi) \Psi_m \in \FB \;.
\end{equation*}
Following Proposition \ref{prop:operatorextension} we further have $ H_{0, y} \Psi, A^{\dagger}(v) \Psi \in \FB $. We first evaluate $ (H_{0, y} + A^\dagger(v)) W(s) W_1(\varphi) \Psi_m $ in momentum space and then use Definition \ref{def:WjsWjsAdagger} to pull $ W(s) $ to the left.

\begin{itemize}
\item $ H_{0, y} $: The important point is that the expression $ (H_{0, y} + A^{\dagger}(v)) W(s) $ cancels terms in $ W(s) H_{0, y} $. We therefore investigate the commutator expression $ [H_{0, y}, W(s)] W_1(\varphi) \Psi_m $ and compare it to $ A^\dagger(v) W(s) W_1(\varphi) \Psi_m $.\\

First, we note that the application of $ H_{0, y} $ to a dressed state just changes one single photon dispersion relation. Therefore, it is equivalent to applying a creation operator to the dressed state
\begin{equation}
\begin{aligned}
	&(H_{0, y} W_1(\varphi)\Psi_m)(\bP,\bK)\\
	= &\frac{e^{-\frac{\Vert \varphi\Vert^2}{2}}}{\sqrt{N!}} \left( \sum_{\ell' = 1}^N \omega(\bk_{\ell'}) \right) \left( \prod_{\ell = 1}^N \varphi(\bk_{\ell}) \right) \Psi_{mx} \left( \bP' \right)\\
	= &\frac{e^{-\frac{\Vert \varphi\Vert^2}{2}}}{\sqrt{N!}} \sum_{\ell = 1}^N  \omega(\bk_{\ell}) \varphi(\bk_{\ell}) \left( \prod_{\ell' \neq \ell}^N \varphi(\bk_{\ell'}) \right) \Psi_{mx} \left( \bP' \right)\\
	= &(A_1^\dagger(\omega \varphi) W_1(\varphi)\Psi_m)(\bP,\bK) \;,\\
\end{aligned}
\label{eq:H0W}
\end{equation}
with $ \bP' = \big( \bp_1 + \sum_{\ell' = 1}^N \bk_{\ell'}, \bp_2, \ldots, \bp_M \big) $. Replacing the dressing $ W_1(\varphi) $ by $ W(s) W_1(\varphi) $ and applying the commutation relations from Definitions \ref{def:WjsWjsAdagger} and \ref{def:WjsWjsX}, we obtain
\begin{equation}
\begin{aligned}
	&H_{0, y} W(s) W_1(\varphi) \Psi_m\\
	= &(A^\dagger(\omega s) + A_1^\dagger(\omega \varphi)) W(s) W_1(\varphi) \Psi_m\\
	= &-A^\dagger(v) W(s) W_1(\varphi) \Psi_m\\
	 &+ W(s) W_1(\varphi) \big(A_1^\dagger(\omega \varphi) - \langle v, \varphi \rangle - V_{\bullet 1}(v^* \varphi) + \langle \varphi, \omega \varphi \rangle \big) \Psi_m \;.
\end{aligned}
\label{eq:H0WW}
\end{equation}
Here, we used $ s^* \omega = \omega s = -v $ and Definition \ref{def:WjsWjsX}, which allows us to pull the formal scalar products and the $ V_{\bullet 1} $-term past the dressing operators.\\
By means of \eqref{eq:H0W} and the commutation relations from Definitions \ref{def:WjsWjsAdagger} and \ref{def:WjsWjsX}, we also have
\begin{equation}
\begin{aligned}
	&W(s) H_{0, y} W_1(\varphi) \Psi_m\\
	= &W(s) W_1(\varphi) A_1^\dagger(\omega \varphi) \Psi_m
	+ W(s) W_1(\varphi) \langle \varphi, \omega \varphi \rangle \Psi_m \;.
\end{aligned}
\label{eq:WH0W}
\end{equation}
Combining \eqref{eq:H0WW} and \eqref{eq:WH0W}, we finally obtain
\begin{equation}
\begin{aligned}
	&[H_{0, y}, W(s)] W_1(\varphi) \Psi_m\\
	= &-A^\dagger(v) W(s) W_1(\varphi) \Psi_m - W(s) W_1(\varphi) \big( \langle v, \varphi \rangle + V_{\bullet 1}(v^* \varphi) \big) \Psi_m \;.
\end{aligned}
\label{eq:H0Wcommutator}
\end{equation}

\item $ A^\dagger(v) $: Here, we do not need to perform any calculations. The appearing term simply cancels the $ -A^\dagger(v) W(s) W_1(\varphi) \Psi_m $ from \eqref{eq:H0Wcommutator}.

\end{itemize}

Now, adding both terms and using Definition \ref{def:WjsWjsX}, \eqref{eq:H0Wcommutator} amounts to
\begin{equation}
\begin{aligned}
	&\big( [H_{0, y}, W(s)] + A^\dagger(v)  W(s) \big) W_1(\varphi) \Psi_m\\
	= &W(s) W_1(\varphi) \big( - \langle v, \varphi \rangle - V_{\bullet 1}(v^* \varphi) \big) \Psi_m\\
	= &W(s) \big( - \res_1(\varphi) \big) W_1(\varphi) \Psi_m \;.\\
\end{aligned}
\end{equation}
This is equivalent to
\begin{equation}
	\big( H_{0, y} + A^\dagger(v) \big) W(s)  W_1(\varphi) \Psi_m = W(s) \big( H_{0, y} - \res_1(\varphi) \big) W_1(\varphi) \Psi_m \;.
\end{equation}
Lemma \ref{lem:W1equivalence} allows again to replace $ W_1 $ by $ W_{\sF, 1} $, which yields the final result.\\

\end{proof}

\end{appendices}

\bigskip



\noindent\textit{Acknowledgments.}
The idea of introducing a Fock space extension arose within numerous discussions about mathematical QFT. Important information and/or relevant literature was pointed out by Stefan Teufel, Jonas Lampart, Gianluca Panati, Gian Michele Graf, Jürg Fröhlich, Volker Bach, Wojciech Dybalski, Micha\l\; Wrochna, Kasia Rejzner, Holger Gies, Jan Mandrysch, Joscha Henheik, Martin Oelker, as well as some further members of the LQP2 network and at Tübingen. In particular, I am grateful to Roderich Tumulka for valuable suggestions and discussions concerning the Extended state space construction.\\
This work was financially supported by the Wilhelm Schuler--Stiftung T\"ubingen, by DAAD (Deutscher Akademischer Austauschdienst), by the Basque Government through the BERC 2018--2021 program and by the Ministry of Science, Innovation and Universities: BCAM Severo Ochoa accreditation SEV--2017--0718, as well as by the European Union (ERC FermiMath, grant agreement nr. 101040991). Views and opinions expressed are however those of the author(s) only and do not necessarily reflect those of the European Union or the European Research Council Executive Agency. Neither the European Union nor the granting authority can be held responsible for them. Moreover, it was partially supported by Gruppo Nazionale per la Fisica Matematica in Italy.

\end{document}